\renewcommand{\phi}{\varphi}
\newcommand{\myqed}{}
\newtheorem{lm}{Lemma}
\newtheorem{t1}{Theorem}
\DeclareMathOperator*{\wt}{wt}
\DeclareMathOperator*{\wtphi}{wt_{\varphi}}
\DeclareMathOperator*{\supp}{supp}
\DeclareMathOperator*{\br}{Br}
\newcommand{\ineff}[4]{\Delta_#1(#2,#3;#4)}
\newcommand\ineff*[3]{\Delta_#1(#2;#3)}
\newcommand{\ief}[3]{\Delta(#1,#2;#3)}
\newcommand\ief*[2]{\Delta(#1;#2)}
\newcommand{\diste}{\delta}
\newcommand{\distephi}{\delta_{\phi}}
\newcommand{\dist}{d }
\newcommand{\distphi}{d_{\phi}}
\newcommand{\Disp}{D}
\newcommand{\Dispphi}{D_{\phi}}
\newcommand{\cv}{v_c} 
\newcommand{\centerineff}{C}
\newcommand{\keywords}[1]{\par\noindent\textbf{Keywords: }#1}
\begin{document}
\title{Computing Similarity Distances Between Rankings\thanks{This work was supported in part by the NSF STC Class 2010 CCF 0939370 grant. Research of the third author is supported by the IC Postdoctoral Research Fellowship. Farzad Farnoud is now at the Department of Electrical Engineering at the California Institute of Technology.}}
\author[1]{Farzad Farnoud}
\author[1]{Lili Su}
\author[2]{Gregory J.~Puleo}
\author[1]{Olgica Milenkovic}
\affil[1]{Department of Electrical and Computer Engineering\\ University of Illinois at Urbana-Champaign\\ Urbana, IL, USA}
\affil[2]{Coordinated Science Lab\\ University of Illinois at Urbana-Champaign\\ Urbana, IL, USA}
\maketitle

\begin{abstract}
  We address the problem of computing distances between rankings that
  take into account similarities between candidates. The need for
  evaluating such distances is governed by applications as diverse as
  rank aggregation, bioinformatics, social sciences and data
  storage. The problem may be summarized as follows: Given two
  rankings and a positive cost function on transpositions that depends
  on the similarity of the candidates involved, find a smallest cost
  sequence of transpositions that converts one ranking into
  another. Our focus is on costs that may be described via special
  metric-tree structures and on complete rankings modeled as
  permutations. The presented results include a linear-time algorithm
  for finding a minimum cost decomposition for simple cycles and a
  linear-time $4/3$-approximation algorithm for permutations that
  contain multiple cycles.  The proposed methods rely on investigating
  a newly introduced balancing property of cycles embedded in trees,
  cycle-merging methods, and shortest path optimization techniques.
  \keywords{Permutations, rankings, similarity distance, transposition
    distance}
\end{abstract}

\section{Introduction}
\label{intro}

Meta-search engines, recommender platforms, social data aggregation
centers as well as many other data processing systems are centered
around the task of ranking distinguishable objects according to some
predefined criteria~\cite{Ailon08,Renda03,Ricci01}. Rankings are
frequently provided by different experts or search engines, and
generated according to different ordering approaches. To perform
comparative studies of such rankings or to aggregate them, one needs
to be able to assess how much they agree or disagree. This is most
easily accomplished by assuming that data is given in the form of
complete rankings -- i.e., permutations -- and that one ranking may be
chosen as a reference sample (identity). In this case, the problem of
evaluating agreements between permutations essentially reduces to the
problem of sorting permutations.

The problem of sorting distinct elements according to a given set of
criteria has a long history and has been studied in mathematics,
computer science, and social choice theory
alike~\cite{Goulden04,Hofri95,Lint01}. One volume of the classical
text in computer science -- Knuth's \emph{The Art of Computer
  Programming} -- is almost entirely devoted to the study of
sorting. The solution to the problem is well known when the sorting
steps are swaps (transpositions) of two elements: In this case, it is
convenient to first perform a cycle decomposition of the permutation
and then swap elements in the same cycle until all cycles have unit
length.

Sorting problems naturally introduce the need for studying \emph{distances} between permutations. There are many different forms of distance functions on permutations, with the two most frequently used being the Cayley distance and the Kendall distance~\cite{diaconis1977spearman}. 
Although many generalizations of the Cayley, Kendall and other
distances are known~\cite{kumar2010generalized}, only a handful of
results pertain to distances in which one assigns positive weights or
random costs\footnote{Throughout the paper, we use the words cost and
  weight interchangeably, depending on the context of the exposition.}
to the basic rearrangement
steps~\cite{angelov2008sorting,Kapah,Farnoud12}. Most such work has
been performed in connection with genome rearrangement and fragile DNA
breakage studies~\cite{Bafna98,fertin2009combinatorics} and for the
purpose of gene prioritization~\cite{tranchevent2008endeavour}. Some
other examples appear in the social sciences literature~(see
references in \cite{Farnoud12}), pertaining to constrained vote
aggregation and logistics~\cite{Huth09}.

\begin{figure}
\begin{center}
\includegraphics[width=4.3in]{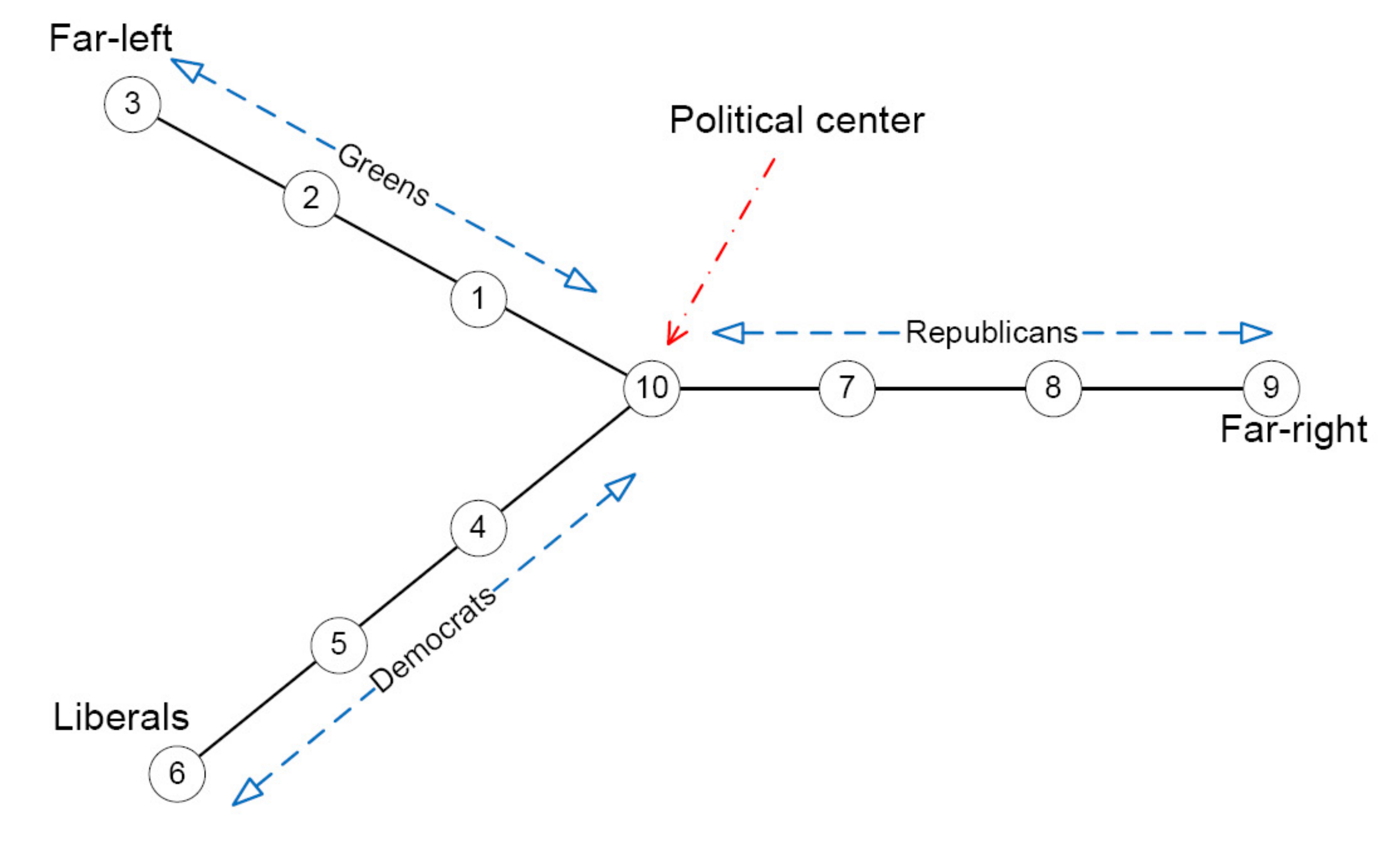}
\caption{A ``Y-tree'' governing the cost of transposing political candidates: Swapping candidates closer in their political beliefs introduces a smaller change in the overall distance between rankings as compared to swapping candidates from different parties.}
\label{fig:parties}
\end{center}
\end{figure}

A number of practical problems call for positive costs (weights) on
transpositions, and costs that capture some constraint on the
structure of the transpositions. The problem at hand may then be
described as follows: For a given set of positive costs assigned to
transpositions of distinct elements, find a smallest cost sequence of
transpositions converting a given permutation to the identity.

In our subsequent analysis, we focus on constraints that take into account that candidates may be similar and that transposing similar candidates should induce a smaller cost than transposing dissimilar candidates. We refer to the underlying family of distance measures as \emph{similarity distances}. The similarity distance is not to be confused with the distance used in~\cite{sculley2007rank}, where the goal was to \emph{rank similar items} close to each other in an aggregated list.

To illustrate the practical utility of the notion of similarity, we present next a number of illustrative examples.

The first example comes from social choice theory. When ranking politicians and assessing the opinion dynamics of voters, one often needs to take into account that the candidates come from different parties. Swapping candidates from the same party may be perceived as having a smaller impact on the overall diversity of the ranking or outcome of an election than doing otherwise. As an example, consider the following three rankings of politicians,
with party affiliations indicated by the letters $D$ and $R$:
\[
\begin{split}
\pi_1=&(\text{Clinton, Obama, Bush, Kerry, Romney}) \rightarrow (D,D,R,D,R),\\
\pi_2=&(\text{Obama, Clinton, Bush, Kerry, Romney}) \rightarrow (D,D,R,D,R), \\
\pi_3=&(\text{Clinton, Bush, Obama, Kerry, Romney}) \rightarrow (D,R,D,D,R).
\end{split}
\]
Notice that $\pi_2$ and $\pi_3$ differ from $\pi_1$ only in one transposition. In the first case, the swap involves two members of the same party, while in the second case, the transposed candidates belong to two different parties.
It would hence be reasonable to assume that the distance between $\pi_1$ and $\pi_2$ is smaller than the distance between $\pi_1$ and $\pi_3$, because in the latter case, the permutations have different overall ordering of the parties\footnote{Clearly, one could also argue that changes at the top of the list are more relevant than changes at the bottom, in which case the comment about the pairwise distances should be reversed. An overview of \emph{positional distances} may be found in~\cite{Farnoud12}, and the related work on generalized \emph{Borda counts} in~\cite{marchant1996valued,young1988condorcet} and references therein.}.

To capture this similarity, candidates may be arranged into a
tree-structure with each edge having a certain weight, so that the
transposition cost of two candidates equals the weight of the unique
path between them. An illustrative example involving three parties is
shown in Fig.~\ref{fig:parties}, where the tree has only one vertex of
degree greater than $2$, corresponding to the political
center. Republicans, Democrats and Greens are all arranged on
different branches of the tree, and in order of their proximity to the
political center. Note that two Republicans are generally closer in
the tree compared to a Republican and a Democratic candidate, implying
that transpositions involving Republicans are, on average, less costly
than those involving candidates of two different parties.

Another application of metric-tree weight distances is in assignment aggregation and rank aggregation~\cite{Ailon08,dwork2001rank,Fagin03,touri2012nonuniform,Liu07,Pihur09}. In the former case, a committee of $m$ members is tasked with distributing $n$ jobs to $n$ candidates. Each committee member provides her suggestion for a full assignment of candidates to jobs. The goal is to aggregate the assignments given by individual committee members into one assignment. If a measure of similarity between the candidates is available, one can use the similarity to aggregate the assignments by finding the best compromise in terms of swapping candidates of similar qualifications, age, gender, working hour preferences, etc. This is achieved by computing the median of the rankings under a suitable similarity condition, such as the metric-path cost~\cite{Farnoud12}. Note that even in this case, the candidates may be arranged into a star-like tree structure reminiscent of Fig.~\ref{fig:parties}.

The third application, and the one that has received the most
attention in the areas of computer science and search engines, is
related to overcoming biases of search
engines~\cite{Ailon08,dwork2001rank,ZW}. As an example, when trying to
identify the links most closely associated with a query, many
different search engines can be utilized, including Google, Yahoo!,
Ask, Bing, IBM, etc. One may argue that the most objective, and hence
least biased, rankings are produced by aggregating the rankings of the
different search engines. Many search queries are performed with the
goal of identifying as many \emph{diverse possibilities} on the first
page or the first two listed pages. Such problem also motivate the
goal of identifying similarities on trees, as many search items may be
naturally arranged in a tree structure. Simulation results proving
that the use of page similarities may lead to more diverse solutions
can be found in our companion paper~\cite{Farnoud12}.

Similarity distances may also be used as valuable tools in gene
prioritization studies. Gene prioritization is a method for
identifying disease-related genes based on diverse information
provided by linkage studies, sequence structure, gene ontology and
other procedures~\cite{endeavor}. Since testing candidate genes is
experimentally costly, one is often required to prioritize the list by
arranging the genes in descending order of likelihood for being
involved in the disease. Different prioritization methods produce
different lists, and similarity of the lists carries information about
which genes may be of importance under different selection
criteria. In addition, since genes are usually clustered into family
trees according to some notion of similarity, finding lists that
prioritize genes while at the same time ensuring that all families of
genes are tested is of great importance.

To conclude this brief motivation, we also point out that similar
star-like trees arise in phylogeny, where each branch contains species
that evolved from each other, and where similarity is captured at the
genomic sequence level~\cite{fertin2009combinatorics}. A higher-level
tree of life itself may be represented by a star-like tree by ignoring
``side-branches'' of certain organisms.

The contributions of this work are three-fold. First, we introduce a
Y-tree (i.e., a tree with at most one node of degree three) cost
function and a notion of similarity between permutations associated
with this special tree structure. In this setting, the cost of
transposing two elements equals the weight of the shortest path in a
Y-tree. Our focus on Y-trees is largely motivated by the fact that the
general tree analysis appears to be prohibitively complex; at the same
time, Y-trees represent computationally manageable and sufficiently
accurate approximations for many tree similarity models. Second, we
describe an exact linear time decomposition algorithm for cycle
permutations with Y-tree costs.  Third, we develop a linear time,
4/3-approximation method for computing the similarity distance between
arbitrary permutations.

The paper is organized as follows. Section \ref{sec:pre} introduces the notation and definitions used throughout the paper. Section \ref{sec:path} contains a brief review of prior work as well as some relevant results used in subsequent derivations. This section also presents a linear time algorithm for computing the Y-tree similarity between cycle permutations. This algorithm is extended in Section \ref{sec:GP} to general permutations via cycle-merging strategies that provide linear time, constant-approximation guarantees. Section 5 contains the concluding remarks.

\section{Mathematical Preliminaries}
\label{sec:pre}
For a given ground set $[n]\triangleq \{1, 2, \ldots, n\}$, a permutation $\pi:[n]\to[n]$ is a bijection on and onto $[n]$. The collection of all permutations on $[n]$ -- the symmetric group of order $n!$ -- is denoted by $S_n$. 

There are several ways to represent a permutation. The two-line
representation has the domain written out in the first line and the
corresponding image in the second line. For example, the following
permutation is given in two-line form:
\[
\pi =
\begin{pmatrix}
1 & 2 & 3 & 4 & 5 & 6\\
6 & 1 & 2 & 5 & 4 & 3
\end{pmatrix}.
\]
The one-line representation is more succinct as it only utilizes the second row of the two-line representation; the above permutation in one-line format reads as $(6, 1, 2, 5, 4, 3)$. The symbol $e$ is reserved for the identity permutation $(1, 2,\ldots, n)$.

Sometimes, we find it useful to describe a permutation in terms of
elements and their images: hence, a third description of the
aforementioned permutation is $\pi(1)=6,\, \pi(2)=1,\, \pi(3)=2,\,
\pi(4)=5,\, \pi(5)=4,$ and $\pi(6)=3$. A straightforward
interpretation of these expressions is that $\pi(i)$ represents the
element placed in position $i$. We also define the inverse of a
permutation $\pi$, $\pi^{-1}$, in which $\pi^{-1}(i)$ describes the
position of element $i$. With this notation at hand, the product of
two permutations $\pi, \sigma\in S_n$, $\mu=\pi \, \sigma$, can be
defined by $\mu(i)=\pi(\sigma(i))$, for all $i\in [n]$. The
\emph{support} of a permutation $\pi \in S_n$, written $\supp(\pi)$, is the
set of all $i \in [n]$ with $\pi(i) \neq i$. We write $|\pi|$ to refer to
$|\supp(\pi)|$.

Permutations may be used in a natural way to describe rankings over any set of distinct elements $P$ of cardinality $n$ by imposing an ordering on the set of elements. As an illustration, for the set $P=\{$Clinton, Bush, Obama, Kerry, Romney$\}$, one may order the names lexicographically as 
\[(\text{Bush, Clinton, Kerry, Obama, Romney})\] 
and subsequently assign numerical values to the elements according to this ordering as Bush=1, Clinton=2, etc. Hence, the ranking $(\text{Kerry, Obama, Clinton, Romney, Bush})$ corresponds to a permutation that reads as $(3,4,2,5,1)$.

For $k>1$, a \emph{k-cycle}, denoted by $\kappa=(i_{1}\,\ldots\, i_{k})$, is a permutation that acts on $[n]$ in the following way\footnote{This is not to be confused with the one line representation using commas between entries.}:
\[
i_{1} \to i_{2} \to  \ldots  \to i_{k}  \to i_{1},
\]
where $x\to y$ denotes $y=\kappa(x)$. In other words,
$\kappa=(i_{1}\,\ldots\, i_{k})$ cyclically shifts elements in the
permutation confined to the set $\{i_1,\ldots,i_k\}$ and keeps all
other elements fixed.  A cycle of length $2$ is called a
\textit{transposition}, and is denoted by $(a\,b)$.

In general, for $a, b\in [n]$, $\pi(a\,b)\not=(a\,b)\pi$, because
$\pi(a\,b)$ corresponds to swapping elements of $\pi$ in positions $a$
and $b$ while $(a\,b)\pi$ corresponds to swapping elements $a$ and $b$
in $\pi$. For instance, $(6, 1, 2, 5, 4, 3)(2\,3)=(6, 2, 1, 5, 4, 3)$,
while $(2\,3)(6, 1, 2, 5, 4, 3)=(6, 1, 3, 5, 4, 2)$. Note that in the
former example, we used $\pi(a\,b)$ to denote the product of a
permutation and a transposition.

Two cycles are said to be \textit{disjoint} if the intersection of their supports is empty; furthermore, two cycles are termed to be \textit{adjacent} 
if they have exactly one common element in their supports. Although non-disjoint cycles are sporadically mentioned in the combinatorial literature, their use is extremely limited due to the fact that disjoint cycles offer simpler means to study problems on permutations. In particular, the concept of adjacent cycles was, to the best of the authors' knowledge, not previously used for analyzing sorting algorithms.
 
A permutation can be uniquely decomposed into a product of disjoint cycles, often referred to as the \textit{cycle decomposition} or the \textit{cycle representation}. For example, the cycle decomposition of the permutation $(6,\,1,\,2,\,5,\,4,\,3)$ 
equals $(1\,6\,3\,2)(4\,5)$, where one can freely choose the order in which to multiply $(1\,6\,3\,2)$ and $(4\,5)$. We note that a 
cycle may also be written as a product of shorter cycles comprising a combination of disjoint and adjacent cycles. We term the result of this procedure an \textit{adjacent cycle decomposition}. One significant difference between the two aforementioned cycle decompositions is that in an adjacent cycle decompositions, the order of multiplication matters (i.e., the product is non-commutative); $(1\,6\,3\,2)$ equals $(2\,1\,6)(3\,6)$, but not $(3\,6)(2\,1\,6)$. As opposed to the disjoint cycle decomposition which is unique, there may exist multiple adjacent cycle decompositions of a given permutation.

The \textit{functional digraph} of a function $f: [n]\to[n]$, denoted by $\mathcal{G}(f)$, is a directed graph with vertex set $[n]$ and arcs from $i$ to $f(i)$ for each $i\in [n]$. Arcs are subsequently denoted by $(i \to f(i))$. For a permutation $\pi$, $\mathcal{G}(\pi)$ is a collection of disjoint cycles; hence, the cycles of the permutation correspond to the cycles of its functional digraph.

Given any connected, undirected, edge-weighted graph $G$ on the vertex
set $[n]$ with positive edge weights, we can define a metric $\varphi$
by letting $\phi(a,b)$ be the minimum weight of an $a,b$-path in
$G$. If $\varphi$ can be defined from $G$ in this way, we say that
$\varphi$ is a \emph{graph metric} and that $G$ is a \emph{defining
  graph} for $\varphi$. Any metric on a finite set is a graph metric:
one can let $G$ be a complete graph where the weight of each edge
$(a,b)$ equals $\varphi(a,b)$. However, $\phi$ may have other, sparser
defining graphs. We will typically be interested in graph metrics with
a defining graph that falls into some special graph class. When $\phi$
is a metric on $[n]$, we also consider $\phi$ as giving weights to the
transpositions in $S_n$, where the transposition $(a\,b)$ has weight
$\phi(a,b)$.


The weight of an ordered sequence of transpositions is defined as the sum of the weights of its constituent elements. That is, the weight of the sequence of transpositions $T=(\tau_{1}, \ldots,\tau_{|T|})$ equals
\[
\wtphi(T)=\sum_{i=1}^{|T|} \varphi (\tau_{i})=\sum_{i=1}^{|T|} \varphi (a_i,b_i),
\]
where we used $\tau_i$ to denote the transposition $(a_i,b_i)$, and
$|T|$ to denote the number of transpositions in the sequence $T$.
When $\phi$ is understood (as will typically be the case throughout
this paper) we suppress the subscripts and simply write $\wt(T)$.  The
same convention is used for all other notation involving the subscript
$\phi$.

If $\sigma=\pi\tau_{1}\tau_{2}\ldots\tau_{|T|}$, we refer to
\(T=(\tau_{1}, \ldots,\tau_{|T|})\) as a \emph{transform}, converting
\(\pi\) into \(\sigma\). The set of all such transforms is denoted by
$A(\pi,\sigma)$. Clearly, $A(\pi,\sigma)$ is non-empty for any $\pi,
\sigma\in S_n$. A transform that converts $\pi$ into $e$, the identity
permutation, is a \emph{sorting} of $\pi$. On the other hand, a
\emph{decomposition} of $\pi$ is a sequence $T=(\tau_{1},
\ldots,\tau_{|T|})$ of transpositions such that
$\pi=\tau_{1}\tau_{2}\ldots\tau_{|T|}$. Note that the minimum weight
of a decomposition is the same as the minimum weight of a sorting as
one sequence is equal to the other in reverse order.

The $\varphi$-weighted transposition distance between $\pi$ and $\sigma$ is defined by
\[
\distphi (\pi, \sigma)=\min_{T\in A(\pi,\sigma)} \wtphi(T).
\]
Computing $\dist (\pi, \sigma)$ may be cast as a minimization problem over $A(\pi,\sigma)$, namely the problem of finding a \textit{minimum cost transform} $T^*\in A(\pi,\sigma)$ such that $\dist (\pi, \sigma)=\wt(T^*)$. 
If $\varphi(a,b)=1$ for all distinct $a$ and $b$, the weighted 
transposition distance reduces to the well-known Cayley distance.

It is easy to verify that for every positive weight function, the weighted transposition distance $\dist $ is a metric
and furthermore, left-invariant (i.e., $\dist (\pi,\sigma)=\dist (\omega \, \pi,\omega \, \sigma)$). Hence, we may set one
of the permutations (say, $\sigma$) to $e$, and write  
\[
\distephi(\pi)=\distphi (\pi, e)=\min_{T\in A(\pi,e)} \wtphi(T).
\]
We refer to the problem of computing $\diste (\pi)$ as the (weighted) \emph{decomposition problem}.

With respect to the choice of weight functions, we restrict our attention to the previously introduced family of graph metric weights, satisfying the triangle inequality
\[
\varphi (a,b) \le \varphi (a,c)+\varphi (c,b), \text{\,for\, all distinct\;} a, b, c\in [n].
\]
In particular, if we fix a tree-structured defining graph, the weight function $\varphi$ is termed a \emph{metric-tree} weight function. For such defining graphs, there clearly exists a unique minimum cost path between any two vertices, and for $a, b\in [n]$, $\varphi(a,b)$ is the sum of the weights of the edges on the unique path between $a$ and $b$ in $G$. If $G$ is a path (line graph), then $\varphi$ is called a \textit{metric-path} weight function. If there exists a unique vertex in a tree-structured $G$ of degree larger than or equal to three, the graph is called a \textit{metric-star}. The vertex with highest degree is referred to as the \textit{central vertex}. If the central vertex has degree three, the defining graph is called a \textit{Y-tree}. The corresponding metric is referred to as the Y-tree metric.
Examples of the aforementioned defining graphs are shown in Fig.~\ref{fig:mt}. 

\begin{figure}
\centering
\mbox{
\subfigure[A defining graph $G$ corresponding to a general metric-tree weight function $\varphi$. The edge labels correspond to their underlying weights. From the graph, one reads $\varphi{(1,7)}=\varphi{(1,3)}+\varphi{(3,4)}+\varphi{(4,7)}=2+3+4=9$.]{\includegraphics[width=6.5cm]{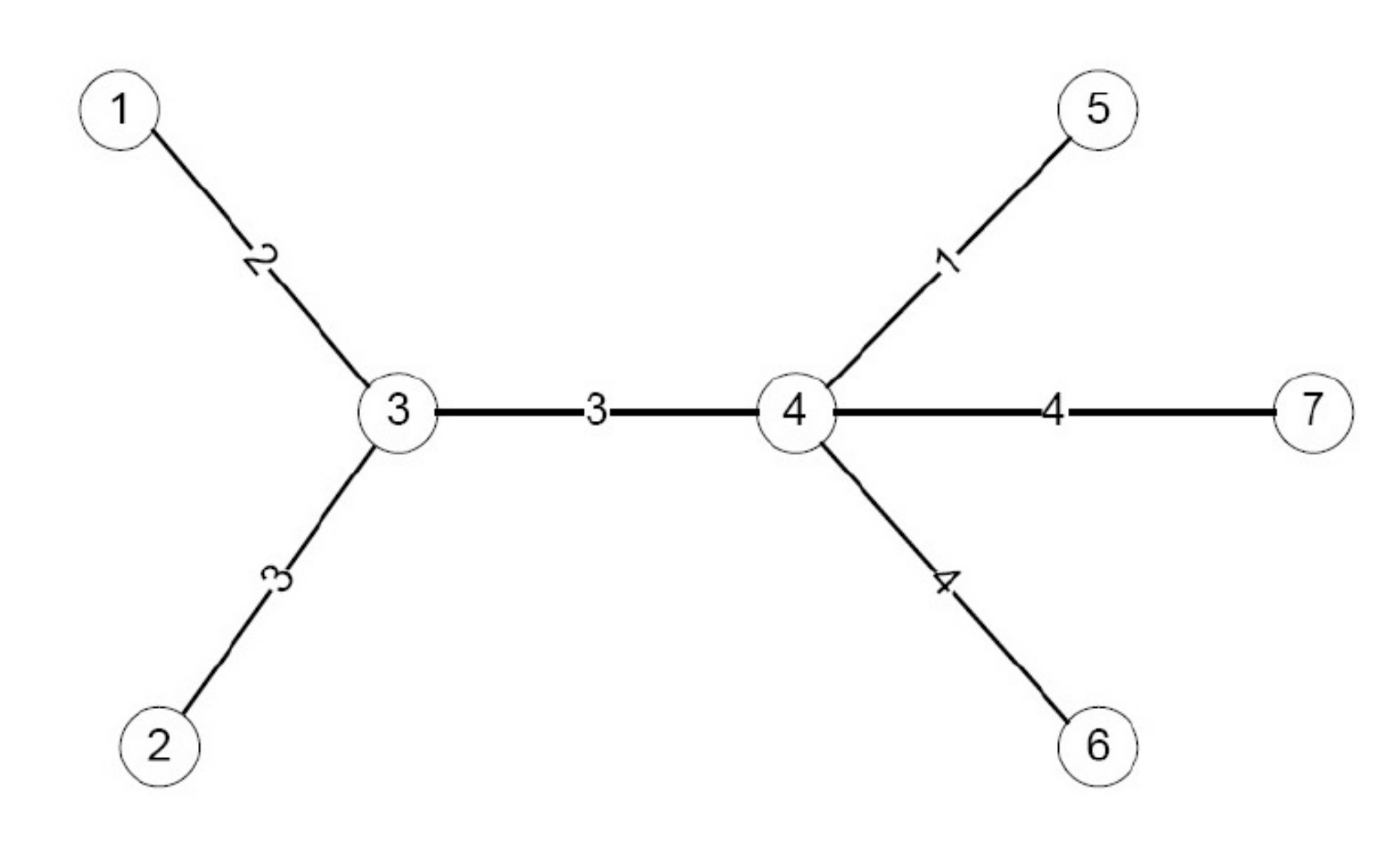}}\quad
}
\mbox{
\subfigure[A defining graph $G$ of a Y-tree with one edge cost equal to two, and all other edge costs equal to one. From the graph, one reads $\varphi{(1,7)}=\varphi{(1,8)}+\varphi{(8,6)}+\varphi{(6,7)}=1+1+1=3$.]
{\includegraphics[width=6.5cm]{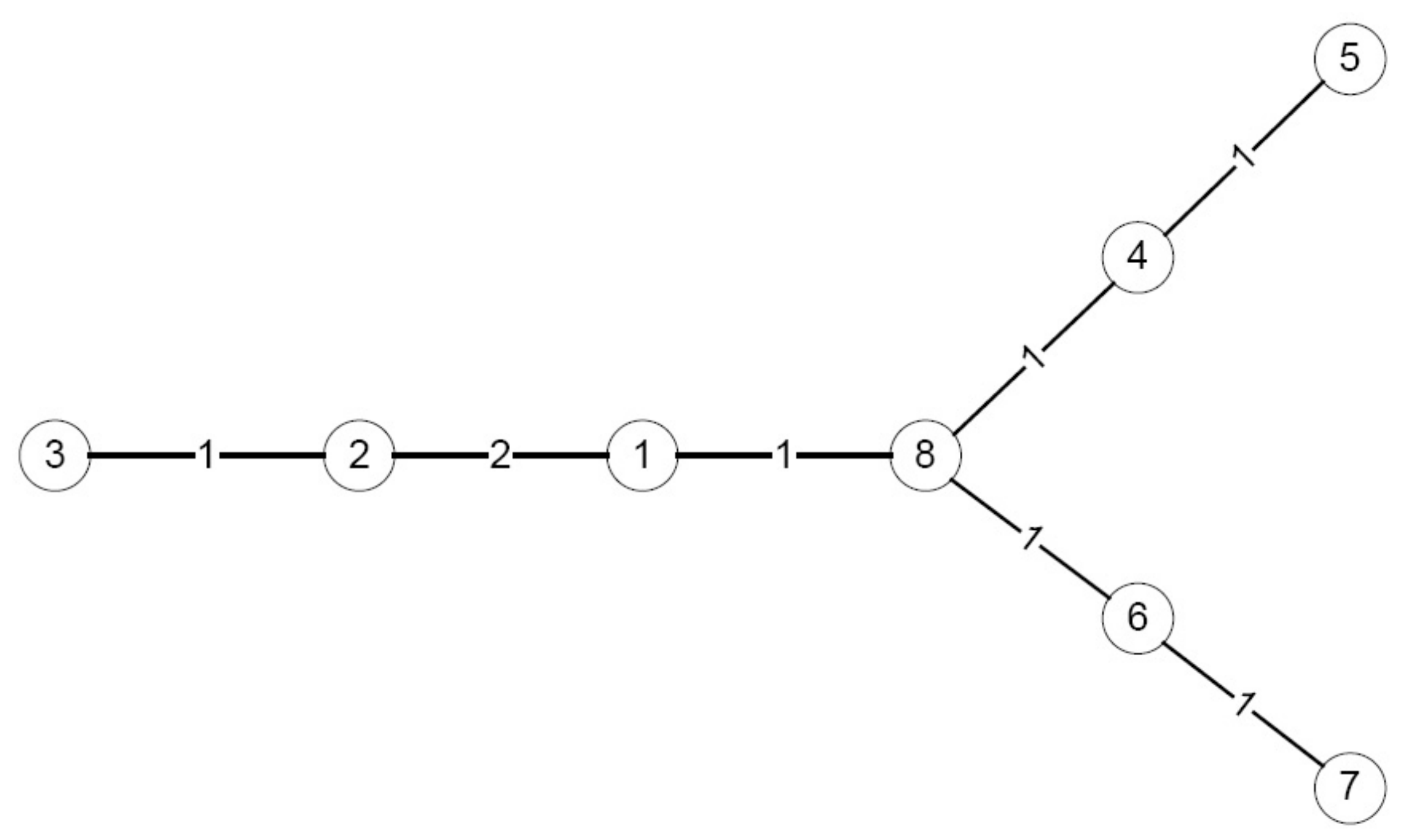}}\quad
}
\caption{Examples of defining graphs.}
\label{fig:mt}
\end{figure}

The problem of finding $\diste (\pi)$ when $\varphi$ is a metric-path
weight was studied by the authors in~\cite{Farnoud12}. The focus of
the results to follow is on determining $\diste (\pi)$ when the
defining graph is a Y-tree. The problem of evaluating $\diste (\pi)$
under a general metric-tree model appears difficult to handle by
methods proposed in this work and will hence not be discussed.

The following function, termed the \emph{displacement}, is of crucial importance in our analysis of similarity distances on Y-trees:
\[
 \Dispphi (\pi,\sigma)=\sum_{i=1}^n \, \varphi (\pi^{-1}(i),\sigma^{-1}(i)).
 \]
The displacement $ \Disp (\pi,\sigma)$ captures the overall cost of independently performing optimal transpositions of pairs of elements that are out of order.
It is again easy to verify that for every positive weight function, the displacement $\Disp (\pi,\sigma)$ is a metric and in addition, left-invariant (i.e., $\Disp (\pi, \sigma)=\Disp (\omega \, \pi, \omega \, \sigma)$, for all $\pi,\sigma,\omega \in S_n$). 
As a result, the notation and analysis may be simplified by assuming that $\sigma=e$ and by denoting the resulting displacement by $\Disp (\pi)$. 

The following properties of the displacement are easy to verify:
\begin{enumerate}
\item $\Disp (\pi)=0$ if and only if $\pi=e$.
\item $\Disp (\pi_1\, \pi_2) \leq \Disp (\pi_1)+\Disp (\pi_2),$ for all permutations $\pi_1$ and $\pi_2$.
\item $\Disp (\pi)=\Disp (\pi^{-1}),$ for all permutations $\pi$.
\end{enumerate}
Consequently, we may write
\[
 \Disp (\pi)=\sum_{i=1}^n \, \varphi (i,\pi(i)).
 \]

The main results of the paper are devoted to the study of decompositions of single cycles, as more general permutation decompositions may be obtained via individual cycle decompositions.

For ease of exposition, we draw the digraph of a permutation and the undirected defining Y-tree graph of the given weight function on the same vertex set, as shown in Fig.~\ref{fig:Yfunctional}. In this case, we say that the permutation is embedded in the defining graph. This graphical representation of both the cost function and the cycle decomposition of a permutation allows us to illustrate examples and gain intuition about the algorithms involved in the decomposition approach.
\begin{figure}
\begin{center}
\includegraphics[width=6.5cm]{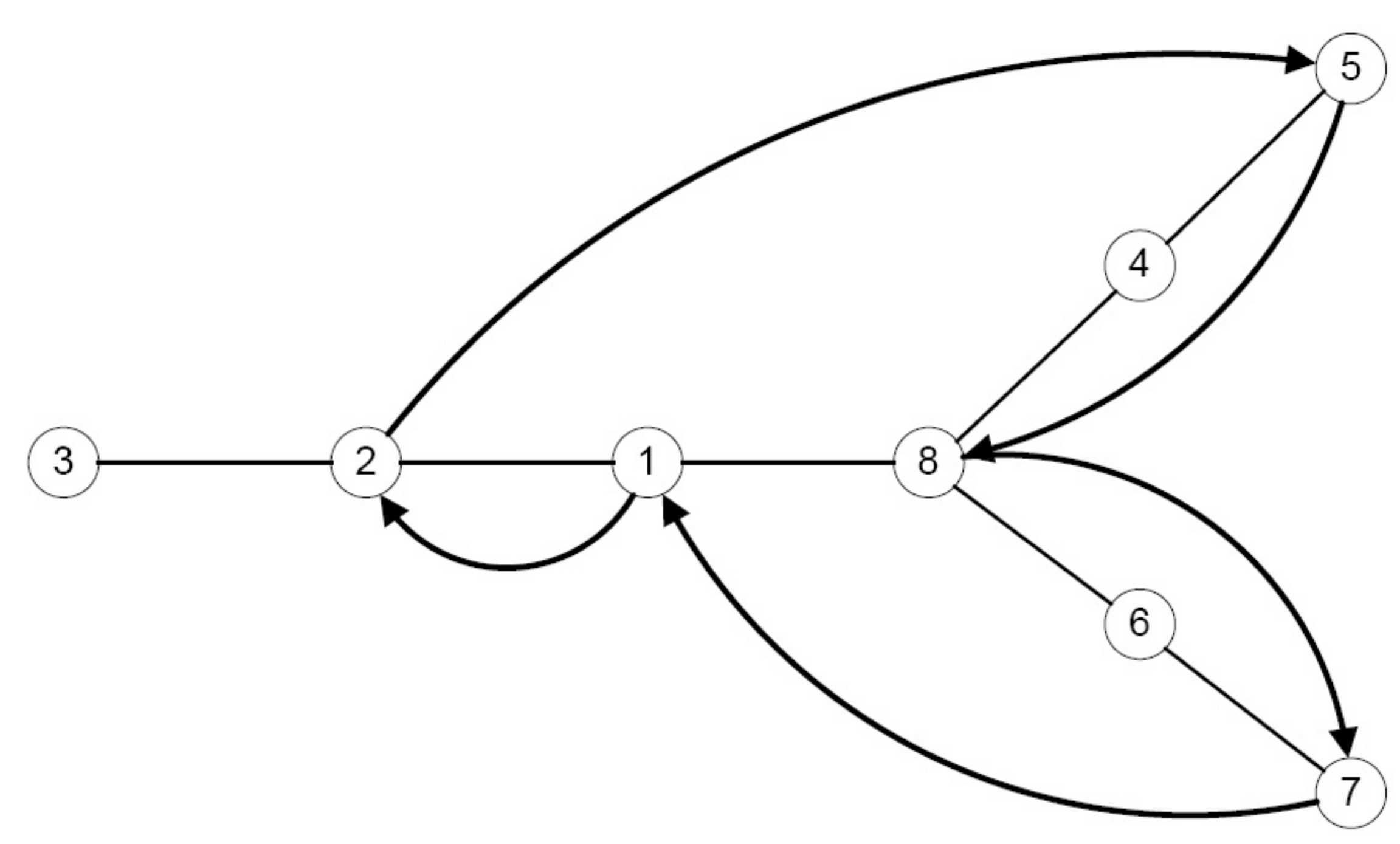}
\caption{Defining Y-tree and the cycle $(1\,2\,5\,8\,7)$. Thin lines represent the defining Y-tree $G$, while boldfaced arcs represent the digraph of the cycle permutation, $\mathcal{G}$.}
\label{fig:Yfunctional}
\end{center}
\end{figure}

Denote the branches of a Y-tree, which are sets of nodes on paths
starting from the central vertex and extending to a leaf, excluding
the central vertex, by $B_1$, $B_2$, and $B_3$. Furthermore, for ease
of exposition, denote the branch containing vertex $v$ by
$\br(v)$. Next, we formalize the notion of a cycle lying on a path on
the Y-tree as a cycle that has support contained in $B_i \bigcup B_j
\bigcup \{{\cv\}}$, for some not necessarily distinct $i,j$, and with
$\cv$ representing the central vertex. In other words, a cycle lies on
a path if its support is contained in the union of at most two of the
three branches and the central vertex. For a branch pair \((B_{i},
B_{j})\), $i \neq j$, let \(l^\pi_{ij}\) be the number of arcs from
\(B_{i}\) to \(B_{j}\) in $\pi$; similarly, let \(l^\pi_{ji}\) be the
number of arcs from \(B_{j}\) to \(B_{i}\) in $\pi$. If it is clear
from the context, the superscript $\pi$ will be omitted.

For a cycle permutation, we say that the branch pair $(B_i,B_j)$ is balanced if \(l_{ij}=l_{ji}\). If \(l_{ij}=l_{ji}\) for all $i,j \in \{1,2,3\}$, we say that the \textit{cycle is balanced}.

The \textit{inefficiency} of a transposition $\tau=(a\,b)$ with
respect to a permutation $\pi$ and for a given cost function
$\varphi$, denoted by $\ineff \varphi a b \pi$ and by $\ineff* \varphi \tau \pi$, is defined as
\[
\ineff \varphi a b \pi=2\varphi(a,b)-(\Disp (\pi)-\Disp (\pi(a\,b))).
\]

The intuition behind the notion of inefficiency comes from the observation that a transposition $(a \,b)$ can reduce the overall displacement by at most $2\varphi(a,b)$; the inefficiency measures the gap from the optimal reduction. Also, since $2\varphi(a,b)=\Disp ((a,b))$, it follows that the inefficiency is nonnegative. Henceforth, a transposition $(a\,b)$ is termed \emph{efficient} with respect to $\pi$ if $\ief a b \pi=0$ and \emph{inefficient} if $\ief a b \pi>0$.

The proposed algorithm for finding a minimum cost decomposition of a permutation under Y-tree weights consists of two steps:
\begin{enumerate}
\item First, we derive a closed form expression for the minimum cost of a decomposition of a single cycle and present an exact algorithm that can find the minimum cost decomposition $T^*$ in linear time. 
\item Second, for general permutations with multiple cycles, we develop a linear time, $4/3$-approximation algorithm that uses decompositions of single cycles. 
\end{enumerate}

\section{Similarity Distances on Y-trees: The Single Cycle Case}\label{sec:path}

The gist of the proposed approach for computing the similarity distance on a Y-tree is to decompose a cycle in such a way that all its components are supported on paths. Once such a decomposition is performed, we can invoke the results of our companion paper~\cite{Farnoud12}, which asserts that cycle decompositions for metric-path costs can be performed optimally in linear time. 
The key question is hence to determine if one can perform a decomposition of an arbitrary cycle into cycles that are supported on paths in an efficient manner, i.e., by only using efficient transpositions.
For this purpose, we find the following lemma that applies to \emph{general permutations} useful.

\begin{lm}\label{lem:metric-path}
Let $\varphi$ be a metric-tree weight function, and let \(\pi \) be a permutation. The minimum decomposition cost of $\pi$ is bounded below by one half of its displacement, i.e.,
\[
\diste (\pi)\ge\frac{1}{2} \Disp (\pi).
\]
The lower bound may be achieved for metric-path weight functions $\varphi$, for which
\begin{equation}
\diste (\pi)=\frac {1}{2}\Disp (\pi).\label{path-case}
\end{equation}
\end{lm}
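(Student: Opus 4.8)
The plan is to establish the two assertions separately. For the lower bound $\diste(\pi)\ge\frac12\Disp(\pi)$, the key observation is that a single transposition $\tau=(a\,b)$ changes the displacement by at most $\Disp((a\,b))=2\varphi(a,b)$; this is exactly the content of the inefficiency being nonnegative, noted just before the lemma. So if $T=(\tau_1,\dots,\tau_{|T|})$ is any decomposition of $\pi$, then telescoping the inequality $\Disp(\sigma)-\Disp(\sigma\tau_i)\le 2\varphi(\tau_i)$ along the chain from $\pi$ down to $e$ gives $\Disp(\pi)=\Disp(\pi)-\Disp(e)\le\sum_i 2\varphi(\tau_i)=2\wt(T)$. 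Taking the minimum over all decompositions $T$ yields $\Disp(\pi)\le 2\diste(\pi)$, which is the claim. (Equivalently one can invoke the triangle-inequality-style property $\Disp(\pi_1\pi_2)\le\Disp(\pi_1)+\Disp(\pi_2)$ inductively, writing $\pi$ as the product of the $\tau_i$ and bounding $\Disp(\pi)\le\sum_i\Disp(\tau_i)=\sum_i 2\varphi(\tau_i)$.) This half works for any metric-tree weight function — indeed for any positive weight function satisfying the triangle inequality — and requires no structural hypothesis.

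For the equality in the metric-path case, I would exhibit a decomposition achieving weight exactly $\frac12\Disp(\pi)$. Reducing first to a single cycle $\kappa=(i_1\,i_2\,\cdots\,i_k)$ embedded in the path (the general case follows by summing over the disjoint cycles of $\pi$, since $\Disp$ and the optimal cost are both additive over disjoint cycles), the idea is to repeatedly "peel off" transpositions that are efficient, i.e., that reduce the displacement by the full $2\varphi$. Concretely, order the support of the cycle along the path and repeatedly apply a transposition of two elements that are "adjacent" in the cyclic order in a way that moves at least one of them strictly toward its target while not overshooting any others — on a line one can always find such a move because the path metric is additive along the line, so $\varphi(a,c)=\varphi(a,b)+\varphi(b,c)$ whenever $b$ lies between $a$ and $c$. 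Each such transposition is efficient, so the total cost telescopes to $\frac12\Disp(\kappa)$. This is precisely the linear-time metric-path decomposition result of the companion paper~\cite{Farnoud12}, which the excerpt explicitly permits us to cite; so in the write-up I would either quote that theorem directly or sketch the peeling argument.

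The main obstacle is the second part: making the "always find an efficient transposition on a path" step rigorous. One must argue that for any non-identity permutation embedded in a path there exists a transposition $(a\,b)$ with $\ief a b \pi=0$ that also makes genuine progress (so the process terminates), and then that iterating such transpositions exhausts the displacement exactly. The cleanest route is to lean on the cited companion-paper result rather than reprove it; if a self-contained argument is wanted, the induction should be on $\Disp(\pi)$ (or on $|\pi|$), choosing $a$ to be an element that is farthest from its image along the path and $b$ an appropriately chosen neighbor so that betweenness gives additivity of $\varphi$ and hence efficiency. The lower-bound direction, by contrast, is essentially immediate from the nonnegativity of inefficiency and needs no cleverness.
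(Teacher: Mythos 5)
Your proposal is correct and takes essentially the same route as the paper: the lower bound follows from subadditivity of the displacement (equivalently, nonnegativity of the inefficiency), and the metric-path equality is obtained by induction on the support, peeling off efficient transpositions along the path --- exactly what the paper delegates to the companion reference \cite{Farnoud12} and realizes algorithmically as Algorithm~\ref{alg:ptd}. You actually spell out the telescoping argument for the lower bound more explicitly than the paper, whose own ``proof'' is just the citation plus a one-line remark.
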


The proof of the previous lemma can be found in our companion paper~\cite{Farnoud12}, with the latter claim following by induction on the number of elements in the support of the permutation $\pi$. 

An algorithm which describes how to find a minimum cost decomposition $T^*$ in this case can be easily devised using the idea behind the proof, and is presented next. 

Without loss of generality, label the vertices in the defining path from left to right as $1, 2, \cdots, n,$ and suppose that we are decomposing a single cycle $\pi=(v_1\, v_2\, \ldots\, v_{|\pi|})$, with $v_1=\min\; \supp(\pi)$. If this is not the case, rewrite $\pi$ by cyclically shifting its elements. Let $v_t=\min_{i\in \supp(\pi)} \{i: i\not=v_1\}$. With this notation at hand, the steps of the decomposition procedure are listed in Algorithm~\ref{alg:ptd}.

\begin{algorithm}
\caption{path-td}
\label{alg:ptd}
\SetKwFunction{concat}{Concatenate}
\SetKwFunction{this}{path-td}
\tcc{Transposition decomposition of cycles for metric-path weights with defining path given by $1,\dotsc,n$}
\KwIn{A cycle $\pi=(v_1\, v_2\, \ldots\, v_{|\pi|})$, with $|\pi|\ge2$ and $v_1=\min \supp(\pi)$}
\KwOut{A minimum cost decomposition $T=(\tau_1,\dotsc, \tau_{|T|}$) of $\pi$, so that $\pi = \tau_1\dotsm\tau_{|T|}$}
\BlankLine
\lIf{$|\pi|=2$}{\Return $(\pi)$}
$v_t\gets\min \left(\supp(\pi)\backslash \{v_1\}\right)$\;
\eIf{$v_t\not= v_{|\pi|}$}
{
  $\pi_1\gets(v_{1}\,v_{2}\,\cdots\, v_{t})$\;
  $\pi_2\gets(v_{t}\,v_{t+1}\,\cdots\, v_{|\pi|})$
  \tcc*[r]{$\pi=\pi_1\pi_2$}
  \Return\concat(\this($\pi_1$), \this($\pi_2$))\;
}
{
  $\pi_1\gets(v_2\,v_3\,\cdots\, v_{|\pi|})$\;
  $\tau\gets(v_1\,v_{|\pi|})$
  \tcc*[r]{$\pi=\pi_1\tau$}
  \Return \concat(\this($\pi_1$),$\tau$)\;
}
\end{algorithm}

At each call of Algorithm \ref{alg:ptd}, the cycle $\pi$ is rewritten as one of two possible cycle products, depending on whether $v_t=v_{|\supp(\pi)|}$ holds or not. Intuitively, the decomposition breaks cycles using vertices closest to each other, which clearly minimizes the total cost of the transpositions involved. An example of such a decomposition is shown in Fig.~\ref{fig:path}.
\begin{figure}
\centering
\mbox{
\subfigure[We have $v_t=2$ and $v_{|\pi|}=3$, so that $v_t\not=v_{|\pi|}$. Hence, the cycle $(1\,4\,2\,6\,5\,3)$ is decomposed as $(1\,4\,2\,6\,5\,3)=\pi_1\,\pi_2=(1\,4\,2)(2\,6\,5\,3)$. In this step of the decomposition, the arc $(3 \to 1)$ is replaced by two arcs, each belonging to one of the cycles. The two resulting cycles are represented with solid and dashed arcs, respectively.]{\includegraphics[width=10cm]{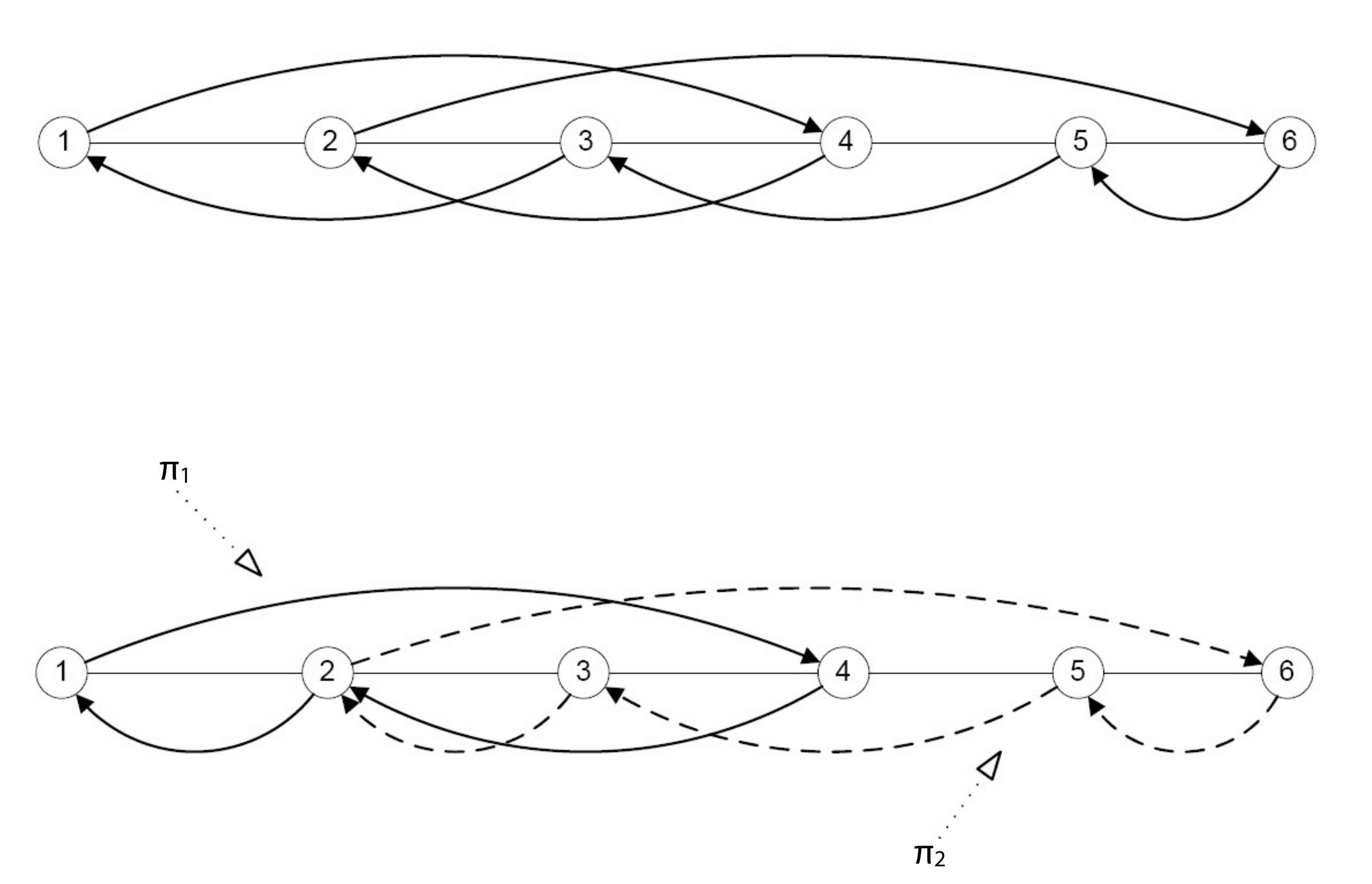}}\quad
}
\mbox{
\subfigure[We have $v_t=v_{|\pi|}=3$. Hence, the cycle $(1\,4\,6\,5\,3)$ is decomposed as $(1\,4\,6\,5\,3)=\pi_1\,\pi_2=(3\,4\,6\,5)(1\,3)$. In this step of the decomposition, the arc $(1 \to 4)$ is replaced by two arcs each belonging to one of the component cycles. The resulting two cycles are represented with solid and dashed arcs, respectively.]{\includegraphics[width=10cm]{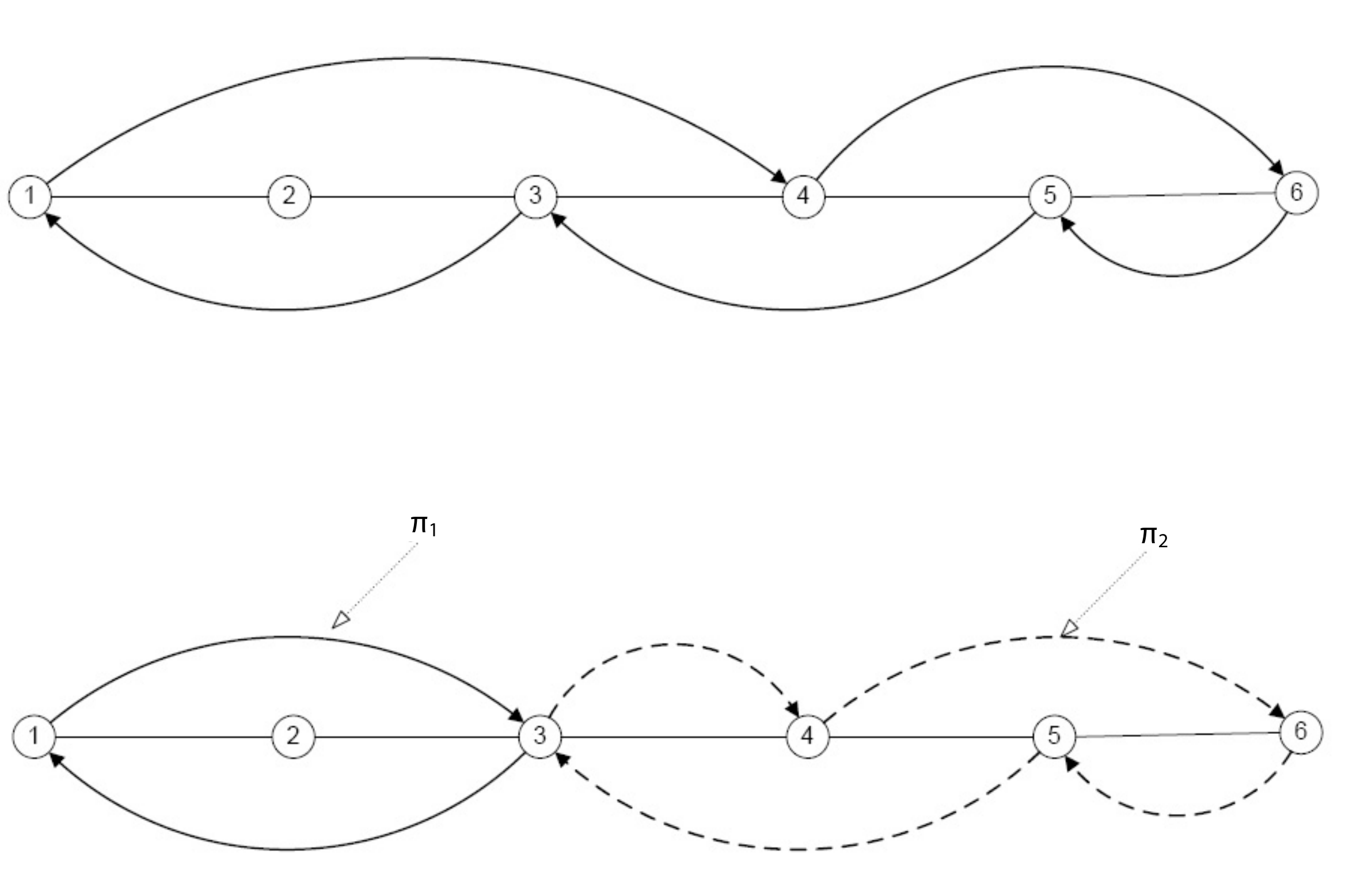}}\quad
}
\caption{Two different decomposition cases encountered in Algorithm~\ref{alg:ptd}, based on whether $v_t=v_{|\pi|}$ holds or not.}
\label{fig:path}
\end{figure}

As illustrated by the cycle in Fig.~\ref{fig:example}, this approach cannot be generalized for Y-tree weight functions. In the example, the total displacement $\Disp ((1\,2\,3))$ equals $6$, while via exhaustive search one can show that
\[\diste ((1\,2\,3))=4\not=\frac{1}{2}\Disp ((1\,2\,3)).\]
\begin{figure}
\begin{center}
\includegraphics [width=5cm]{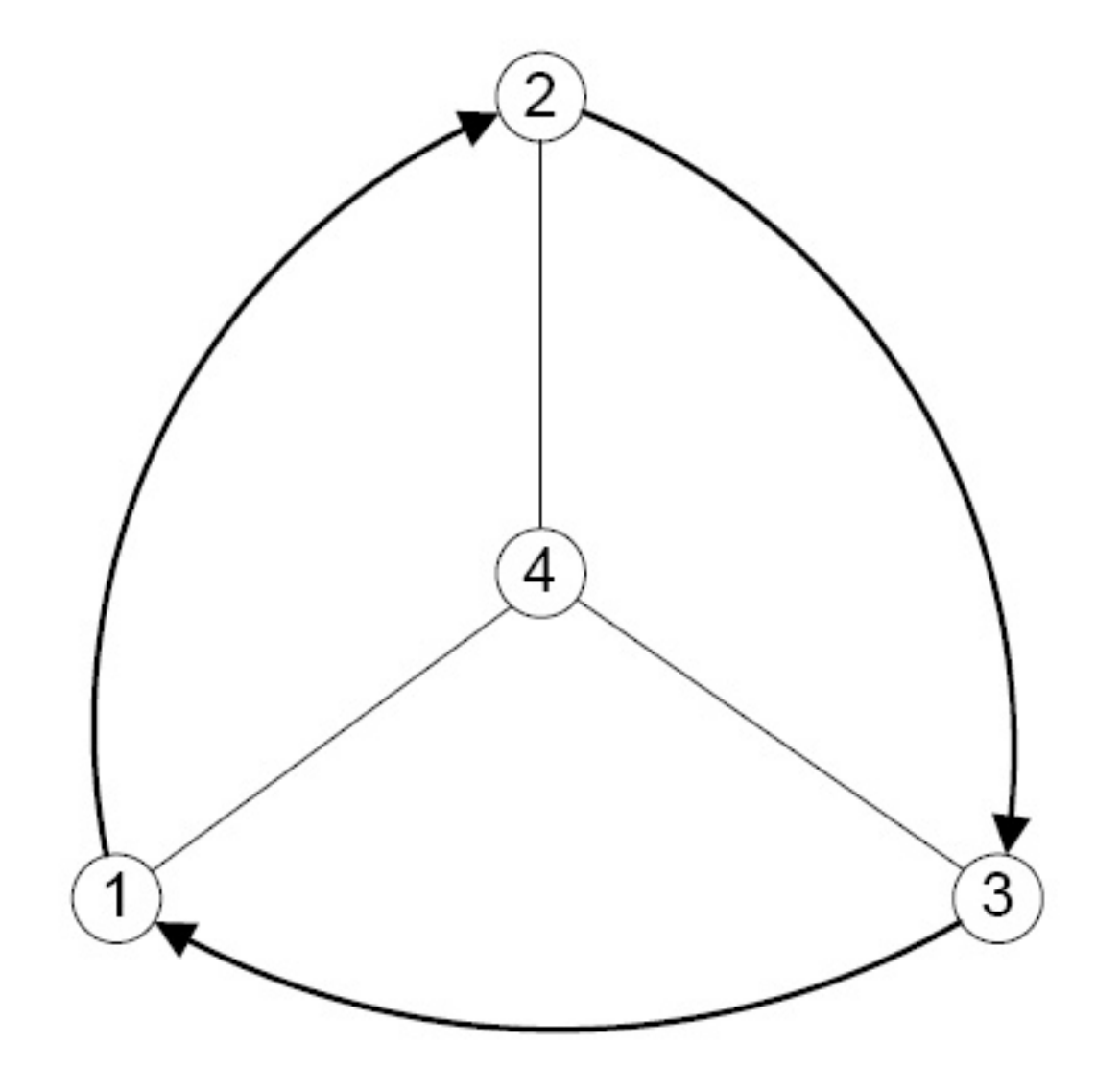}
\caption{The weight $\varphi$ is defined via a Y-tree with all edges of weight one; the cycle equals $(1\,2\,3)$.}
\label{fig:example}
\end{center}
\end{figure}
Note that in Fig.~\ref{fig:example}, the central vertex does not belong to the support of the cycle, and furthermore, the cycle is not balanced. Careful examination hence reveals that in order to generalize Algorithm~\ref{alg:ptd} for Y-tree costs, one has to separately consider three cases: $1\,)$ the case when the central vertex belongs to the support of the cycle; $ 2\,)$ the case when the central vertex does not belong to the support of the cycle, but the cycle is balanced; $3\,)$ the case when neither of the aforementioned two conditions hold.

We provide next a useful characterization of efficient transpositions. 
To do so, we recall that the defining graph $G$ is a tree, and that
hence there exists a unique path between any two vertices $a,b$ of
$G$. The next lemma describes for which $a$,$b$-paths the
corresponding transposition $(a\,b)$ is efficient.
\begin{lm}\label{lm:efficiency}
Let $G$ be the defining graph of a metric-tree weight function $\varphi$ and let $\pi$ be an arbitrary permutation of length $n$. For distinct $a, b\in [n]$, we have 
\begin{equation}\begin{split}\label{eq:ineff-nonneg}
\ief a b \pi\ge \varphi(a,b)+\varphi(b,\pi(a))-\varphi(a,\pi(a))\ge 0,\\
\ief a b \pi\ge \varphi(a,b)+\varphi(a,\pi(b))-\varphi(b,\pi(b))\ge 0.
\end{split}\end{equation}
 Furthermore, the following claims are equivalent:
\begin{enumerate}[i.]
\item The transposition $(a\,b)$ is efficient.
\item It holds that
\begin{align}
\varphi(a,\pi(a))-\varphi(b,\pi(a))=\varphi(a,b)\label{eq:tri-eq-a},\\
\varphi(b,\pi(b))-\varphi(a,\pi(b))=\varphi(a,b)\label{eq:tri-eq-b}.
\end{align}
\item The vertex $a$ lies on the $(b,\pi(b))$-path in $G$, and the vertex $b$ lies on the $(a,\pi(a))$-path in $G$.
\end{enumerate}
\end{lm}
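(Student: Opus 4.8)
The first step is to make the inefficiency explicit. Since $\pi(a\,b)$ agrees with $\pi$ outside positions $a$ and $b$ and satisfies $(\pi(a\,b))(a)=\pi(b)$ and $(\pi(a\,b))(b)=\pi(a)$, the difference $\Disp(\pi)-\Disp(\pi(a\,b))$ telescopes to $\varphi(a,\pi(a))+\varphi(b,\pi(b))-\varphi(a,\pi(b))-\varphi(b,\pi(a))$, so that
\[
\ief a b \pi = 2\varphi(a,b)-\varphi(a,\pi(a))-\varphi(b,\pi(b))+\varphi(a,\pi(b))+\varphi(b,\pi(a)),
\]
with the usual convention $\varphi(x,x)=0$ covering the case in which $\pi$ fixes $a$ or $b$. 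Everything in the lemma is then obtained by regrouping this single identity.

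For the inequality chain \eqref{eq:ineff-nonneg}, I would write the right-hand side above as the sum of the two brackets $\varphi(a,b)+\varphi(b,\pi(a))-\varphi(a,\pi(a))$ and $\varphi(a,b)+\varphi(a,\pi(b))-\varphi(b,\pi(b))$. Each bracket is nonnegative by the triangle inequality ($\varphi(a,\pi(a))\le\varphi(a,b)+\varphi(b,\pi(a))$ and $\varphi(b,\pi(b))\le\varphi(a,b)+\varphi(a,\pi(b))$), which gives the ``$\ge 0$'' ends; and since $\ief a b \pi$ equals the sum of these two nonnegative quantities, it is at least each of them, which gives the two displayed lower bounds. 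The equivalence (i)$\,\Leftrightarrow\,$(ii) then drops out: $(a\,b)$ is efficient exactly when the sum of the two nonnegative brackets vanishes, i.e.\ when both brackets vanish, and a bracket vanishing is precisely \eqref{eq:tri-eq-a}, respectively \eqref{eq:tri-eq-b}. Conversely, substituting $\varphi(a,\pi(a))=\varphi(a,b)+\varphi(b,\pi(a))$ and $\varphi(b,\pi(b))=\varphi(a,b)+\varphi(a,\pi(b))$ into the explicit formula makes it collapse to $2\varphi(a,b)-\varphi(a,b)-\varphi(a,b)=0$.

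For (ii)$\,\Leftrightarrow\,$(iii) --- the only place the tree hypothesis is used --- I would prove and then apply the following betweenness fact: for a metric-tree weight $\varphi$ with defining tree $G$, one has $\varphi(x,z)=\varphi(x,y)+\varphi(y,z)$ if and only if $y$ lies on the unique $x,z$-path in $G$. The ``if'' direction is immediate from path additivity; for ``only if,'' let $m$ be the unique common vertex of the three pairwise paths among $x,y,z$, so that $\varphi(x,y)+\varphi(y,z)=\varphi(x,m)+\varphi(m,z)+2\varphi(m,y)=\varphi(x,z)+2\varphi(m,y)$, whence the hypothesized equality forces $\varphi(m,y)=0$, so $y=m$ (edge weights being positive) and $y$ lies on the $x,z$-path. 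Taking $(x,y,z)=(a,b,\pi(a))$ shows \eqref{eq:tri-eq-a} is equivalent to ``$b$ lies on the $(a,\pi(a))$-path,'' and $(x,y,z)=(b,a,\pi(b))$ shows \eqref{eq:tri-eq-b} is equivalent to ``$a$ lies on the $(b,\pi(b))$-path''; together these two statements are exactly (iii). I expect this betweenness characterization --- really just the median-vertex identity plus positivity of the weights --- to be the only nonroutine point; the remainder is the triangle inequality and bookkeeping.
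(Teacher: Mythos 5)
Your proof is correct and follows essentially the same route as the paper's: the explicit formula for $\ief a b \pi$, its decomposition into the two triangle-inequality brackets, and the observation that efficiency means both brackets vanish. The only difference is cosmetic --- where the paper verifies the equivalence of (ii) and (iii) by inspecting the four possible placements of $b$ in a figure, you give the cleaner median-vertex argument, which actually supplies the detail the paper leaves implicit.
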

\begin{proof}
First, note that
\begin{align}\label{eq:ineff}
\ief a b \pi &= 2\varphi(a,b) - \Disp (\pi)+\Disp \left(\pi\left(a\, b\right)\right)\\
&=2\varphi(a,b)-\varphi(a,\pi(a))-\varphi(b,\pi(b))+\varphi(a,\pi(b))+\varphi(b,\pi(a)).\nonumber
\end{align}
From the triangle inequality, one also has
\begin{align}
\varphi(a,b)+\varphi(b,\pi(a))-\varphi(a,\pi(a))\ge0,\label{eq:tri-ieq-a}\\
\varphi(a,b)+\varphi(a,\pi(b))-\varphi(b,\pi(b))\ge0.\label{eq:tri-ieq-b}
\end{align}
By adding \eqref{eq:tri-ieq-a} and \eqref{eq:tri-ieq-b}, and by using \eqref{eq:ineff}, one can show that \eqref{eq:ineff-nonneg} holds as well. Additionally, $\ief a b \pi = 0$ if and only if \eqref{eq:tri-ieq-a} and \eqref{eq:tri-ieq-b} hold with equality, that is, if and only if \eqref{eq:tri-eq-a} and \eqref{eq:tri-eq-b} are true. This proves \eqref{eq:ineff-nonneg}, as well as that claims $i$ and $ii$ are equivalent.

To show that claims $ii$ and $iii$ are equivalent, it suffices to show that $\varphi(a,\pi(a))-\varphi(b,\pi(a))=\varphi(a,b)$ if and only if $b$ is on the path from $a$ to $\pi(a)$. This can be readily verified by inspecting all possible vertex placements as shown in Fig. \ref{fig:b-pos}, and by noting that all weights
on the tree are positive. Note that in Fig. \ref{fig:b-pos}, we have ignored the case where $a$, $b$, and $\pi(a)$ are not all distinct, as this case is particularly simple to check.
\myqed
\end{proof}

\begin{figure}
\begin{centering}
\includegraphics[width=0.75\textwidth]{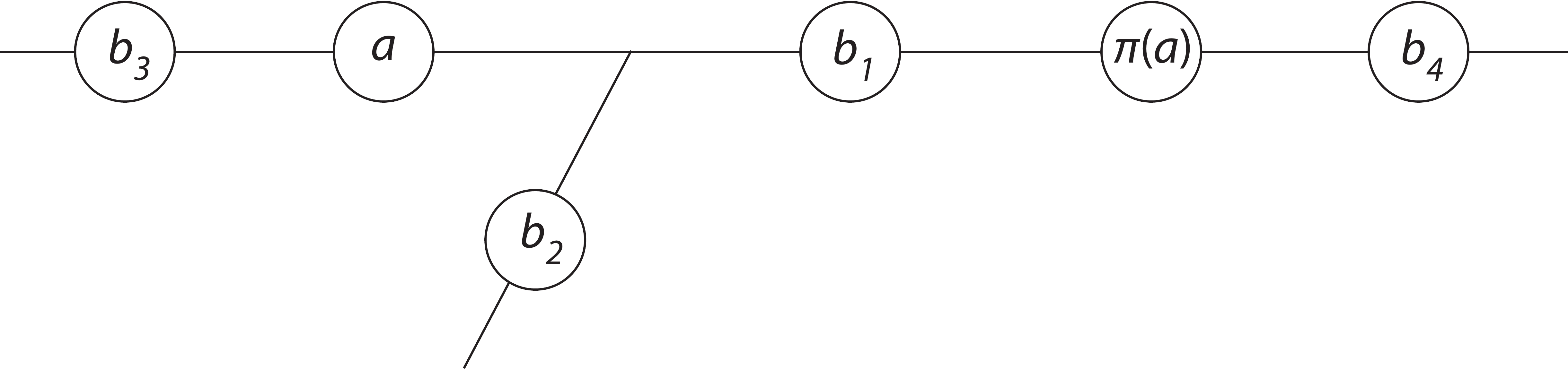}
\par\end{centering}
\protect\caption{There are four possible positions for $b$ with respect to $a$ and
$\pi(a)$ shown by $b_{1}$, $b_{2}$, $b_{3}$, and $b_{4}$.\label{fig:b-pos}}
\end{figure}

The next lemma strengthens  the results of Lemma~\ref{lem:metric-path}, and will be of use in the derivations to follow. 
\begin{lm}\label{lm:lm1-improv}
For a permutation $\pi$, the gap between $\diste (\pi)$ and $\frac12\Disp(\pi)$ equals the sum of the inefficiencies of the transpositions in a minimum weight decomposition.
\end{lm}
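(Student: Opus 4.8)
The plan is to prove the identity by telescoping the displacement along a minimum‑weight sorting of $\pi$ and invoking the definition of inefficiency link by link. Let $T^{*}$ be a minimum weight decomposition of $\pi$. Since a decomposition read in reverse is a sorting of the same weight, I would list the transpositions of $T^{*}$ in reverse order as $(a_{1}\,b_{1}),\dots,(a_{m}\,b_{m})$ and build the chain $\pi_{0}=\pi$ and $\pi_{k}=\pi_{k-1}(a_{k}\,b_{k})$ for $k=1,\dots,m$, so that $\pi=\pi_{0},\pi_{1},\dots,\pi_{m}=e$, with consecutive permutations differing by a single transposition. By construction $\wt(T^{*})=\diste(\pi)$ and $\sum_{k=1}^{m}\varphi(a_{k},b_{k})=\wt(T^{*})$.

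The core computation applies the definition of inefficiency to the $k$‑th link, $\ief{a_{k}}{b_{k}}{\pi_{k-1}}=2\varphi(a_{k},b_{k})-\bigl(\Disp(\pi_{k-1})-\Disp(\pi_{k})\bigr)$, and sums over $k$. The displacement differences telescope to $\Disp(\pi_{0})-\Disp(\pi_{m})=\Disp(\pi)$ (using $\Disp(e)=0$), while the weight terms add up to $2\wt(T^{*})=2\diste(\pi)$. A routine rearrangement then identifies the gap between $\diste(\pi)$ and $\tfrac{1}{2}\Disp(\pi)$ with $\sum_{k}\ief{a_{k}}{b_{k}}{\pi_{k-1}}$, and since the links of the chain are exactly the transpositions of $T^{*}$, this is the sum of their inefficiencies. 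Because every minimum weight decomposition has weight $\diste(\pi)$, this value does not depend on the choice of $T^{*}$.

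Since the whole argument collapses to a telescoping sum plus the trivial equality $\wt(T^{*})=\diste(\pi)$, I do not expect a genuine obstacle; the only point requiring attention is the bookkeeping of which intermediate permutation each inefficiency is measured against. The quantity $\ief{a}{b}{\sigma}$ compares $\Disp$ before and after one right multiplication by $(a\,b)$, which is precisely the operation relating $\pi_{k-1}$ to $\pi_{k}$, and this forces the passage to the reversed decomposition; it is also exactly where minimality enters, since for an arbitrary decomposition $T$ the identical computation still expresses $\wt(T)$ through $\Disp(\pi)$ and the inefficiencies of its links, the lemma being the instance $\wt(T)=\diste(\pi)$. As a consistency check, each term $\ief{a_{k}}{b_{k}}{\pi_{k-1}}$ is nonnegative by Lemma~\ref{lm:efficiency} (or directly by nonnegativity of the inefficiency), so the gap is manifestly $\ge 0$, recovering the bound $\diste(\pi)\ge\tfrac12\Disp(\pi)$ of Lemma~\ref{lem:metric-path}.
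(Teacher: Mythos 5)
Your argument is exactly the paper's own proof: pass to a minimum-weight sorting (the reverse of a minimum-weight decomposition), telescope the displacement along the chain $\pi_k=\pi_{k-1}(a_k\,b_k)$ from $\pi$ down to $e$, and sum the defining identity of the inefficiency over the links. The only caveat is a factor of $2$ in your final ``routine rearrangement'': the telescoped identity gives $\diste(\pi)-\tfrac12\Disp(\pi)=\tfrac12\sum_k\ief{a_k}{b_k}{\pi_{k-1}}$, so the gap is \emph{half} the sum of the inefficiencies --- an imprecision already present verbatim in the lemma's statement (compare the paper's equation \eqref{eq:ineff-decomp}) and harmless for all later uses, which only need the right-hand side to be nonnegative and to vanish exactly when every transposition used is efficient.
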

\begin{proof}
Let $T^*=(\tau_{1},\dotsc,\tau_{|T^{*}|})$, $\tau_j=(a_j\,b_j)$, be a minimum weight sorting and let $\pi_j=\pi_{j-1}\tau_j$, with $\pi_0 = \pi$. 
For all $j$, we have
\[\varphi(a_{j},b_{j})-\frac{\Disp (\pi_{j-1})-\Disp (\pi_{j})}{2}=\ief{a_{j}}{b_{j}}{\pi_{j-1}}.\]
By summing over all $j$, we find 
\begin{equation}
\diste (\pi)-\frac{1}{2}\Disp (\pi)=\sum_{j=1}^{\left|T^{*}\right|}\frac{1}{2}\ief{a_{j}}{b_{j}}{\pi_{j-1}},
\label{eq:ineff-decomp}
\end{equation}
which produces the desired result.
\myqed
\end{proof}
Note that by Lemma~\ref{lm:efficiency}, the right side of \eqref{eq:ineff-decomp} is always nonnegative.

Our algorithmic solution to the decomposition problem conceptually consists of two stages. In the first stage, a cycle is represented by a product of shorter adjacent cycles, each of which has the property that its support lies on a path in the Y-tree. It can be shown that the overall cost of the decomposition performed on each of the shorter cycles is minimized in this process. The second stage involves decomposing cycles that have supports that lie on paths. 

\subsection{Case 1: Cycles containing the central vertex}
\label{case1}
As before, denote the central vertex by $\cv$, and with slight abuse of notation, use \(B_{1}, B_{2}\  \text{and}\ B_{3}\) to denote both the three branches of the Y-tree and their corresponding vertex sets. Recall that the central vertex does not belong to any of the branches.

The decomposition procedure for this cycle type is described in Algorithm~\ref{alg:ctd}. 
The algorithm terminates when all subcycles $\pi_j$ have supports that lie on paths of the defining Y-tree.

\begin{algorithm}
\caption{central-td}
\SetKwFunction{this}{central-td}
\SetKwFunction{that}{path-td}
\SetKwFunction{concat}{Concatenate}
\SetKwFunction{branch}{Br}
\label{alg:ctd}
\KwIn{A cycle $\pi=(\cv\,v_1\,\cdots\, v_{|\pi|-1})$ containing the central vertex $\cv$}
\KwOut{A minimum cost decomposition of $\pi$}
\BlankLine
\lIf{$\supp(\pi)$ is contained in a path of the Y-tree}{\Return \that($\pi$)}
$t\leftarrow\min\{{i\in [|\pi|-1]: v_{i}\in \text{\branch{$v_1$}} , v_{i+1}\notin \text{\branch{$v_1$}}\}}$\;
$\pi'\gets(\cv\,v_{t+1}\,\cdots\, v_{|\pi|-1})$\;
$\pi''\gets(\cv\,v_{1}\,\dotsc\, v_{t})$
\tcc*[r]{$\pi=\pi'\pi''$}
\Return \concat(\this($\pi'$), \that($\pi''$))
\end{algorithm}
\begin{lm}\label{lem:onpath}
Let $\pi_1$ and $\pi_2$ be two permutations such that $\supp(\pi_1)\cap \supp(\pi_2)=\{a\}$. The following are equivalent:
\begin{enumerate}
\item $\Disp (\pi_1\pi_2)=\Disp (\pi_1)+\Disp (\pi_2)$,
\item The vertex $a$ lies on the $(\pi_1(a)$,$\pi_2^{-1}(a))$-path.
\end{enumerate}
If the above conditions hold and, additionally, $\diste (\pi_1) =
\frac{1}{2}\Disp (\pi_1)$ and $\diste (\pi_2) = \frac{1}{2}\Disp
(\pi_2)$, then $\diste (\pi_1\pi_2) = \frac{1}{2}\Disp (\pi_1\pi_2)$.
\end{lm}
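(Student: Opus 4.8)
The plan is to prove the three parts in the order $(1)\Leftrightarrow(2)$ first, then derive the distance statement from the additivity of displacement together with Lemma~\ref{lem:metric-path}. Write $a=\pi_1(a)'$... more precisely, set $x=\pi_2^{-1}(a)$ and $y=\pi_1(a)$, so that in the product $\pi_1\pi_2$ we have $(\pi_1\pi_2)(x)=\pi_1(\pi_2(x))=\pi_1(a)=y$; thus in $\mathcal{G}(\pi_1\pi_2)$ the two arcs $x\to a$ and $a\to y$ through the shared vertex $a$ are replaced by a single arc $x\to y$. For every other element $i\in\supp(\pi_1)\cup\supp(\pi_2)$, the image under $\pi_1\pi_2$ agrees with its image under whichever of $\pi_1,\pi_2$ moves it (this uses $\supp(\pi_1)\cap\supp(\pi_2)=\{a\}$, so the two supports interact only at $a$). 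Hence
\[
\Disp(\pi_1\pi_2)=\Disp(\pi_1)+\Disp(\pi_2)-\varphi(x,a)-\varphi(a,y)+\varphi(x,y),
\]
and the difference $\Disp(\pi_1)+\Disp(\pi_2)-\Disp(\pi_1\pi_2)=\varphi(x,a)+\varphi(a,y)-\varphi(x,y)$ is exactly the triangle-inequality slack at the triple $(x,a,y)$. By the same elementary tree-geometry fact invoked in the proof of Lemma~\ref{lm:efficiency} (inspecting the possible positions of $a$ relative to $x$ and $y$ on the tree, using positivity of edge weights), this slack is zero if and only if $a$ lies on the $(x,y)$-path, i.e.\ on the $(\pi_2^{-1}(a),\pi_1(a))$-path. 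That establishes $(1)\Leftrightarrow(2)$.

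For the final implication, assume $(1)$ holds and $\diste(\pi_i)=\tfrac12\Disp(\pi_i)$ for $i=1,2$. Concatenating an optimal decomposition of $\pi_1$ with one of $\pi_2$ gives a decomposition of $\pi_1\pi_2$ of weight $\tfrac12\Disp(\pi_1)+\tfrac12\Disp(\pi_2)=\tfrac12\Disp(\pi_1\pi_2)$, using additivity. Hence $\diste(\pi_1\pi_2)\le\tfrac12\Disp(\pi_1\pi_2)$; the reverse inequality is immediate from Lemma~\ref{lem:metric-path}. So equality holds.

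The only genuinely delicate point is the tree-geometry lemma behind $(1)\Leftrightarrow(2)$: that for a metric-tree weight $\varphi$ and any three (not necessarily distinct) vertices $x,a,y$, one has $\varphi(x,a)+\varphi(a,y)=\varphi(x,y)$ iff $a$ is on the unique $(x,y)$-path. I would either cite this directly from the case analysis in the proof of Lemma~\ref{lm:efficiency} (which proved the analogous statement $\varphi(a,\pi(a))-\varphi(b,\pi(a))=\varphi(a,b)\iff b$ on the $(a,\pi(a))$-path) or re-derive it in one line: decompose the walk $x\leadsto a\leadsto y$ in $G$ and note each edge of the $(x,y)$-path is traversed at least once with total weight $\ge\varphi(x,y)$, with equality forcing no backtracking, i.e.\ $a$ on the path. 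The bookkeeping that $\Disp(\pi_1\pi_2)$ differs from $\Disp(\pi_1)+\Disp(\pi_2)$ only in the three terms involving $a$ is routine once one observes that outside $a$ the two cycles act on disjoint sets; I would state it but not belabor it. One should also handle the degenerate sub-cases where $x$, $a$, $y$ are not all distinct (e.g.\ $x=y$, which happens when $\pi_1\pi_2$ fixes $a$), but as in Lemma~\ref{lm:efficiency} these are immediate.
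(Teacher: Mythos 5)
Your proof is correct and takes essentially the same route as the paper's: you derive the identity $\Disp(\pi_1\pi_2)-\Disp(\pi_1)-\Disp(\pi_2)=\varphi(x,y)-\varphi(x,a)-\varphi(a,y)$ with $x=\pi_2^{-1}(a)$, $y=\pi_1(a)$, reduce the equivalence to the vanishing of the triangle-inequality slack (which, by positivity of the tree edge weights, happens exactly when $a$ lies on the $(x,y)$-path), and obtain the final claim by combining the concatenation upper bound with the lower bound of Lemma~\ref{lem:metric-path}. The only difference is that you spell out the arc-level bookkeeping and the degenerate cases, which the paper leaves implicit.
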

\begin{proof}  
  Since $\supp(\pi_1) \cap \supp(\pi_2) = \{a\}$, we have
  \[\Disp (\pi_1\pi_2)-\Disp (\pi_1)-\Disp (\pi_2) = \varphi(\pi_2^{-1}(a), \pi_1(a)) - \varphi(a, \pi_2^{-1}(a)) - \varphi(a, \pi_1(a)). \]
  Condition~1 holds if and only if the right side of this equation is $0$. Since $\varphi$ is strictly positive, the right side of
  this equation is $0$ if and only if Condition~2 holds.

  For the second part, Lemma~\ref{lem:metric-path} yields the lower bound $\diste(\pi_1\pi_2) \geq \frac{1}{2}\Disp(\pi_1\pi_2)$,
  while the hypotheses give the upper bound:
  \[ \diste(\pi_1\pi_2) \leq \diste(\pi_1) + \diste(\pi_2) = \frac{1}{2}\Disp(\pi_1) + \frac{1}{2}\Disp(\pi_2) = \frac{1}{2}\Disp(\pi_1\pi_2). \]
  \myqed
\end{proof}
\begin{lm}
The minimum decomposition cost of a cycle \(\pi \) containing the central vertex equals one half of its displacement, i.e.,
\[
\diste (\pi)=\frac {1}{2}\Disp (\pi).
\]
\label{lem:central}
\end{lm}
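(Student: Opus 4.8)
The plan is to analyze Algorithm~\ref{alg:ctd} and show, by induction on $|\pi|$, that it outputs a decomposition of $\pi$ of cost exactly $\frac{1}{2}\Disp(\pi)$; since Lemma~\ref{lem:metric-path} already gives the matching lower bound $\diste(\pi)\ge\frac{1}{2}\Disp(\pi)$, this proves the claim. The base case is when $\supp(\pi)$ lies on a path of the Y-tree: then $\varphi$ restricted to that path is a metric-path weight function, so \eqref{path-case} in Lemma~\ref{lem:metric-path} applies directly and $\diste(\pi)=\frac{1}{2}\Disp(\pi)$.

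For the inductive step, assume $\supp(\pi)$ is not contained in a path, write $\pi=(\cv\,v_1\,\cdots\,v_{|\pi|-1})$, and let $t$ be the index chosen by the algorithm, namely the least $i$ with $v_i\in\br(v_1)$ and $v_{i+1}\notin\br(v_1)$. First I would check that $t$ is well-defined with $1\le t\le|\pi|-2$: we have $v_1\in\br(v_1)$, and if $v_1,\dots,v_{|\pi|-1}$ all lay in $\br(v_1)$ then $\supp(\pi)\subseteq\br(v_1)\cup\{\cv\}$ would be contained in a path, contradicting our assumption. Next I would verify the factorization $\pi=\pi'\pi''$ asserted in the algorithm, with $\pi'=(\cv\,v_{t+1}\,\cdots\,v_{|\pi|-1})$ and $\pi''=(\cv\,v_1\,\cdots\,v_t)$, by composing the two cycles (applying $\pi''$ first); note that $\supp(\pi')\cap\supp(\pi'')=\{\cv\}$ and that both $\pi'$ and $\pi''$ are genuine cycles of length $\ge 2$ and length $<|\pi|$. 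A short induction using the minimality of $t$ shows $v_1,\dots,v_t\in\br(v_1)$, so $\supp(\pi'')\subseteq\br(v_1)\cup\{\cv\}$ is contained in a path, and \eqref{path-case} gives $\diste(\pi'')=\frac{1}{2}\Disp(\pi'')$; since $\pi'$ is a shorter cycle that still contains $\cv$, the induction hypothesis gives $\diste(\pi')=\frac{1}{2}\Disp(\pi')$.

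It remains to glue the two pieces together, which is exactly what Lemma~\ref{lem:onpath} is for. Its hypothesis (Condition~2, with $a=\cv$, $\pi_1=\pi'$, $\pi_2=\pi''$) requires $\cv$ to lie on the $(\pi'(\cv),\,(\pi'')^{-1}(\cv))$-path, i.e.\ on the $(v_{t+1},v_t)$-path in $G$. But $v_t\in\br(v_1)$ while $v_{t+1}$ lies in a branch different from $\br(v_1)$ (it is neither in $\br(v_1)$ nor equal to $\cv$), and in a Y-tree the unique path between vertices of two different branches must pass through the central vertex. Hence the hypothesis holds, so Lemma~\ref{lem:onpath} yields $\Disp(\pi)=\Disp(\pi')+\Disp(\pi'')$ and, because both $\pi'$ and $\pi''$ attain the $\frac{1}{2}\Disp$ bound, $\diste(\pi)=\diste(\pi'\pi'')=\frac{1}{2}\Disp(\pi'\pi'')=\frac{1}{2}\Disp(\pi)$, closing the induction.

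I expect the substantive content to be entirely carried by the two prior lemmas, so the only real work is bookkeeping around the split: confirming $t$ is well-defined with $1\le t\le|\pi|-2$ (so the recursion strictly shrinks $|\pi|$ and both factors are nontrivial cycles) and that $\supp(\pi'')$ genuinely sits on a single branch-plus-center path. The conceptual heart — that breaking the cycle at its first exit from $\br(v_1)$ routes the severed arc through $\cv$ and hence makes the split displacement-additive — is immediate from the Y-tree structure once Lemma~\ref{lem:onpath} is available, so I do not anticipate a serious obstacle.
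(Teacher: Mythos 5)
Your proof is correct and follows essentially the same route as the paper's: induction on $|\pi|$, using Algorithm~\ref{alg:ctd} to split off a path-supported subcycle $\pi''$ at the first arc leaving $\br(v_1)$, invoking \eqref{path-case} for $\pi''$ and the induction hypothesis for $\pi'$, and gluing via Lemma~\ref{lem:onpath} since the $(v_t,v_{t+1})$-path passes through $\cv$. The only (harmless) difference is that you take cycles supported on a path as the base case rather than $|\pi|=2$, and you spell out more of the bookkeeping around the well-definedness of $t$ and the factorization.
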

\begin{proof}
We use induction on $|\pi|$. The smallest (non-trivial)
cycle $\pi$ that contains the central vertex $\cv$ is of the form
$(\cv\, b)$ for some $b$ and has only two vertices. Thus 
\[
\diste (\pi)=\varphi(\cv,b)=\frac{1}{2}\Disp (\pi).
\]

Now suppose for any cycle of size at most $m-1$, the lemma holds.
We show that it also holds for a cycle $\pi$ of size $m$. Algorithm~\ref{alg:ctd} finds two cycles $\pi'$ and $\pi''$ such that $\pi=\pi'\pi''$, $\supp(\pi') \cap \supp(\pi'') = \{\cv\}$, and the cycle $\pi''$ lies on
a path of the Y-tree.

Since $v_t$ and $v_{t+1}$ lie on different branches, $\cv$ lies on
the unique $v_t,v_{t+1}$-path. Since $\pi'$ has size at most $m-1$,
the induction hypothesis yields $\diste (\pi')=\frac{1}{2}\Disp
(\pi')$, while Lemma~\ref{lem:metric-path} yields $\diste
(\pi'')=\frac{1}{2}\Disp (\pi'')$. By Lemma~\ref{lem:onpath}, we
conclude that $\diste(\pi) = \frac{1}{2}\Disp(\pi)$.\myqed
\end{proof}

\begin{figure}
\centering
\mbox{
\subfigure[]{\includegraphics[width=6.5cm]{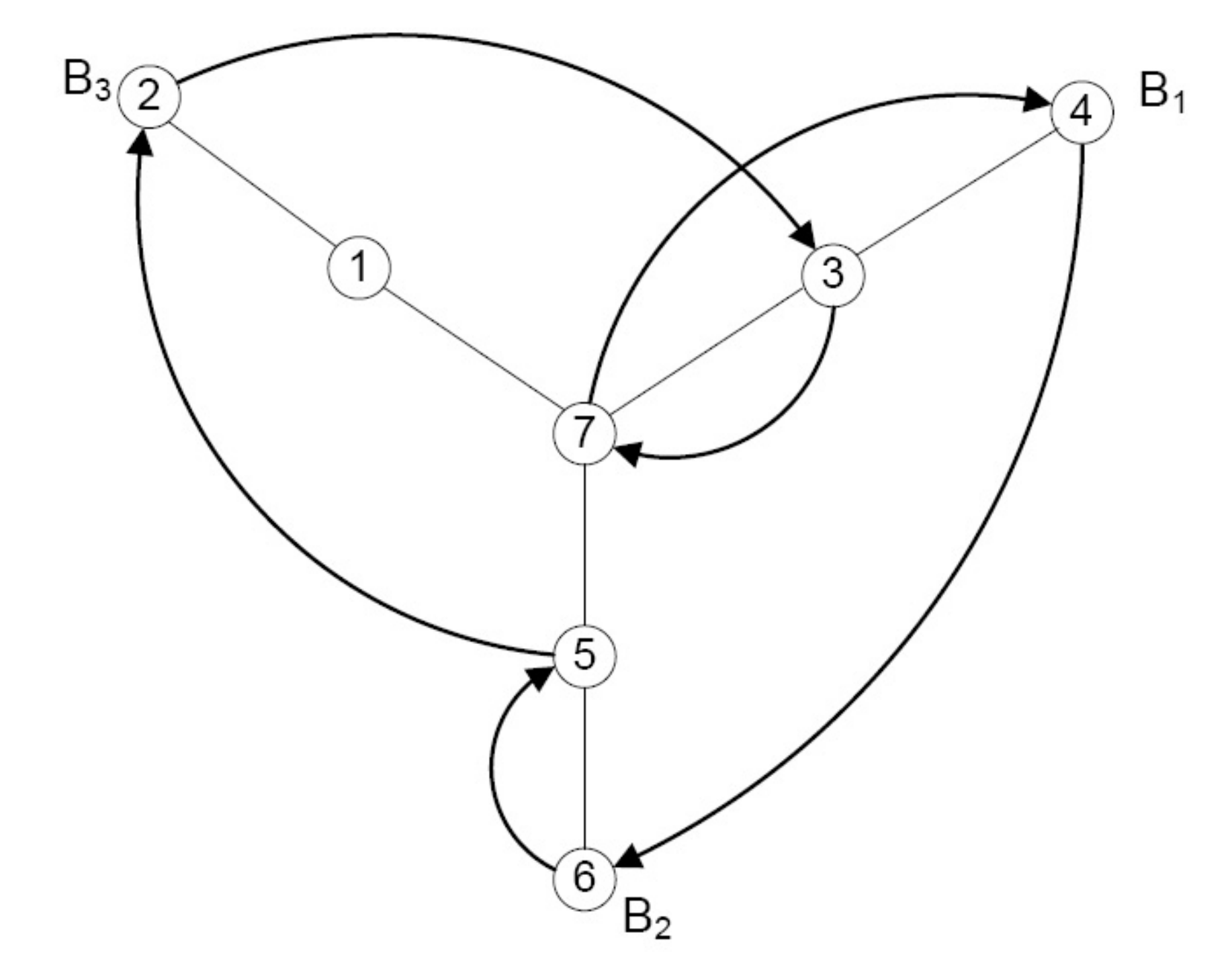}}\quad
\subfigure[]{\includegraphics[width=6.5cm]{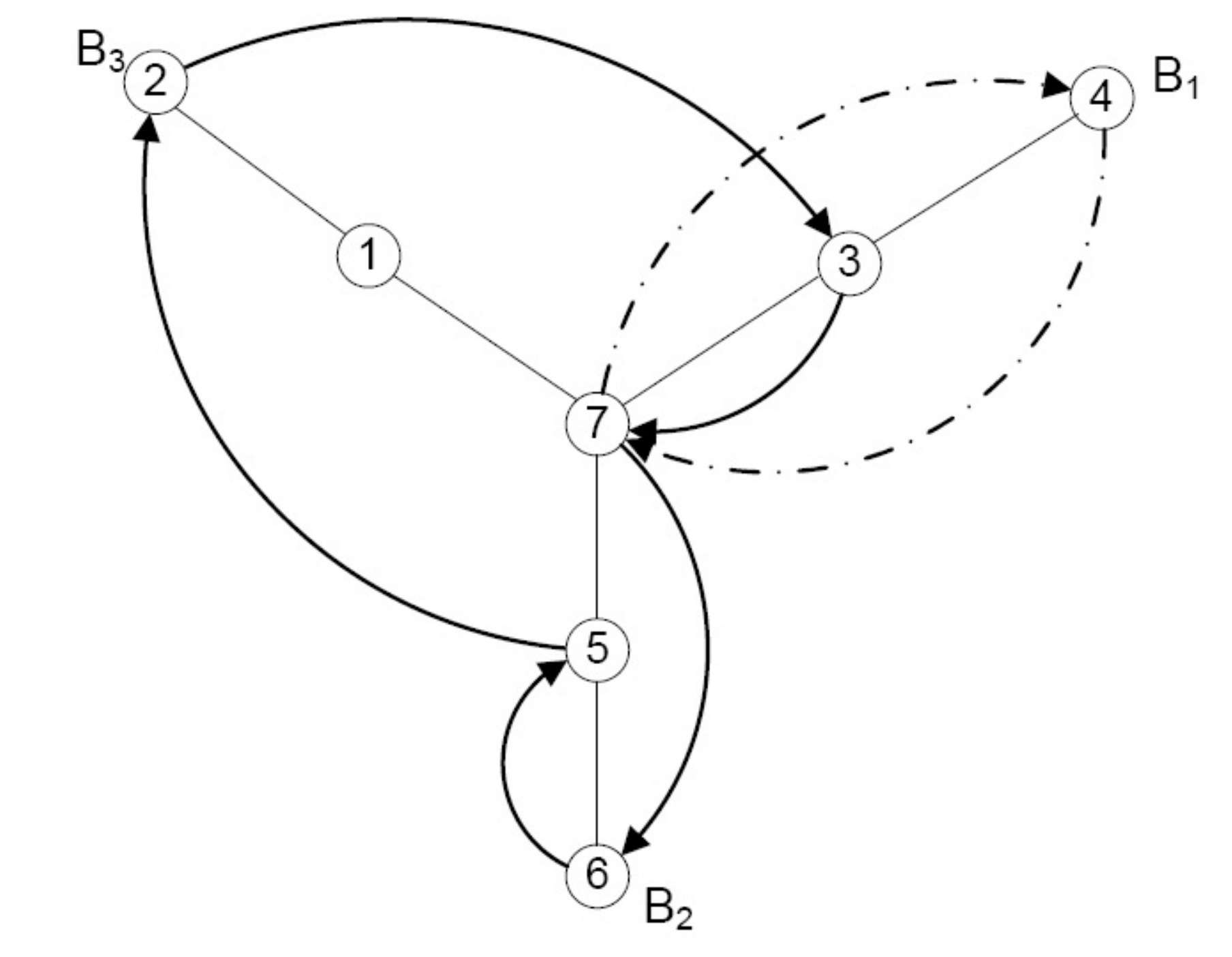}}\quad
}
\caption{1\,) A Y-tree and input cycle (7\,4\,6\,5\,2\,3). After the first iteration of Algorithm~\ref{alg:ctd}, the arc $(4 \to 6)$ is replaced by two arcs, decomposing the original
cycle (7\,4\,6\,5\,2\,3) into a product of two adjacent cycles, i.e., (7\,4\,6\,5\,2\,3)=(7\,6\,5\,2\,3)(7\,4), as shown in 2\,).}
\label{fig:balancedcycle}
\end{figure}

\subsection{Case 2: Balanced cycles}
\label{case2}
Given that a cycle containing the central vertex was analyzed in Case 1 of our exposition, we henceforth tacitly assume that the balanced cycles considered in this section do not contain the central vertex.

\begin{lm}\label{lem:balanced}
For a balanced cycle $\pi$, the minimum decomposition cost of $\pi$ equals one half of its displacement, i.e.,
\[
\diste (\pi)=\frac {1}{2}\Disp (\pi).
\]
\end{lm}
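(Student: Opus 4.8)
The plan is to argue by induction on $|\pi|$, splicing out one vertex at a time so as to reduce $\pi$ to a smaller balanced cycle multiplied by a single transposition, in the spirit of Lemma~\ref{lem:central}. If $\supp(\pi)$ meets at most two of the three branches, then $\pi$ lies on a path of the Y-tree and Lemma~\ref{lem:metric-path} already gives $\diste(\pi)=\frac12\Disp(\pi)$; this is the base case. Otherwise $\supp(\pi)$ meets all three branches, and the goal is to produce a vertex $v^*\in\supp(\pi)$ for which, writing $u=\pi^{-1}(v^*)$, $w=\pi(v^*)$ and letting $\pi'$ be the $(|\pi|-1)$-cycle obtained by deleting $v^*$ from $\pi$ (so $\pi'(u)=w$), the following hold: (i) $\pi'$ is again a balanced cycle avoiding $\cv$; (ii) $\pi=\pi'\,(u\,v^*)$ and $\pi=(v^*\,w)\,\pi'$, each factorization having exactly one vertex in common with its transposition factor; and (iii) at least one of these two factorizations is displacement-additive, i.e.\ meets the hypotheses of Lemma~\ref{lem:onpath}. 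Granting (i)--(iii), the decomposition cost of a transposition equals half its displacement and $\diste(\pi')=\frac12\Disp(\pi')$ (by the induction hypothesis, or by Lemma~\ref{lem:metric-path} if $\pi'$ happens to lie on a path), so Lemma~\ref{lem:onpath} yields $\diste(\pi)=\frac12\Disp(\pi)$; the matching lower bound is Lemma~\ref{lem:metric-path}.

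To choose $v^*$, consider the closed walk $\br(v_1),\br(v_2),\dots$ on $\{1,2,3\}$ and break it into maximal runs of equal entries. Call a run \emph{good} if it has length at least two, or it has length one and its two cyclic neighbours lie in a common branch. A good run must exist: if every run had length one, the walk would have no two consecutive equal entries, and if it moreover avoided every subword $i,j,i$ it would be forced to cycle monotonically through the branches, using each directed branch-pair in only one direction, which is impossible for a balanced walk that visits all three branches. Fix a good run, contained in some $B_j$, and let $v^*$ be its vertex farthest from $\cv$; this is well defined since distinct vertices of a branch lie at distinct distances from $\cv$.

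With this choice, (i) is a direct computation: deleting $v^*$ replaces the arc pair $\{u\to v^*,\ v^*\to w\}$ by the single arc $u\to w$, and in each of the few branch configurations a good run allows for $(\br(u),\br(v^*),\br(w))$ this leaves every cross-branch count $l_{pq}$ unchanged --- in the length-one (bounce) case $l_{ij}$ and $l_{ji}$ both decrease by one, so the equality $l_{ij}=l_{ji}$ is preserved --- while $\cv\notin\supp(\pi)\supseteq\supp(\pi')$. For (iii), note that $\pi=\pi'\,(u\,v^*)$ satisfies Lemma~\ref{lem:onpath} exactly when $u$ lies on the $(w,v^*)$-path, and $\pi=(v^*\,w)\,\pi'$ exactly when $w$ lies on the $(u,v^*)$-path; in a tree at least one of these holds precisely when the median of $\{u,v^*,w\}$ is not $v^*$. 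Since the run is good, $u,v^*,w$ do not occupy three distinct branches, so their median is one of the three; and since $v^*$ is the deepest vertex of its run, a short check over the possible configurations shows the median is $u$ or $w$, never $v^*$ (when $v^*$ shares a branch with one of $u,w$, the shallower co-branch vertex is the median; when $u,w$ share a branch distinct from $B_j$, the shallower of $u,w$ is the median).

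I expect the discrete bookkeeping of the last two paragraphs to be the main obstacle: proving a good run always exists (the forced-walk/parity argument) and then verifying uniformly, across the four branch patterns of $(u,v^*,w)$, that taking $v^*$ deepest in its run simultaneously preserves balance under deletion and keeps $v^*$ off the $(u,w)$-path. Once this is settled, the remainder is a routine application of Lemmas~\ref{lem:metric-path} and~\ref{lem:onpath}.
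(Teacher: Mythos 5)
Your proof is correct, and it shares the paper's skeleton --- induction on $|\pi|$, the lower bound from Lemma~\ref{lem:metric-path}, and Lemma~\ref{lem:onpath} to certify that the chosen factorization is displacement-additive --- but the decomposition unit is genuinely different. The paper excises an entire excursion in one step: it finds indices $t,l$ with $v_t,v_{l+1}$ in a common branch and $v_{t+1},\dots,v_l$ in a single other branch, and writes $\pi$ as a shorter balanced cycle times a path-supported cycle, joined at whichever of $v_t,v_{l+1}$ is closer to $\cv$; that closeness test plays exactly the role of your ``the median is $u$ or $w$, never $v^*$'' check. You instead peel off one vertex per step, which obliges you to verify two things the paper gets for free: that balance survives single-vertex deletion (your four-configuration check, which is sound), and that a good run in your weaker sense exists. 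On the latter point your position is actually slightly stronger than the paper's: the paper asserts the existence of its bracketed run without proof, whereas your forced-pattern argument (a cyclic branch word over $\{1,2,3\}$ with no repeated letter and no subword $i,j,i$ must circulate monotonically, contradicting balance) is the missing justification, and it makes transparent why the argument breaks for star-trees with four branches (Fig.~\ref{fig:counter_example}). Your version buys a cleaner recursion ($m\to m-1$ plus a single transposition, with Lemma~\ref{lem:metric-path} needed only in the base case) at the cost of more bookkeeping. One small slip to repair: in the parenthetical ``the shallower co-branch vertex is the median,'' when $u$, $v^*$, $w$ all lie in the same branch the median is the \emph{deeper} of $u,w$ --- still not $v^*$ since $v^*$ is deepest in its run, so the conclusion is unaffected.
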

\begin{proof}
We prove the lemma by induction on $|\pi|$.

Clearly, the lemma holds for $|\pi|=2$. We therefore assume next that it also holds for all cycles $\kappa$ with $|\kappa|<m$. We then show that the claimed result also holds for $\pi$, where $\pi$ is an arbitrary cycle such that $|\pi|=m$. 

Let $\pi=(v_1\dotsc v_m)$, and without loss of generality, let the support of $\pi$ span all three branches of the Y-tree (if the support of the cycle were to lie on two branches only, the desired result would immediately follow from \eqref{path-case}). Consider two distinct indices $t$ and $l$ modulo $m$, such that $v_{t}$ and $v_{l+1}$ belong to the same branch, say $B_1$, with $v_{t+1},\dotsc,v_{l}$ belonging to a different branch, say $B_2$. Such indices $t$ and $l$ must exist since the cycle $\pi$ is balanced.

We consider two cases, depending on which one of the two vertices $v_t$ and $v_{l+1}$ is closer to the center $\cv$. First, suppose $v_t$ is closer to $\cv$, that is, $\varphi(\cv,v_t)<\varphi(\cv,v_{l+1})$. An illustrative example is shown in Fig.~\ref{fig:case2a}.1. 
Let $\pi'=(v_1 \dotsm v_{t-1}v_{t}v_{l+1}v_{l+2}\dotsm v_m)$ and $\pi''=(v_t\dotsm v_l)$. Note that $\pi=\pi'\pi''$, that $\supp(\pi') \cap \supp(\pi'') = \{v_t\}$, and that $\pi'$ is balanced while $\pi''$ lies on a path (See Fig.~\ref{fig:case2a}.2). The induction hypothesis yields $\diste (\pi') = \frac{1}{2}\Disp(\pi')$, while Lemma~\ref{lem:central} yields $\diste (\pi'') = \frac{1}{2}\Disp(\pi'')$. Since $v_t$ lies on the $v_l,v_{l+1}$-path, Lemma~\ref{lem:onpath} yields $\diste(\pi) = \frac{1}{2}\Disp(\pi)$.

Next, suppose that $\varphi(\cv,v_t)>\varphi(\cv,v_{l+1})$, as
illustrated in Fig.~\ref{fig:case2b}. In this case, let $\pi'=(v_1 \dotsm
v_{t-1}v_{t}v_{l+1}v_{l+2}\dotsm v_m)$ and let $\pi''=(v_{t+1}\dotsm
v_{l+1})$. Now $\pi=\pi''\pi'$, with $\pi'$ lying on a path and with
$\pi''$ balanced, so we again have $\diste(\pi') =
\frac{1}{2}\Disp(\pi')$ and $\diste(\pi'') =
\frac{1}{2}\Disp(\pi'')$. Since $\supp(\pi') \cap \supp(\pi'') =
\{v_{l+1}\}$ and $v_{l+1}$ lies on the $v_t,v_{t+1}$-path,
Lemma~\ref{lem:onpath} again yields $\diste(\pi) =
\frac{1}{2}\Disp(\pi)$.  \myqed\end{proof}

\begin{figure}
\centering
\mbox{
\subfigure[]{\includegraphics[width=6.5cm]{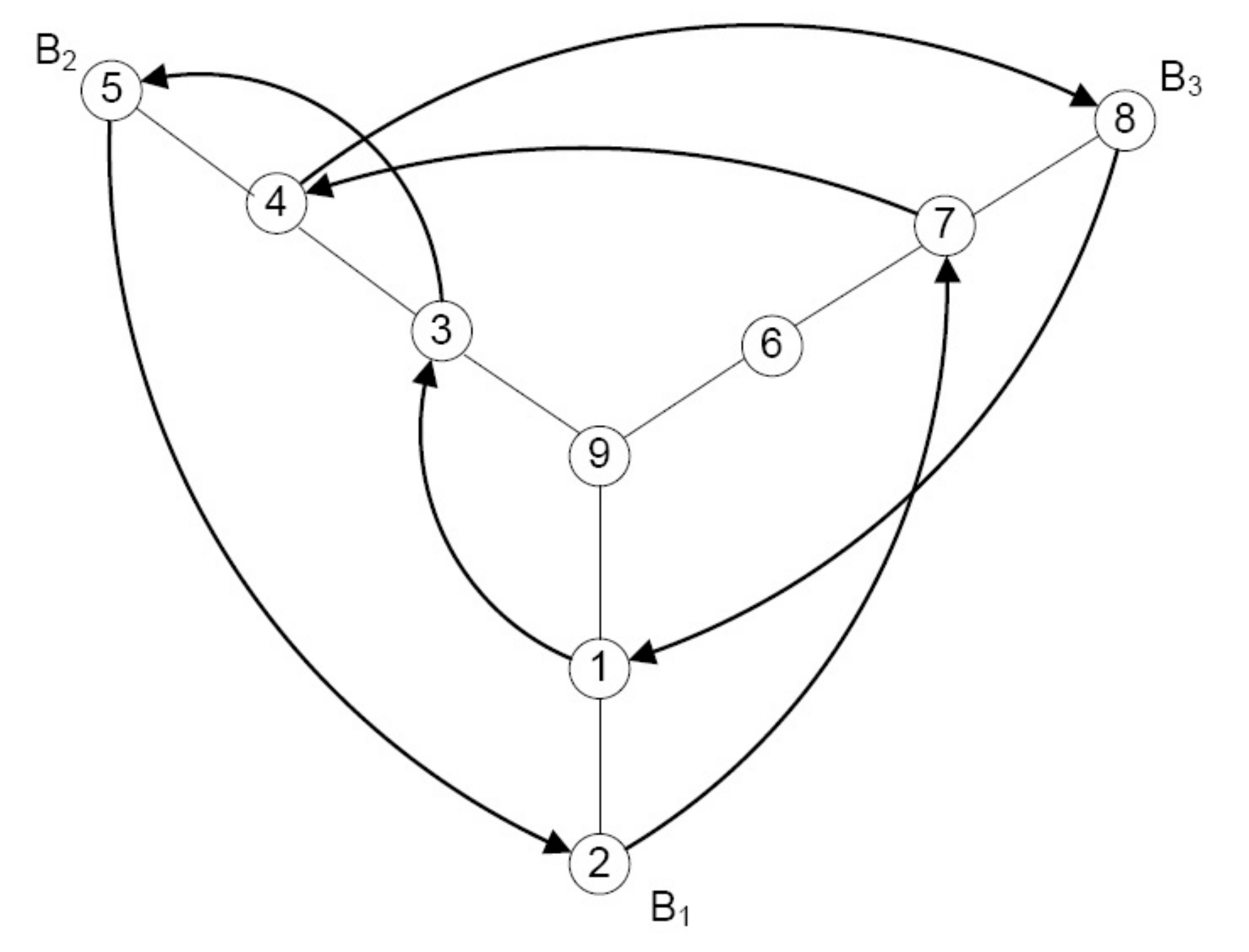}}\quad
\subfigure[]{\includegraphics[width=6.5cm]{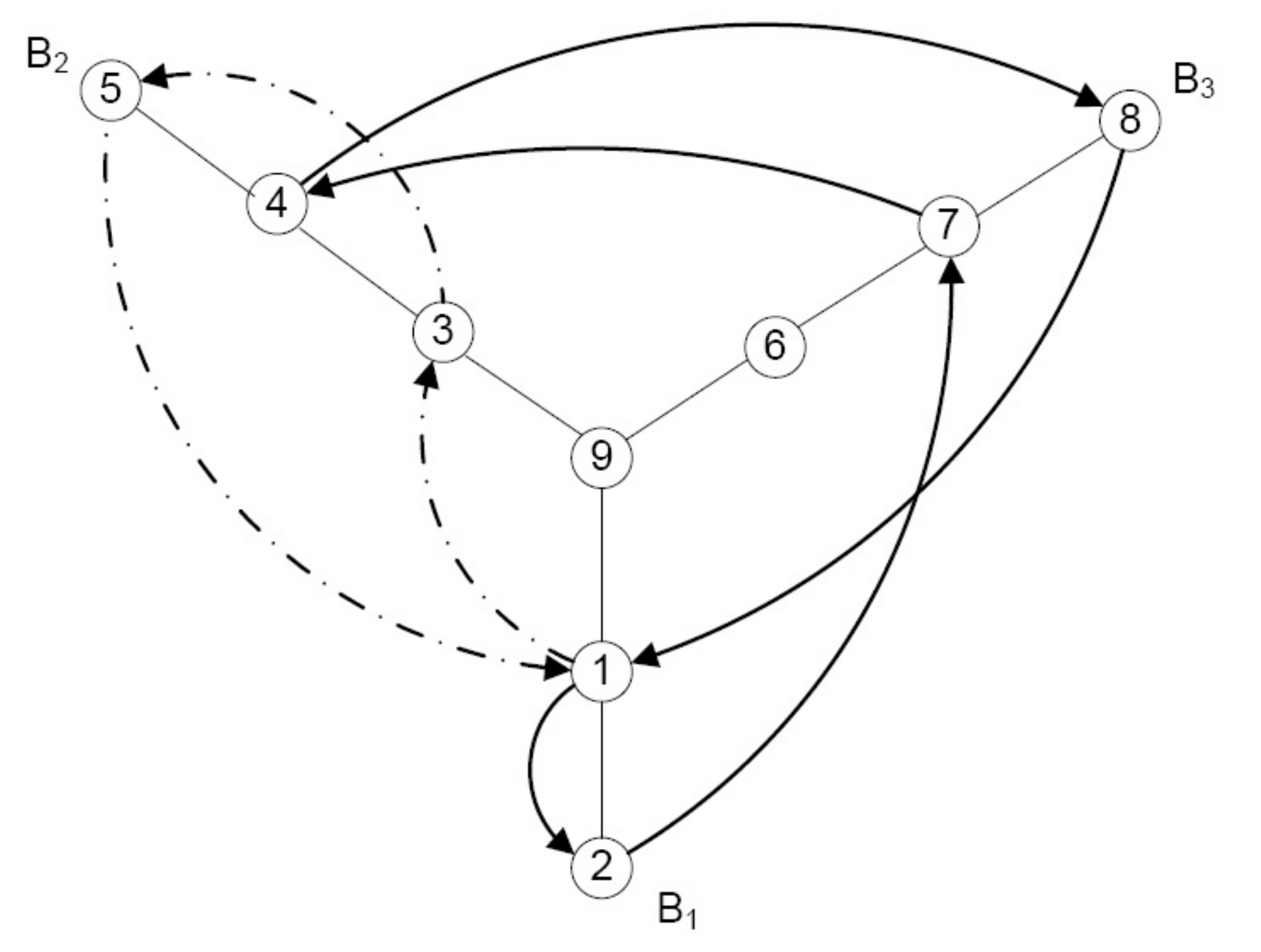}}\quad
}
\caption{An example of the decomposition procedure when $\varphi(\cv, v_t)<\varphi(\cv, v_{l+1})$. The cycle equals $\pi=(1\,3\,5\,2\,7\,4\,8)$, with $v_1=1\in B_1$. The first visited arc between branches is $(1 \to 3)$, i.e., $v_t=1\in B_1,\, v_{t+1}=3\in B_2$; the second visited arc between branches is $(5 \to 2)$, i.e., $v_l=5\in B_2, \, v_{l+1}=2\in B_1$. As $v_{l+1}=2\in B_1$, we decompose (1\,3\,5\,2\,7\,4\,8) into two shorter cycles (1\,2\,7\,4\,8) and (1\,3\,5), i.e., (1\,3\,5\,2\,7\,4\,8)=(1\,2\,7\,4\,8)(1\,3\,5).}
\label{fig:case2a}
\end{figure}

\begin{figure}
\centering
\mbox{
\subfigure[]{\includegraphics[width=6.5cm]{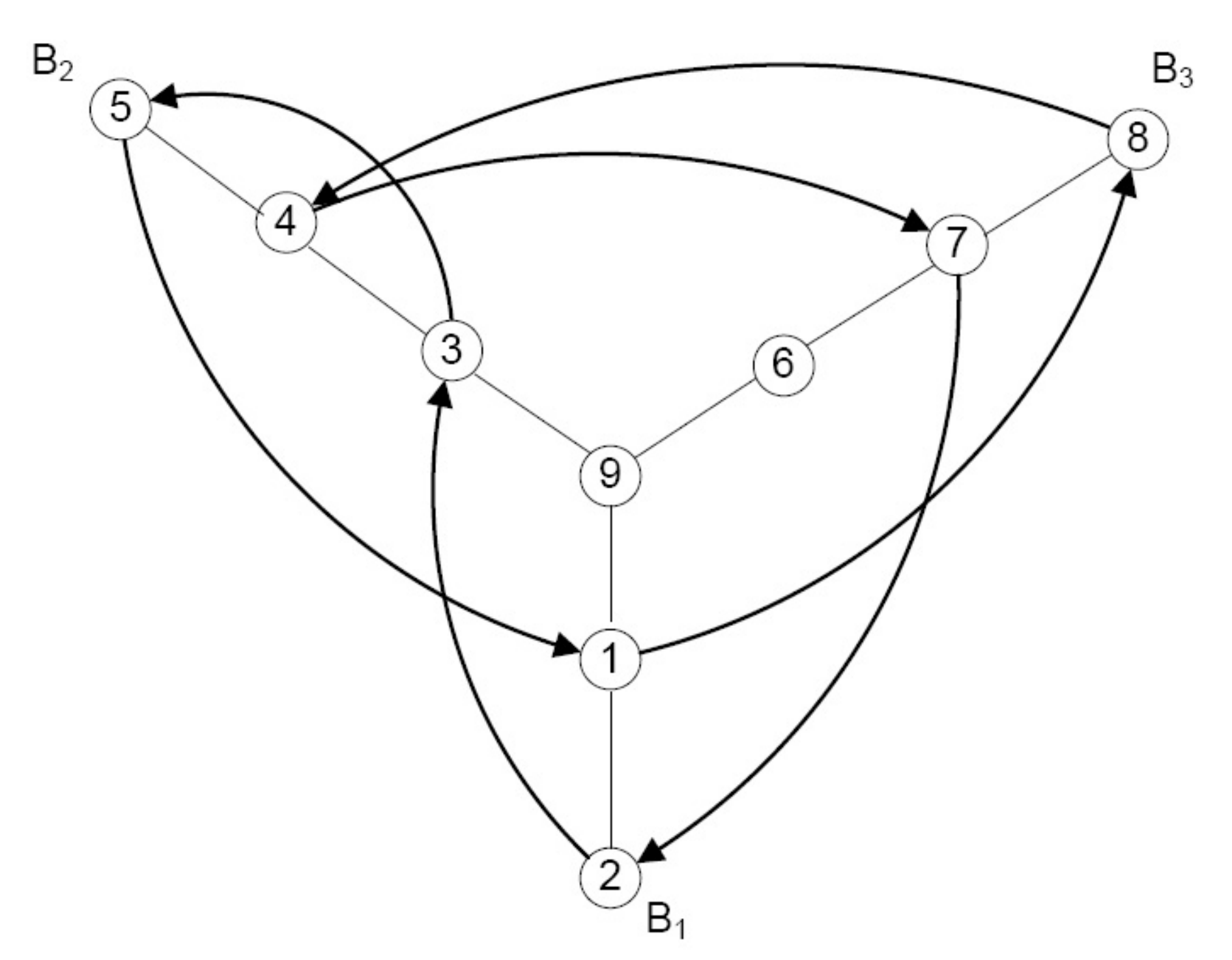}}\quad
\subfigure[]{\includegraphics[width=6.5cm]{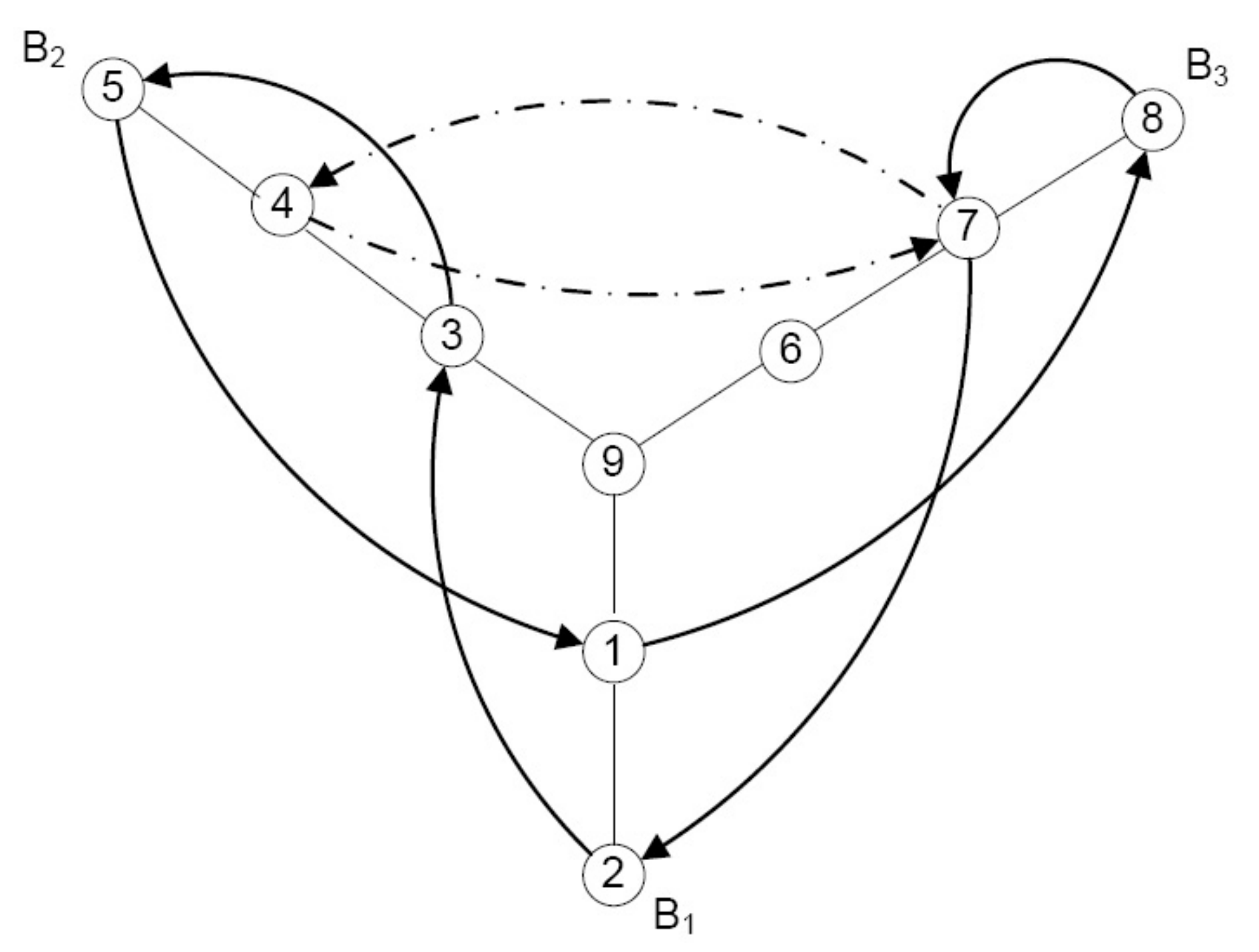}}\quad
}
\caption{An example of the decomposition procedure when $\varphi(\cv, v_t)>\varphi(\cv, v_{l+1})$. The cycle equals $\pi=(1\,8\,4\,7\,2\,3\,5)$, with $v_1=1\in B_1$. The first visited arc between branches is $(1 \to 8)$, i.e., $v_t=1\in B_1,\, v_{t+1}=8\in B_3$; the second visited arc between branches is $(8 \to 4)$, i.e., $v_l=8\in B_3,\, v_{l+1}=4\in B_2$. As $v_{l+1}=4\notin B_1$, we add $(8 \to 4)$ to the stack $S$ and move on to the arc $(4 \to 7)$. Since $v_{l+1}=7\in B_3$, we decompose (1\,8\,4\,7\,2\,3\,5) into two shorter cycles (4\,7) and (1\,8\,7\,2\,3\,5), i.e., (1\,8\,4\,7\,2\,3\,5)=(4\,7)(1\,8\,7\,2\,3\,5).}
\label{fig:case2b}
\end{figure}

Based on the proof of Lemma~\ref{lem:balanced}, we present Algorithm~\ref{alg:btd}, which describes the steps for finding a minimum cost decomposition for a balanced cycle $\pi$. We use a push-down stack data structure, with the standard push, pop, and peek operations, to search for indices $t$ and $l$ with the properties described in the proof of the above lemma. The stack is denoted by $S$. 

\begin{algorithm}
\caption{balanced-td}

\label{alg:btd}
\SetKwFunction{this}{balanced-td}
\SetKwFunction{that}{path-td}
\SetKwFunction{push}{push}
\SetKwFunction{pop}{pop}
\SetKwFunction{peek}{peek}
\SetKwFunction{branch}{Br}
\SetKwFunction{concat}{Concatenate}
\KwIn{A balanced cycle $\pi=(v_1\,v_2\,\cdots\, v_{|\pi|})$}
\KwOut{A minimum cost decomposition of $\pi$}
\BlankLine
$S\gets\emptyset$, $a\gets v_1$
\tcc*[r]{$a$ denotes the last vertex visited}
$T\gets\pi$

\tcc{$T$ will be an adjacent cycle decomposition $T=\pi_{1},\dotsc, \pi_{k}$ of $\pi$ (i.e. $\pi=\pi_1\dotsm\pi_k$), where $\supp(\pi_j),\, j\in[k]$ is contained in a path of the Y-tree}

\While{$\pi$ spans all three branches}
{
	$j\gets \min\{i:\pi^{i+1}(a)\notin \text{\branch{$a$}}\}$\;
	$c_1\gets \pi^j(a)$, $c_2\gets \pi(c_1)$
	\tcc*[r]{ $c_1 \to c_2$ is the first unvisited arc that leaves \branch{$a$}}
	$a\gets c_2$\tcc*[r]{update last visited vertex}
	
	\eIf{$S=\emptyset$}
	{
		\push{$S,c_1 \to c_2$}\;
	}{
		$(b_1 \to b_2) \gets$\peek{$S$}\;
		\eIf{$c_2\in\text{\branch{$b_1$}}$}
		{
			\eIf{$\varphi(\cv, b_1)<\varphi(\cv, c_2)$}
			{
				$\pi''\gets(b_1\,\cdots \, c_1)$ and $\pi^{'}\gets \pi(\pi'')^{-1}$\;
				In $T$, replace $\pi$ with the pair $\pi',\pi''$
				\tcc*[r]{the cycle $\pi$ is decomposed into two cycles}
			}{
				$\pi''\gets(b_2\,\cdots \, c_2)$ and $\pi^{'}\gets (\pi'')^{-1}\pi$\;
				In $T$, replace $\pi$ with the pair $\pi'',\pi'$\;
			}
			\pop{$S$}\;
			$\pi\gets \pi^{'}$\;
		}{
		\push{$S, c_1 \to c_2$}\;
		}
	}}
	$U\gets ()$\;
	\For{$\kappa\in T$}{
		$U\gets$\concat($U$, \that($\kappa$))\;
	}
	\Return $U$\;
\end{algorithm}

We follow the closed walk induced by $\pi$, starting from an arbitrary vertex\footnote{Although the procedure works for an arbitrarily chosen vertex, for ease of demonstration, in Algorithm~\ref{alg:btd}, we simply fix the initial vertex.} in the support of the cycle until encountering a branch-changing arc. Such an arc is pushed into the stack $S$. We keep following the closed walk while pushing arcs in or out of the stack $S$. Only branch-changing arcs may be added to the stack. Once a branch-changing arc in the ``opposite branch direction'' of the arc at the top of the stack is encountered, the two arcs are paired up and removed from the stack. The paired arcs dictate the choice of the transpositions $(v_t,v_{t+1})$ and $(v_l,v_{l+1})$ in the proof of the previous result, and are used to decompose the current cycle. The procedure is repeated until all the vertices of the cycle are visited exactly once. As each vertex is visited once, the running time of the algorithm is linear in $|\pi|$.

\subsection{Case 3: Unbalanced Cycles}
\label{case3}
In this section, we will determine $\diste(\pi)$ in the case where $\pi$ is an unbalanced
cycle. The proof relies on a lower bound for $\diste(\pi)$ for general permutations $\pi$,
which we show to hold with equality when $\pi$ is an unbalanced cycle. To prove this lower
bound, we show that every permutation has a min-cost sorting with a particularly useful
technical property. We first prove a few smaller lemmas.
\begin{lm}\label{lem:commuteself}
  Let $\alpha$ and $\beta$ be permutations.
  If $\supp(\alpha) \cap \supp(\beta\alpha) = \emptyset$, then
  $\alpha\beta = \beta\alpha$.
\end{lm}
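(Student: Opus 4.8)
The plan is to show that $\alpha$ and $\beta$ act on disjoint sets, from which commutativity is immediate. The key observation is that $\beta\alpha$ and $\alpha$ moving disjoint points forces $\beta$ itself to move only points outside $\supp(\alpha)$. First I would record the elementary fact that for any permutation $\gamma$, $\supp(\gamma) = \supp(\gamma^{-1})$, and that $\supp(\gamma\delta) \subseteq \supp(\gamma) \cup \supp(\delta)$; more importantly, if $i \notin \supp(\gamma)$ and $i \notin \supp(\delta)$ then $i \notin \supp(\gamma\delta)$.

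The main step is to prove $\supp(\beta) \subseteq \supp(\beta\alpha) \cup \supp(\alpha)$ and then sharpen this using the disjointness hypothesis. Indeed, $\beta = (\beta\alpha)\alpha^{-1}$, so $\supp(\beta) \subseteq \supp(\beta\alpha) \cup \supp(\alpha^{-1}) = \supp(\beta\alpha) \cup \supp(\alpha)$. Now suppose for contradiction that some $i \in \supp(\beta) \cap \supp(\alpha)$. Since $i \in \supp(\alpha)$ and $\supp(\alpha) \cap \supp(\beta\alpha) = \emptyset$, we have $i \notin \supp(\beta\alpha)$, i.e. $\beta(\alpha(i)) = i$. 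On the other hand $\alpha(i) \in \supp(\alpha)$ (because $\alpha$ permutes its support), so again $\alpha(i) \notin \supp(\beta\alpha)$, giving $\beta(\alpha(\alpha(i))) = \alpha(i)$. But $\beta(\alpha(i)) = i$ means $\alpha(i) = \beta^{-1}(i)$, and applying $\beta$ to both sides of a suitable relation should force $\beta(i) = i$ — let me instead argue directly: from $\beta(\alpha(i)) = i$ we get $\alpha(i) = \beta^{-1}(i)$; since $i \in \supp(\beta) = \supp(\beta^{-1})$ this says $\alpha(i) \neq i$, consistent so far, but now $i = \beta(\alpha(i))$ shows $i \in \mathrm{image}$ of $\beta$ restricted appropriately. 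The cleaner route: $i \notin \supp(\beta\alpha)$ gives $(\beta\alpha)(i) = i$, hence $\beta(i) = \alpha^{-1}(i)$. Since $i \in \supp(\alpha)$, $\alpha^{-1}(i) \neq i$, so $\beta(i) \neq i$, consistent. Apply the same to $j := \alpha^{-1}(i) \in \supp(\alpha)$: then $j \notin \supp(\beta\alpha)$, so $\beta(\alpha(j)) = j$, i.e. $\beta(i) = \alpha^{-1}(i) = j$. So $\beta(i) = j = \alpha^{-1}(i)$ and also $\beta(j) = \alpha^{-1}(j)$. Iterating, $\beta$ restricted to the $\alpha$-orbit of $i$ coincides with $\alpha^{-1}$ on that orbit; but then $\beta\alpha$ fixes the entire $\alpha$-orbit of $i$ while $\alpha$ moves it, which is the desired contradiction with... wait, that is consistent with the hypothesis, not against it.

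Let me restructure: the correct contradiction is with $i \in \supp(\beta)$ not being "new." Actually the right conclusion is simply that $\supp(\beta) \cap \supp(\alpha) = \emptyset$ follows because on $\supp(\alpha)$ the permutation $\beta$ agrees with $\alpha^{-1}$ (shown above via the orbit argument), and one checks this is impossible unless $\supp(\alpha)=\emptyset$: if $\beta|_{\mathcal{O}} = \alpha^{-1}|_{\mathcal{O}}$ for an $\alpha$-orbit $\mathcal{O}$, then $(\beta\alpha)|_{\mathcal{O}} = \mathrm{id}$, which is allowed, so this alone is not contradictory — hence the honest statement is that $\beta$ and $\alpha$ need NOT have disjoint supports, and the lemma must be proved more carefully. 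The actual mechanism: I would instead show directly that $\alpha\beta = \beta\alpha$ by checking equality at every $i \in [n]$, splitting into the cases $i \in \supp(\alpha)$ and $i \notin \supp(\alpha)$, and in the former case using $(\beta\alpha)(i) = i$ and $(\beta\alpha)(\alpha(i)) = \alpha(i)$ (both from the disjointness hypothesis, since $i, \alpha(i) \in \supp(\alpha)$) to compute $\alpha(\beta(i)) = \alpha(\alpha^{-1}(i)) = i = \beta(i) \cdot$... — concretely, $(\beta\alpha)(i)=i$ gives $\beta(i) = \alpha^{-1}(i)$, so $\alpha\beta(i) = \alpha\alpha^{-1}(i) = i$, while $\beta\alpha(i) = i$ too, so they agree; for $i\notin\supp(\alpha)$, $\alpha(i)=i$ so $\beta\alpha(i)=\beta(i)$, and I must show $\alpha\beta(i)=\beta(i)$, i.e. $\beta(i)\notin\supp(\alpha)$, which holds because if $\beta(i)\in\supp(\alpha)$ then $\beta(i)\notin\supp(\beta\alpha)$, forcing $\beta(\alpha(\beta(i)))=\beta(i)$ hence $\alpha(\beta(i))=\beta(i)$, contradicting $\beta(i)\in\supp(\alpha)$. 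The main obstacle, as the above meandering shows, is precisely this bookkeeping: correctly leveraging that the hypothesis controls $\beta\alpha$ on the image $\alpha(\supp(\alpha)) = \supp(\alpha)$, and being careful that the hypothesis is about $\supp(\beta\alpha)$, not $\supp(\beta)$. Once the case analysis is set up cleanly it is routine, but getting the quantifiers and the direction of the implications right is the delicate part.
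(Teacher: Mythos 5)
Your final argument is correct, and it rests on the same key observation as the paper's proof: the hypothesis forces $\beta$ to agree with $\alpha^{-1}$ on $\supp(\alpha)$ (because $\alpha$ maps $\supp(\alpha)$ onto itself, so $\beta(\alpha(x))=x$ for all $x\in\supp(\alpha)$ pins down $\beta$ on all of $\supp(\alpha)$). The difference is packaging: the paper writes $\beta=\gamma\alpha^{-1}$ with $\gamma=\beta\alpha$, notes $\supp(\gamma)\cap\supp(\alpha)=\emptyset$ by hypothesis, and concludes in one line since $\alpha$ commutes with both factors; you instead verify $\alpha\beta=\beta\alpha$ pointwise, splitting on $i\in\supp(\alpha)$ versus $i\notin\supp(\alpha)$. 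Your second case needs the extra claim that $\beta$ maps the complement of $\supp(\alpha)$ into itself, which the factorization argument gets for free. Two remarks. First, there is a small cancellation slip there: from $\beta(\alpha(\beta(i)))=\beta(i)$, injectivity of $\beta$ gives $\alpha(\beta(i))=i$, not $\alpha(\beta(i))=\beta(i)$; the contradiction still follows (since $i\notin\supp(\alpha)$ gives $\beta(i)=\alpha^{-1}(i)=i$, contradicting $\beta(i)\in\supp(\alpha)$), but the step as written is wrong. Second, you were right to abandon the attempt to prove $\supp(\beta)\cap\supp(\alpha)=\emptyset$ --- that is false in general (take $\beta=\alpha^{-1}$), and recognizing this is exactly why the lemma cannot be reduced to the trivial disjoint-support case; the clean fix is the paper's splitting of $\beta$ into a part equal to $\alpha^{-1}$ and a part disjoint from $\alpha$.
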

\begin{proof}
  Since $\beta\alpha(x) = x$ for all $x \in \supp(\alpha)$, we
  see that $\beta = \alpha^{-1}$ on $\supp(\alpha)$. Thus,
  we can write $\beta = \gamma\alpha^{-1}$ where $\supp(\gamma) \cap \supp(\alpha) = \emptyset$.
  Since $\alpha$ commutes with both $\gamma$ and $\alpha^{-1}$, we see that
  $\alpha$ commutes with $\beta$. \myqed
\end{proof}
\begin{lm}\label{lem:nocycling}
  Let $\sigma_1, \ldots, \sigma_p$ be transpositions, and let $\rho =
  \sigma_1\cdots\sigma_p$.  If for all $i \in [p]$ we have $\supp(\sigma_i)
  \cap \supp(\sigma_i\sigma_{i+1}\sigma_{i+2}\cdots\sigma_{i+(p-1)}) =
  \emptyset$, where subscripts are taken modulo $p$, then $\rho$ is the
  identity permutation.
\end{lm}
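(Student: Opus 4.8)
The plan is to argue by induction on $p$, using Lemma~\ref{lem:commuteself} to "peel off" the first transposition. For $p=1$ the hypothesis reads $\supp(\sigma_1)\cap\supp(\sigma_1)=\emptyset$, which forces $\sigma_1$ to be the identity, so $\rho$ is the identity; this is vacuously the base case (or one can start at $p=2$, where the hypothesis gives $\sigma_1=\sigma_2^{-1}=\sigma_2$ and hence $\rho=\sigma_1\sigma_2=e$). For the inductive step, consider $\rho=\sigma_1\sigma_2\cdots\sigma_p$ and set $\alpha=\sigma_1$, $\beta=\sigma_2\sigma_3\cdots\sigma_p$, so that $\rho=\alpha\beta=\sigma_1\beta$. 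The $i=1$ instance of the hypothesis says exactly $\supp(\sigma_1)\cap\supp(\sigma_1\sigma_2\cdots\sigma_p)=\supp(\alpha)\cap\supp(\rho)=\emptyset$; but $\rho=\alpha\beta=\beta\alpha$-type manipulation is not yet available, so instead I would apply Lemma~\ref{lem:commuteself} with this $\alpha$ and $\beta$: since $\supp(\alpha)\cap\supp(\beta\alpha)=\supp(\sigma_1)\cap\supp(\rho)=\emptyset$, we get $\alpha\beta=\beta\alpha$, i.e. $\rho=\beta\sigma_1=\sigma_2\sigma_3\cdots\sigma_p\sigma_1$.

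Now I would like to invoke the inductive hypothesis on the length-$(p-1)$ product obtained by deleting $\sigma_1$, but that is not quite what we have; instead, observe that $\rho=\sigma_2\sigma_3\cdots\sigma_p\sigma_1$ is a \emph{cyclic rotation} of the original word. The key point is that the hypothesis of the lemma is invariant under cyclic rotation of $(\sigma_1,\dots,\sigma_p)$: the condition $\supp(\sigma_i)\cap\supp(\sigma_i\sigma_{i+1}\cdots\sigma_{i+(p-1)})=\emptyset$ for all $i$ (indices mod $p$) is a statement about the cyclic sequence, so rotating the starting point changes nothing. Therefore it suffices to handle one convenient rotation. Using $\rho=\sigma_2\cdots\sigma_p\sigma_1$ and the same argument with $\alpha=\sigma_2$, we peel off $\sigma_2$ as well, and so on. After peeling off $\sigma_1,\dots,\sigma_{p-1}$ in turn (each step justified by a fresh application of Lemma~\ref{lem:commuteself}, valid because the rotated hypothesis still holds), we are left with $\rho=\sigma_p$ together with the condition $\supp(\sigma_p)\cap\supp(\sigma_p)=\emptyset$ inherited from the $i=p$ instance, forcing $\sigma_p=e$ and hence $\rho=e$.

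To make this rigorous I would phrase it as a single induction: assuming the statement for $p-1$, I'd show that once $\alpha\beta=\beta\alpha$ is established, $\rho=\sigma_1\beta=\beta\sigma_1$, and then consider $\beta=\sigma_2\cdots\sigma_p$ on its own. The subtlety — and the main obstacle — is verifying that the length-$(p-1)$ family $(\sigma_2,\dots,\sigma_p)$ satisfies the hypothesis of the lemma (with modulus $p-1$), since the original hypothesis mixes in $\sigma_1$. This is where I'd use the commutation relation: for $i\in\{2,\dots,p\}$, the original condition $\supp(\sigma_i)\cap\supp(\sigma_i\cdots\sigma_{i+(p-1)})=\emptyset$ involves a window of length $p$ wrapping through $\sigma_1$; using $\alpha=\sigma_1$ commutes with $\sigma_i\sigma_{i+1}\cdots$ (because $\supp(\sigma_1)$ is disjoint from the support of the relevant product, as it is disjoint from $\supp(\rho)$ and the pieces combine appropriately) one can excise $\sigma_1$ from each such window and reduce it to the length-$(p-1)$ condition. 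Carefully checking this support-bookkeeping — that deleting $\sigma_1$ from each cyclic window preserves disjointness — is the one place where care is genuinely needed; everything else is formal. Once that reduction is in hand, the inductive hypothesis gives $\sigma_2\cdots\sigma_p=e$, and then $\rho=\sigma_1\cdot e=\sigma_1$, with $\supp(\sigma_1)\cap\supp(\sigma_1)=\emptyset$ (the $i=1$ condition, now that $\sigma_1\cdots\sigma_p=\sigma_1$) giving $\sigma_1=e$, so $\rho=e$.
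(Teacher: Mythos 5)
There is a genuine gap, and it is not the deferred ``support-bookkeeping'' you flag --- it is that the reduction you are trying to bookkeep is false. Commuting $\sigma_1$ past the tail gives $\rho=\sigma_2\cdots\sigma_p\sigma_1$, a \emph{cyclic rotation}; it does not delete $\sigma_1$ from the product. Your plan then needs the length-$(p-1)$ family $(\sigma_2,\dots,\sigma_p)$ to satisfy the lemma's hypothesis, but it never can when the $\sigma_i$ are genuine transpositions: if the original family satisfies the hypothesis then (as the lemma asserts) $\sigma_1\cdots\sigma_p=e$, so $\sigma_2\cdots\sigma_p=\sigma_1\neq e$, whereas the inductive hypothesis would force that product to be $e$. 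Concretely, take $p=4$ with $\sigma_1=\sigma_3=(1\,2)$ and $\sigma_2=\sigma_4=(3\,4)$: every cyclic product is $e$, so the hypothesis holds, yet $(\sigma_2,\sigma_3,\sigma_4)$ has product $(1\,2)$ and fails the length-$3$ condition at $\sigma_3$, since $\supp(\sigma_3)\cap\supp(\sigma_3\sigma_4\sigma_2)=\{1,2\}$. The same structural problem shows up at the end of your write-up, where you conclude $\supp(\sigma_1)\cap\supp(\sigma_1)=\emptyset$ and hence $\sigma_1=e$ --- impossible for a transposition, and a sign that the induction, if it went through, would prove the hypothesis unsatisfiable, which the example above refutes.

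The good news is that the first half of your argument already contains the paper's proof, and you should simply finish it rather than switch to induction on $p$. You correctly observe that the hypothesis is invariant under cyclic rotation and that Lemma~\ref{lem:commuteself} lets you rotate repeatedly, so $\rho=\sigma_i\sigma_{i+1}\cdots\sigma_{i+p-1}$ for \emph{every} $i$ (indices mod $p$). Now read the hypothesis again: for each $i$ it says exactly that $\supp(\sigma_i)$ is disjoint from the support of that $i$-th rotated product, i.e.\ $\supp(\sigma_i)\cap\supp(\rho)=\emptyset$ for all $i$. Since $\supp(\rho)\subseteq\supp(\sigma_1)\cup\cdots\cup\supp(\sigma_p)$, this forces $\supp(\rho)=\emptyset$, so $\rho=e$. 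No peeling, no reduction to $p-1$, and no base case is needed.
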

\begin{proof}
  Repeatedly applying Lemma~\ref{lem:commuteself} shows that
  for all $i$,
  \[ \rho = \sigma_i\sigma_{i+1}\cdots\sigma_{i+p-1}, \]
  again with subscripts modulo $p$.
  Thus, by hypothesis, we have $\supp(\rho) \cap \supp(\sigma_i) =
  \emptyset$ for all $i$.  Since $\supp(\rho) \subset \supp(\sigma_1)
  \cup \cdots \cup \supp(\sigma_p)$, it follows that $\supp(\rho) =
  \emptyset$.  \myqed
\end{proof}
We now state and prove our the min-cost sorting lemma.
\begin{lm}\label{lem:supp}
  Every permutation $\pi$ has a minimum-cost sorting $(\tau_1, \ldots, \tau_k)$ such that
  for all $i$,
  \[ \supp(\tau_i) \cap \supp(\tau_i\tau_{i+1}\cdots\tau_n) \neq \emptyset. \]
\end{lm}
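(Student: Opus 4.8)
\section*{Proof proposal}

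The plan is to argue by contradiction from a minimum‑cost sorting, extracting from a violation of the property a \emph{free $2$‑cycle} and then peeling it off cheaply; the key auxiliary facts are Lemmas~\ref{lem:commuteself} and~\ref{lem:nocycling}. Assume $\pi\neq e$ (for $\pi=e$ the empty sorting works vacuously). Let $T=(\tau_1,\dots,\tau_k)$ be any minimum‑cost sorting of $\pi$; I claim it satisfies the conclusion (reading the statement's $\tau_n$ as $\tau_k$). Suppose not, and let $i$ be an index with $\supp(\tau_i)\cap\supp(\tau_i\tau_{i+1}\cdots\tau_k)=\emptyset$; write $\tau_i=(a\,b)$ and $\rho=\tau_{i+1}\cdots\tau_k$. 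Then $i<k$, since $\supp(\tau_k)\cap\supp(\tau_k)=\supp(\tau_k)\neq\emptyset$. From $\supp(\tau_i)\cap\supp(\tau_i\rho)=\emptyset$ one gets that $\tau_i\rho$ fixes both $a$ and $b$, hence $\rho(a)=b$, $\rho(b)=a$, and $\rho$ agrees with $\tau_i\rho$ off $\{a,b\}$; consequently $\rho\tau_i$ also fixes $a$ and $b$, so $\supp(\tau_i)\cap\supp(\rho\tau_i)=\emptyset$, and Lemma~\ref{lem:commuteself} (with $\alpha=\tau_i$, $\beta=\rho$) yields $\tau_i\rho=\rho\tau_i$. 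Writing $\mu:=\tau_i\rho=\rho\tau_i$, we obtain that $\mu$ fixes $a$ and $b$ and that $\rho=\mu\,(a\,b)$ with $a,b\notin\supp(\mu)$; that is, $(a\,b)$ is a free $2$‑cycle of $\rho$.

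Now comes the reduction. Since $\pi^{-1}=\tau_1\cdots\tau_k=(\tau_1\cdots\tau_{i-1})\,\tau_i\rho=(\tau_1\cdots\tau_{i-1})\,\mu$, replacing the tail $(\tau_i,\dots,\tau_k)$ of $T$ (whose product is $\mu$) by any decomposition of $\mu$ again produces a sorting of $\pi$. Choosing a minimum‑cost decomposition of $\mu$, the resulting sorting has cost $\sum_{j<i}\varphi(\tau_j)+\diste(\mu)$, whereas $\operatorname{cost}(T)=\sum_{j<i}\varphi(\tau_j)+\varphi(a,b)+\sum_{j>i}\varphi(\tau_j)\ge \sum_{j<i}\varphi(\tau_j)+\varphi(a,b)+\diste(\rho)$, because $(\tau_{i+1},\dots,\tau_k)$ is a decomposition of $\rho$. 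Thus a contradiction with the minimality of $\operatorname{cost}(T)$ follows as soon as one establishes the inequality $\diste(\mu)\le\diste(\rho)$, i.e.\ that deleting a free $2$‑cycle cannot increase the weighted transposition distance. Everything up to this point — the contradiction hypothesis, the commutation via Lemma~\ref{lem:commuteself}, and the algebra of the replacement — is routine.

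The main obstacle is exactly the inequality $\diste(\mu)\le\diste(\rho)$ for $\rho=\mu\,(a\,b)$ with $a,b\notin\supp(\mu)$. The intended route is to show that \emph{some} minimum‑cost decomposition of $\rho$ resolves the $2$‑cycle $(a\,b)$ by a single standalone transposition equal to $(a\,b)$: deleting that transposition then decomposes $\mu$ at cost $\diste(\rho)-\varphi(a,b)$, giving the inequality with room to spare. This is precisely where Lemma~\ref{lem:nocycling} should do its work — among the transpositions of a cost‑optimal decomposition of $\rho$ that touch $\{a,b\}$, those forming a closed loop on $\{a,b\}$ must, by the no‑cycling principle, be collapsible, so the loop can be contracted to the single transposition $(a\,b)$ without increasing total weight; the transpositions disjoint from $\{a,b\}$ are untouched. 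The delicate part, and the genuine content of the argument, is controlling how transpositions may interleave around the pair $\{a,b\}$ in a cost‑optimal decomposition so that such a contraction is always available.
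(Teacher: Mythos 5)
Your setup is correct as far as it goes: from a violating index $i$ you correctly deduce that $\rho=\tau_{i+1}\cdots\tau_k$ satisfies $\rho=\mu\,(a\,b)$ with $\mu=\tau_i\rho$ and $a,b\notin\supp(\mu)$, and the cost accounting for replacing the tail by a decomposition of $\mu$ is fine. But the argument has a genuine gap, which you yourself flag: everything hinges on the inequality $\diste(\mu)\le\diste(\rho)$ (in fact you need the strict version $\diste(\mu)<\diste(\rho)+\varphi(a,b)$, since subadditivity already gives $\diste(\mu)\le\diste(\rho)+\varphi(a,b)$ and equality would yield no contradiction). This ``deleting a disjoint $2$-cycle cannot increase the distance'' statement is not routine for weighted transposition distances: the paper itself shows (Fig.~\ref{mb}) that merging disjoint cycles can strictly reduce cost relative to sorting them separately, so one cannot argue componentwise, and the sketched contraction of ``loops touching $\{a,b\}$'' in an optimal decomposition of $\rho$ into a single standalone $(a\,b)$ is exactly the kind of interleaving control that would need a careful proof. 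Lemma~\ref{lem:nocycling} does not deliver it: that lemma only says a sequence in which \emph{every} transposition is bad multiplies to the identity; it says nothing about isolating one $2$-cycle inside an arbitrary optimal decomposition. Note also that your claim is strictly stronger than the lemma: you assert that \emph{every} minimum-cost sorting has the support property, whereas the lemma only needs \emph{some} such sorting to exist.

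The paper's proof avoids this obstacle entirely by never replacing anything with a hypothetical cheaper decomposition. It only \emph{permutes} the transpositions of the given minimum-cost sorting: a bad $\tau_B$ commutes past the suffix (Lemma~\ref{lem:commuteself}), so moving it to the end preserves both the product and the cost; the leftmost bad index is nondecreasing under this operation, and if the process failed to terminate the entire suffix from some index onward would satisfy the hypothesis of Lemma~\ref{lem:nocycling}, hence multiply to the identity and be deletable --- contradicting minimality. That argument needs no monotonicity of $\diste$ under removal of disjoint cycles. To salvage your route you would have to supply a full proof of $\diste(\mu)\le\diste(\rho)$ for $\rho=\mu\,(a\,b)$ with $a,b\notin\supp(\mu)$, and the most natural way to do so appears to be via the support property itself, which would make the argument circular.
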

\begin{proof}
  Let $(\tau_1, \ldots, \tau_k)$ be a minimum-cost sorting of $\pi$.
  We will show that there is some $\eta \in S_k$ such that
  $(\tau_{\eta(1)}, \tau_{\eta(k)})$ is a sorting of $\pi$
  with the desired support property. Clearly, any such sorting
  is also a minimum-cost sorting.

  We define an algorithm to manipulate the sorting, and write $\tau_i$
  to refer to the transposition \emph{currently} in the $i$th position
  of the sorting. Say that a transposition $\tau_i$ is \emph{bad} in
  the current sorting if $\supp(\tau_i) \cap
  \supp(\tau_i\cdots\tau_{k}) = \emptyset$. Our goal is to permute the
  transpositions so that there is no bad transposition. Consider the
  following algorithm:
  \begin{algorithm}
    \caption{fix-sorting}
    \label{alg:fix}
    \KwIn{A min-cost sorting $(\tau_1, \ldots, \tau_{k})$ of $\pi$.}
    \KwOut{A min-cost sorting of $\pi$ with no bad transpositions.}
    \While{there is a bad transposition}
    {
      $B \gets \min\{i:\text{$\tau_i$ is bad}\}$\;
      Move $\tau_B$ to the end of the sorting, keeping all other transpositions in the same relative order;
    }
  \end{algorithm}

  By Lemma~\ref{lem:commuteself}, if $\tau_B$ is bad before we execute
  Step~3, then
  \[ \tau_B\tau_{B+1}\cdots\tau_k = \tau_{B+1}\cdots\tau_{k}\tau_B,\]
  so the product $\tau_1\cdots\tau_k$ does not change after executing Step~2. Thus, at all times
  $(\tau_1, \cdots, \tau_k)$ is a minimum-cost sorting of $\pi$. If the algorithm
  terminates, then it yields a minimum-cost sorting of $\pi$ with no bad
  transpositions, as desired. We now show that the algorithm terminates.

  Let $B_i$ denote the index $B$ chosen in Step~2 on the $i$th
  iteration of the algorithm.  The key observation is that the
  sequence $B_1, B_2, \ldots$ is nondecreasing: if $B$ is the index of
  the leftmost bad transposition and $C < B$, then by
  Lemma~\ref{lem:commuteself}, the rearrangement in Step~2 does not
  alter the product $\tau_{B}\cdots\tau_k$, so $\tau_{C}$
  remains good after Step~2. Thus, if the algorithm does not
  terminate, then the sequence $B_i$ is eventually constant: the
  algorithm is repeatedly choosing the same index $B$. Let $(\tau_1, \ldots, \tau_k)$
  be the current sorting when we first choose the index $B$, and let $\sigma_1, \ldots, \sigma_p = \tau_B, \ldots, \tau_k$.
  Since the index $B$ is bad for the rest of the algorithm's run, we have \[\supp(\sigma_i) \cap
  \supp(\sigma_i\sigma_{i+1}\cdots\sigma_{i+(p-1)}) = \emptyset\]
  for all $i \in [p]$, where indices are taken modulo $p$.  By Lemma~\ref{lem:nocycling}, it follows that
  $\sigma_1\cdots\sigma_p = e$. Now $(\tau_{1}, \ldots, \tau_{B-1})$ is a
  lower-cost sorting of $\pi$, contradicting the assumption that
  $(\tau_1, \ldots, \tau_k)$ is a minimum-cost sorting. \myqed
\end{proof}
We are now ready to prove the main result of the section.
\begin{lm}
For an unbalanced permutation $\pi$, we have 
\begin{equation}\label{eq:whatever5}
\diste (\pi)\ge\frac {1}{2}\Disp (\pi)+ \min_{v_{i}\in \supp(\pi)}  \varphi(\cv,v_{i}),
\end{equation}
Furthermore, if $\pi$ has only one non-trivial cycle, the inequality is satisfied with equality.
\label{lem:unbalanced}
\end{lm}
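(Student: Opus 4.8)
The plan is to prove the lower bound~\eqref{eq:whatever5} for an arbitrary unbalanced permutation by using the min-cost sorting guaranteed by Lemma~\ref{lem:supp}, and then to exhibit a matching decomposition when $\pi$ has a single non-trivial cycle. For the lower bound, fix a minimum-cost sorting $(\tau_1,\dotsc,\tau_k)$ of $\pi$ with the support property $\supp(\tau_i)\cap\supp(\tau_i\tau_{i+1}\cdots\tau_k)\neq\emptyset$ for all $i$, and set $\pi_0=\pi$, $\pi_j=\pi_{j-1}\tau_j$, $\tau_j=(a_j\,b_j)$. By Lemma~\ref{lm:lm1-improv}, $\diste(\pi)-\tfrac12\Disp(\pi)=\tfrac12\sum_j\ief{a_j}{b_j}{\pi_{j-1}}$, so it suffices to show $\sum_j\ief{a_j}{b_j}{\pi_{j-1}}\ge 2\min_{v\in\supp(\pi)}\varphi(\cv,v)$. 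The key idea should be that if \emph{every} transposition in the sorting were efficient, then (by the characterization in Lemma~\ref{lm:efficiency}, parts $ii$--$iii$) each $\tau_j$ would be supported on a path through appropriate tree vertices, and one could argue that the arcs of $\pi$ crossing between any two branches of the Y-tree would have to be balanced — contradicting that $\pi$ is unbalanced. So at least one transposition is inefficient. The quantitative strengthening is that the total inefficiency is at least $2\min_v\varphi(\cv,v)$: intuitively, to ``create'' an arc crossing the center that was not there before, or to fix the branch imbalance, some transposition must pay an inefficiency that reaches all the way to the central vertex. I would make this precise by tracking, along the sorting, the quantity $l^{\pi_{j-1}}_{ij}-l^{\pi_{j-1}}_{ji}$ for the unbalanced branch pair $(B_i,B_j)$: an efficient transposition (again via Lemma~\ref{lm:efficiency}$iii$, which forces $a_j,b_j$ to lie on specific tree paths) changes this difference by at most the ``right'' amount, and the bounds~\eqref{eq:ineff-nonneg} let me convert a forced imbalance-correcting step into an inefficiency bounded below by a path weight passing through $\cv$, hence at least $2\min_v\varphi(\cv,v)$ after accounting for the factor of $2$.

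For the upper bound when $\pi$ is a single unbalanced cycle, I would give an explicit decomposition: pick $v^*=\argmin_{v\in\supp(\pi)}\varphi(\cv,v)$, and use one transposition of the form $(\cv\,v^*)$ (or, if $\cv\notin\supp(\pi)$, a transposition moving an element of $\pi$ ``through'' the center) to convert $\pi$ into a cycle $\pi'$ that \emph{contains} the central vertex, at extra cost exactly $\varphi(\cv,v^*)$ beyond what is accounted for in the displacement bookkeeping. Concretely, writing $\pi=(\cdots v^*\cdots)$, the product $\pi(\cv\,v^*)$ or $(\cv\,v^*)\pi$ is a cycle on $\supp(\pi)\cup\{\cv\}$ that contains $\cv$; by Lemma~\ref{lem:central} this new cycle can be decomposed at cost exactly $\tfrac12\Disp$ of itself. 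Then I would compute $\Disp$ of the enlarged cycle in terms of $\Disp(\pi)$ and $\varphi(\cv,v^*)$ and check the total comes out to $\tfrac12\Disp(\pi)+\varphi(\cv,v^*)$, matching~\eqref{eq:whatever5}. This requires verifying that the inserted transposition is efficient with respect to the enlarged cycle (so no cost is wasted), which should follow from Lemma~\ref{lem:onpath} applied to the factorization into the $2$-cycle $(\cv\,v^*)$ and the remaining cycle, using that $\cv$ lies on the relevant path since $v^*$ is the closest support vertex to $\cv$.

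I expect the main obstacle to be the \emph{quantitative} lower bound on the total inefficiency — going from ``some transposition is inefficient'' (an easy parity/balancedness argument) to ``the inefficiencies sum to at least $2\min_v\varphi(\cv,v)$.'' The delicate point is that inefficiency can be spread across several transpositions, so I cannot simply point at one bad step; I need an amortized argument, most naturally a potential-function argument on the branch-crossing imbalance $l_{ij}-l_{ji}$ (and possibly on $\sum_v\varphi(\cv,v)$ restricted to a branch), showing that this potential is nonzero for $\pi$, zero for $e$, and can only be moved toward zero by ``spending'' inefficiency proportional to a weight of a center-crossing path. Handling the case $\cv\notin\supp(\pi)$ versus $\cv\in\supp(\pi)$ uniformly, and making sure the factor-of-two bookkeeping between $\ief{}{}{}$ and $\varphi(\cv,\cdot)$ is consistent, will be where most of the care is needed; the two explicit decomposition sub-cases (center in or out of the support) for the upper bound are routine by comparison, relying directly on Lemmas~\ref{lem:central} and~\ref{lem:onpath}.
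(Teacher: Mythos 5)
Your overall architecture matches the paper's: the lower bound via a min-cost sorting with the support property of Lemma~\ref{lem:supp}, inefficiency accounting via Lemma~\ref{lm:lm1-improv}, and an amortized potential argument; the upper bound by merging the central vertex into the cycle through the closest support element and invoking Lemma~\ref{lem:central}. However, both halves have genuine gaps. For the lower bound, the potential you propose to track --- the arc-count imbalance $l_{ij}-l_{ji}$ --- is not the one that makes the argument work: the quantity to be controlled is a \emph{weight}, and the paper's potential is $\centerineff(\sigma)=\min_{v\in\supp(\sigma)}\varphi(\cv,v)$ for unbalanced $\sigma$ and $0$ otherwise. The real difficulty, which your sketch does not address, is that this minimum can \emph{decrease} during the sorting when a transposition introduces a fresh support element closer to $\cv$; that decrease must itself be charged to the inefficiency of the transposition that introduces it (the paper's third case, where $\pi_j$ remains unbalanced and one endpoint of $\tau_j$ lies outside $\supp(\pi_{j-1})$ but nearer the center). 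The remaining substance is a case analysis of the vertex configurations of $a$, $b$, $\pi_{j-1}(a)$, $\pi_{j-1}(b)$ that can change the branch balance, each yielding inefficiency at least $2\varphi(\cdot,\cv)$ via \eqref{eq:ineff-nonneg}; none of this is carried out, and you flag it yourself as the main obstacle.

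For the upper bound there is a concrete case your single merging transposition misses. Writing $v^*$ for the minimizer, the identity $\Disp(\pi(\cv\,v^*))=\Disp(\pi)$ (respectively for $(\cv\,v^*)\pi$) holds only when $\cv$ lies on the tree path between $v^*$ and its cycle-successor (respectively predecessor), i.e., when that neighbor sits on a different branch. If both $\pi(v^*)$ and $\pi^{-1}(v^*)$ lie on the same branch as $v^*$ --- e.g., the cycle $(3\,2\,4\,5\,7)$ with $2,3,4$ on one branch, $2$ closest to $\cv$, and $5,7$ on the other two branches --- then inserting $\cv$ next to $v^*$ inflates the displacement by $2\varphi(\cv,v^*)$, and your total becomes $\tfrac12\Disp(\pi)+2\varphi(\cv,v^*)$, overshooting the claimed bound. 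The paper resolves exactly this case by first factoring off the maximal path-supported subcycle $(v_j\,\dotsm\,v_k)$ following $v^*$ on its own branch (a displacement-preserving split by Lemma~\ref{lem:onpath}, since $v^*$ is the element of that branch closest to $\cv$) and recursing on the remaining cycle, in which the successor of $v^*$ now lies on a different branch.
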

\begin{proof}
To prove Lemma \ref{lem:unbalanced}, we first derive the lower bound~\eqref{eq:whatever5}, which we subsequently show in a constructive manner to be achievable. Intuitively, the bound suggests that one should
first merge the central vertex into the cycle via a smallest cost transposition and then decompose the newly formed cycle. 
Despite the apparent simplicity of the claim, the proof of the result is rather technical.

Let $T^*=(\tau_{1}, \cdots,\tau_{|T^*|})$ be a minimum cost sorting of $\pi$
satisfying the conclusion of Lemma~\ref{lem:supp}. Define
$\pi_j=\pi_{j-1}\tau_j$, for all $1\le j\le |T^*|$, with $\pi_0=\pi$.
For all $j$ in $\{1, \ldots, |T^*|\}$, we have $\pi_{j-1} = \tau_{|T^*|}\cdots\tau_{j}$,
so by the choice of $\pi$, we have $\supp(\tau_j) \cap \supp(\pi_{j-1}) \neq \emptyset$. Finally, let
\[
f_j = \frac12\sum_{i=1}^j\ief* {\tau_i} {\pi_{i-1}} + \centerineff(\pi_j),
\]
where
\[
\centerineff(\sigma)=\begin{cases}
0, & \quad\mbox{if }\sigma\mbox{ is balanced}\\
\min_{v\in\supp(\sigma)}\varphi(\cv,v), & \quad\mbox{else}
\end{cases}  
\]
for any permutation $\sigma$.
Below, we show that $f_j$ is non-decreasing, implying that
\[
\min_{v\in\supp(\pi)}\varphi(\cv,v)=f_{0}\le f_{|T^{*}|}= \frac12\sum_{i=1}^{|T^*|}\ief* {\tau_i} {\pi_{i-1}}=\diste (\pi)-\frac{1}{2}\Disp (\pi),
\]
where the last equality follows from Lemma~\ref{lm:lm1-improv}. This proves~\eqref{eq:whatever5}.

To show that $f_j$ is non-decreasing, it suffices to show that 
\begin{equation}\label{eq:we6}
\frac12 \ief* {\tau_j} {\pi_{j-1}} \ge \centerineff(\pi_{j-1}) - \centerineff(\pi_{j}).
\end{equation}
If $\pi_{j-1}$ is balanced or $\cv\in\supp(\pi_{j-1})$, the right side of \eqref{eq:we6} is non-positive and so~\eqref{eq:we6} holds trivially. Hence, we assume $\pi_{j-1}$ is unbalanced. There are three cases to consider for $\pi_j$: balanced with $\cv\not\in\supp(\pi_{j})$; $\cv\in\supp(\pi_{j})$; and unbalanced. We prove \eqref{eq:we6} for each case separately:

\paragraph{$\pi_j$ is balanced with $\cv\not\in\supp(\pi_{j})$.} 
In this case, since $\centerineff(\pi_j)=0$, we must show 
\[\frac12 \ief* {\tau_j} {\pi_{j-1}} \ge \min_{v_{i}\in \supp(\pi_{j-1})}  \varphi(\cv,v_{i})\] 
The transposition $\tau_{j}=(a\,b)$ in $T^*$ changes the balance of arcs between two branches. In other words, for some $i,k$, we have $l_{ik}^{\pi_{j-1}}-l_{ki}^{\pi_{j-1}}\neq l_{ik}^{\pi_{j}}-l_{ki}^{\pi_{j}}$. 
Since the balance of arcs changes, one cannot encounter any of the following placements of the vertices $a$, $b$, $a'=\pi_{j-1}(a)$, and $b'=\pi_{j-1}(b)$ on the branches of the Y-tree:
\vspace{-.1cm}
\begin{itemize}
\item $\br[a]=\br[b]$;
\item $\br[\pi_{j-1}(a)]=\br[\pi_{j-1}(b)]$;
\item $\br[a]=\br[\pi_{j-1}(a)]$ and $\br[b]=\br[\pi_{j-1}(b)]$;
\item $\br[a]=\br[\pi_{j-1}(b)]$ and $\br[b]=\br[\pi_{j-1}(a)]$.
\end{itemize}

Since $\br[a]\neq\br[\pi_{j-1}(a)]$ or $\br[b]\neq\br[\pi_{j-1}(b)]$, by symmetry, we may assume $\br[a]\neq\br[\pi_{j-1}(a)]$. The cases not covered in the previous list satisfying $\br[a]\neq\br[\pi_{j-1}(a)]$ are shown in Fig.~\ref{fig:balance-changing}. Note that in the figure, the exact ordering of vertices on the same branch is irrelevant. 

\begin{figure}
\centering
\subfigure[]{\includegraphics[height=4cm]{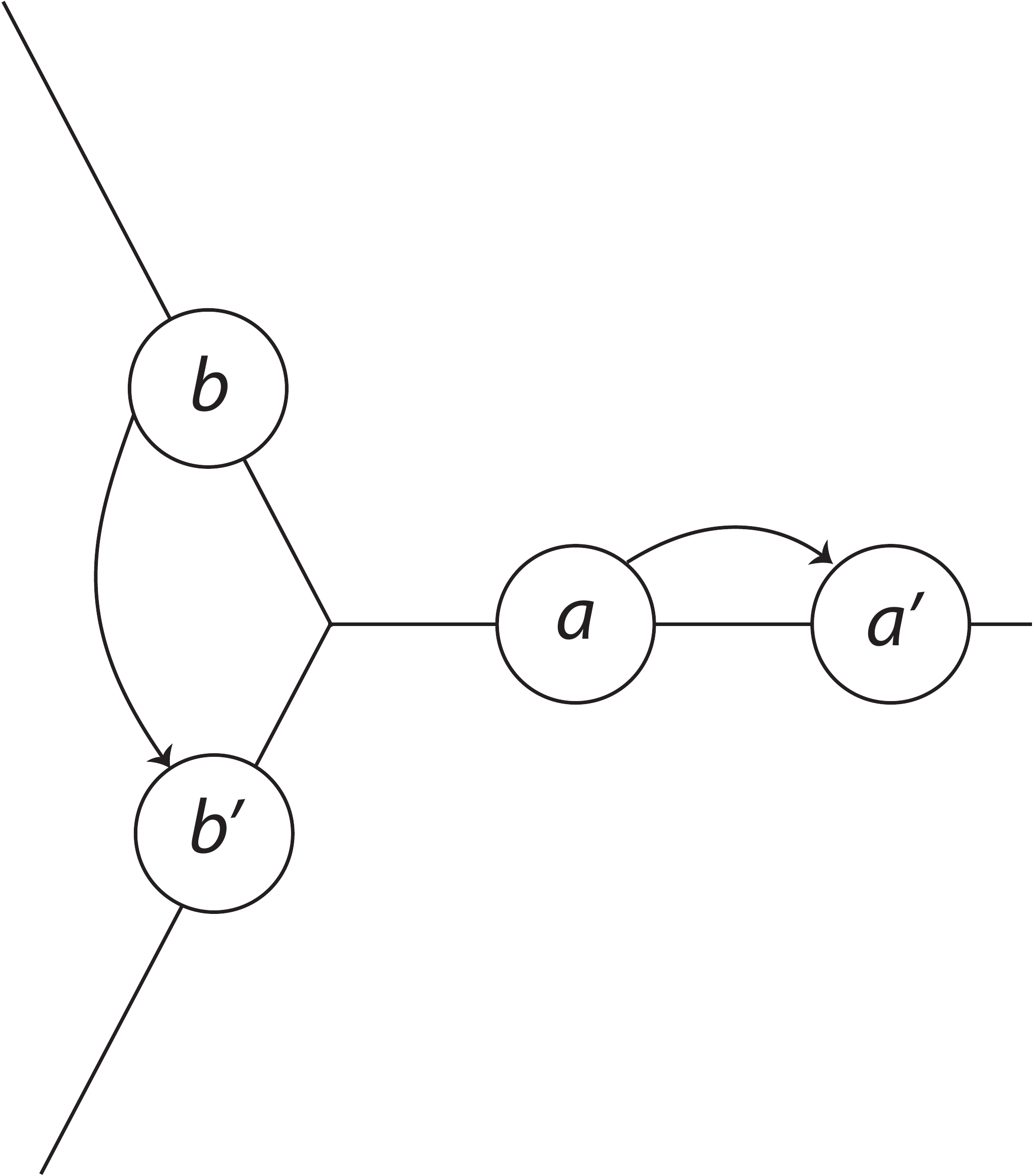}}\quad
\subfigure[]{\includegraphics[height=4cm]{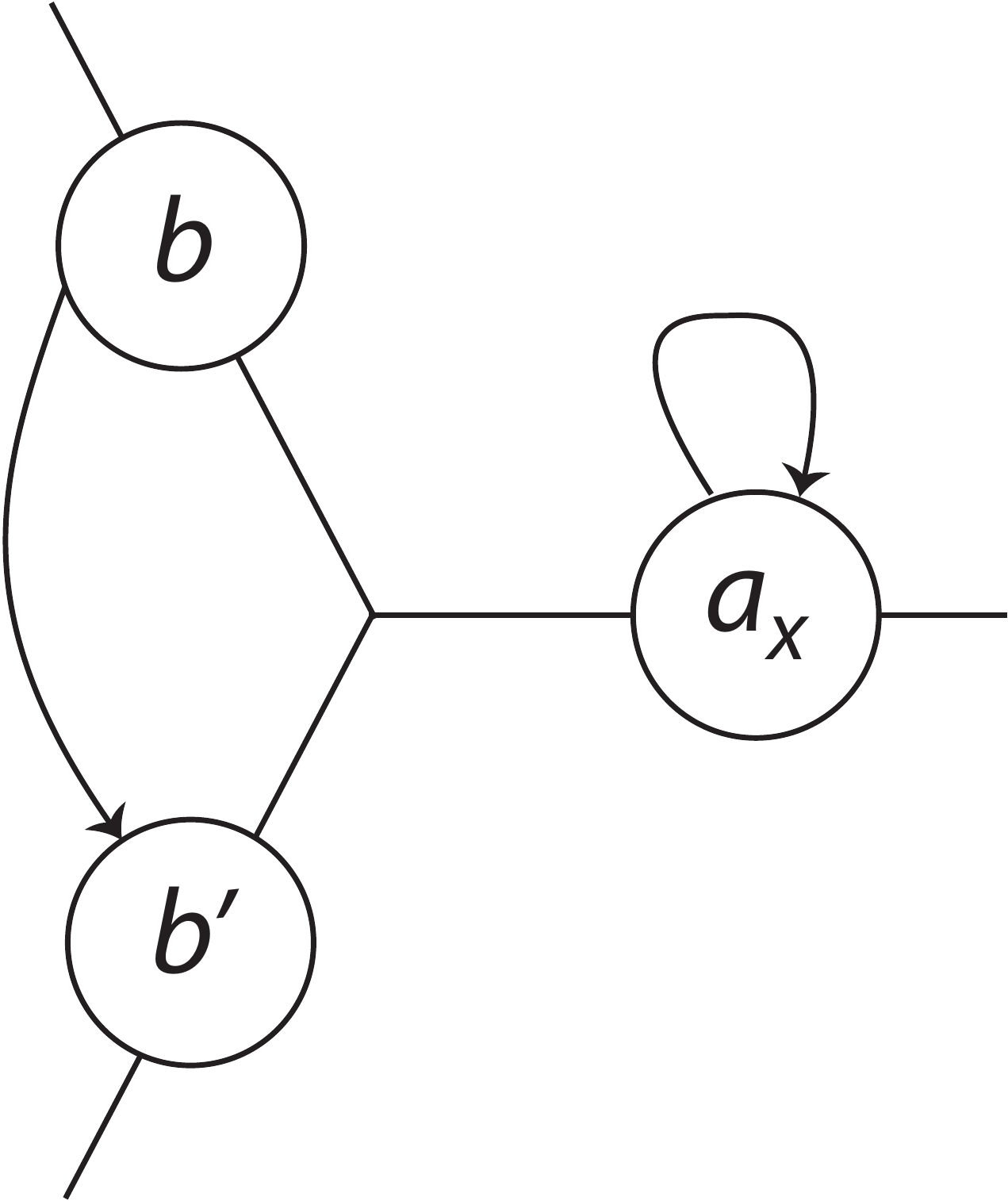}}\quad
\subfigure[]{\includegraphics[height=4cm]{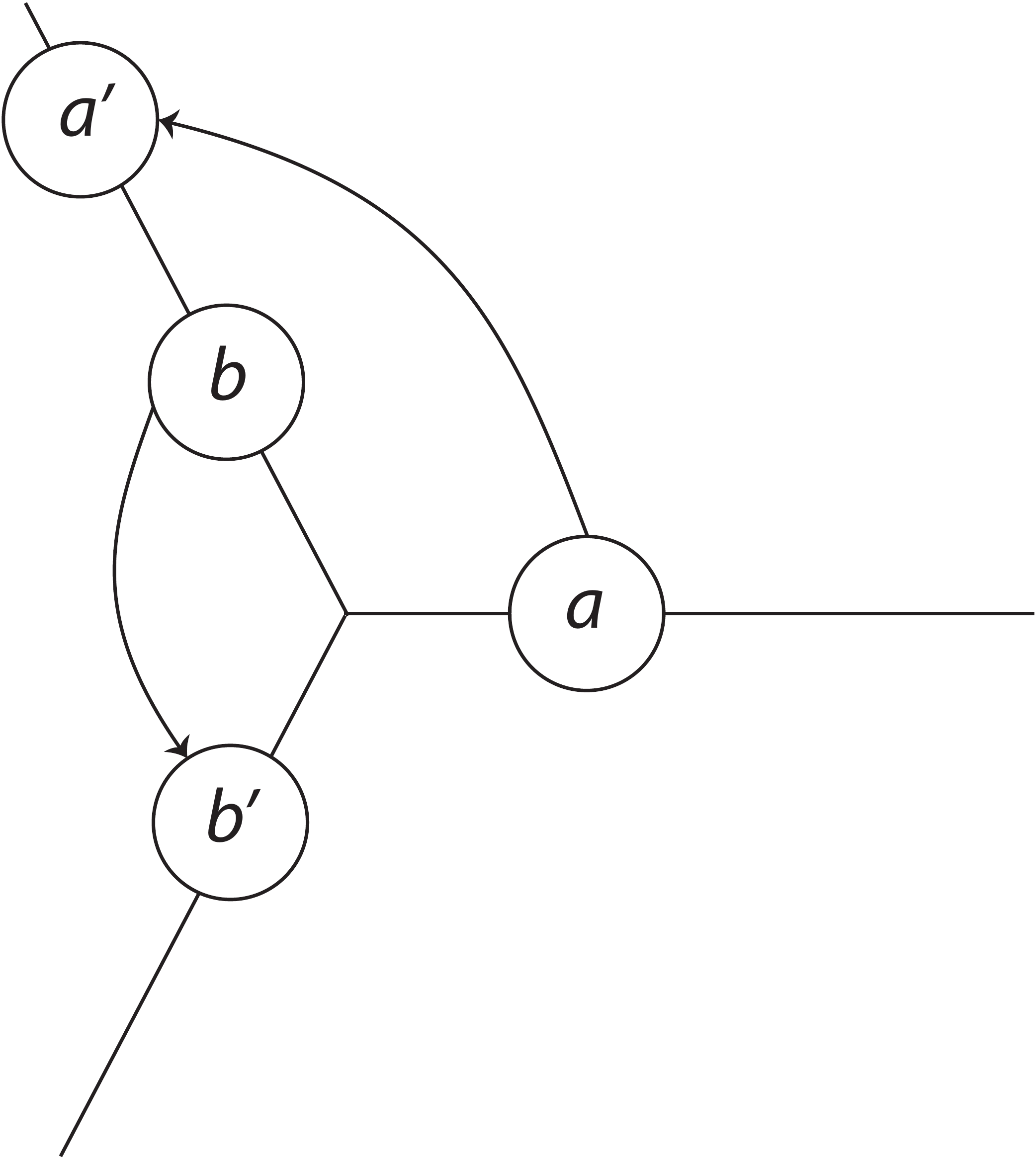}}
\subfigure[]{\includegraphics[height=4cm]{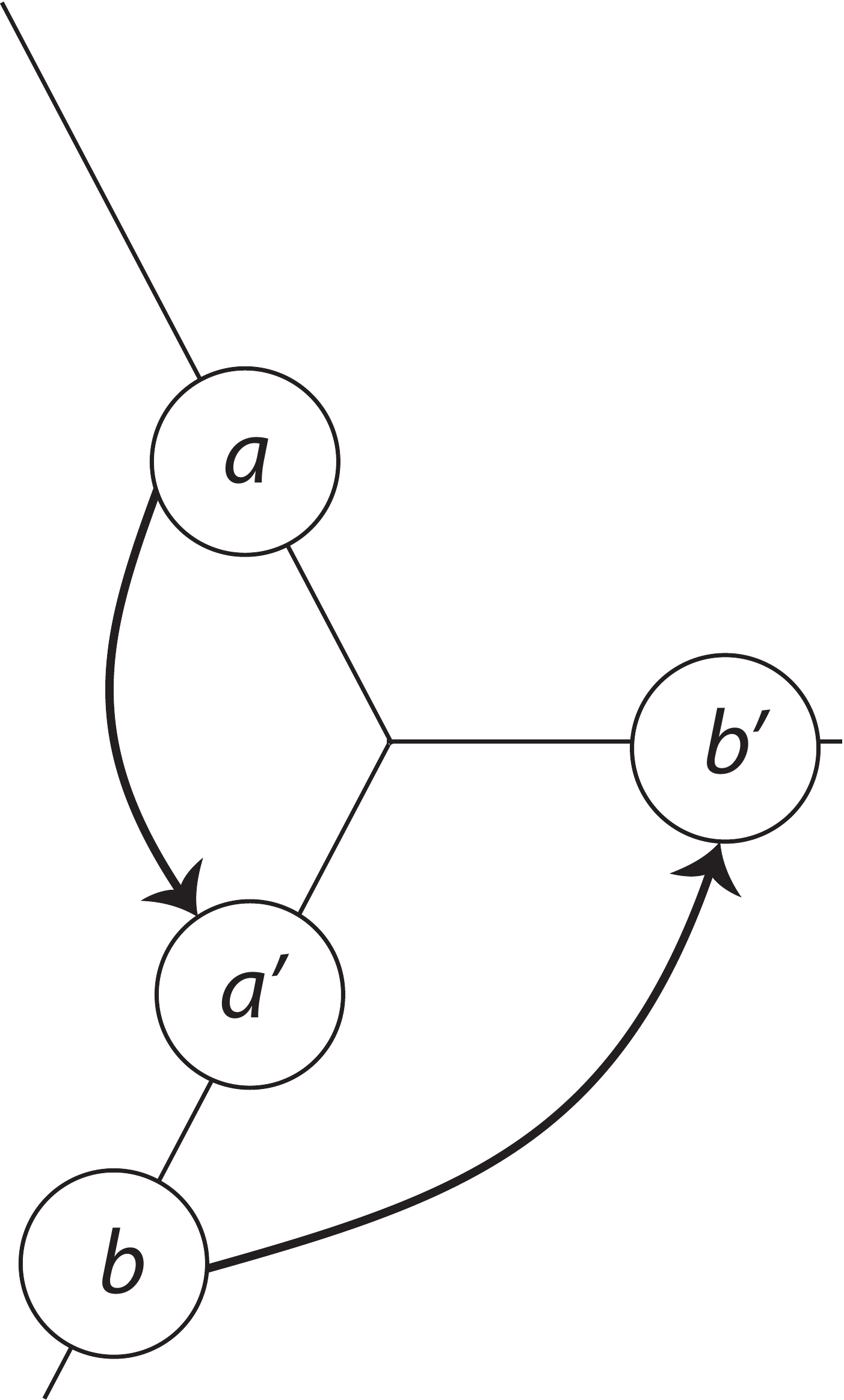}}
\caption{Illustration for the proof of Lemma~\ref{lem:unbalanced}. Depicted are the configurations for $a$, $b$, $a'=\pi_{j-1}(a)$ and $b'=\pi_{j-1}(b)$ that change the balance between branches.}
\label{fig:balance-changing}
\end{figure}

For cases 1), 2) and 3), we have
\[\varphi(a,b)+\varphi(b,\pi_{j-1}(a))-\varphi(a,\pi_{j-1}(a))=2\varphi(b,\cv)\ge 2\min_{v_{i}\in \supp(\pi_{j-1})}  \varphi(\cv,v_{i}),\]
and for case 4), we have
\[\varphi(a,b)+\varphi(a,\pi_{j-1}(b))-\varphi(b,\pi_{j-1}(b))=2\varphi(a,\cv) \ge 2\min_{v_{i}\in \supp(\pi_{j-1})}  \varphi(\cv,v_{i}).\]
Hence, by \eqref{eq:ineff-nonneg}, it follows that
\[\frac12\ief a b {\pi_{j-1}}\ge \min_{v_{i}\in \supp(\pi_{j-1})}  \varphi(\cv,v_{i}).\]

\paragraph{$\pi_j$ contains the central vertex, i.e., $\cv\in\supp(\pi_{j})$.}
In this case, since $\centerineff(\pi_j)=0$, we must show 
\[\frac12 \ief* {\tau_j} {\pi_{j-1}} \ge \min_{v_{i}\in \supp(\pi_{j-1})}  \varphi(\cv,v_{i})\] 
Since $\pi_{j-1}$ is unbalanced, it does not contain the central vertex. Since $\supp(\pi_j) \subset \supp(\pi_{j-1}) \cup \supp(\tau_j)$, this implies $\cv \in \supp(\tau_j)$. Write $\tau_j=(\cv\,b)$. Since $\supp(\tau_j) \cap \supp(\pi_{j-1}) \neq \emptyset$, we have $b \in \supp(\pi_{j-1})$. Then, by \eqref{eq:ineff-nonneg}, and the fact that $\pi_{j-1}(\cv)=\cv$,
\[
\ief {\cv} b {\pi_{j-1}}\ge\varphi(\cv,b)+\varphi(b,\cv)-\varphi(\cv,\cv)=2\varphi(\cv,b)\ge2\min_{v_{i}\in \supp(\pi_{j-1})}  \varphi(\cv,v_{i}).
\] 

\paragraph{$\pi_j$ is unbalanced and $\cv\not\in\supp(\pi_j)$.}
In this case, we must show 
\begin{equation}\label{eq:we7}
\frac12 \ief* {\tau_j} {\pi_{j-1}} \ge \min_{v_{i}\in \supp(\pi_{j-1})}  \varphi(\cv,v_{i}) -  \min_{v_{i}\in \supp(\pi_{j})}  \varphi(\cv,v_{i}).
\end{equation} 
Let $\tau_j=(a\,b)$. If $a,b\in\supp(\pi_{j-1})$, then $\supp(\pi_j)\subseteq\supp(\pi_{j-1})$ and thus 
\[
\min_{v_{i}\in \supp(\pi_{j-1})}  \varphi(\cv,v_{i}) \le \min_{v_{i}\in \supp(\pi_{j})}  \varphi(\cv,v_{i}).
\]
Hence the right side of \eqref{eq:we7} is non-positive and its left side is non-negative, so it holds.

Since $\supp(\tau_j) \cap \supp(\pi_{j-1}) \neq \emptyset$, we cannot
have both $a\not\in\supp(\pi_{j-1})$ and $b\not\in\supp(\pi_{j-1})$.
So as the final case, we may assume $a\not\in\supp(\pi_{j-1})$ but
$b\in\supp(\pi_{j-1})$. We may also assume that $\phi(a, v_c) < \phi(v_i, v_c)$
for all $v_i \in \supp(\pi_{j-1})$, since otherwise the right side of \eqref{eq:we7}
is again nonpositive, as $\supp(\pi_j) \subset \supp(\pi_{j-1}) \cup \{a\}$. Since $\pi_{j-1}(a) = a$, applying \eqref{eq:ineff-nonneg} yields
\begin{align*}
  \frac{1}{2}\ief*{\tau_j}{\pi_{j-1}} &\geq \frac{1}{2}\left(\varphi(a,b) + \varphi(b, \pi_{j-1}(a)) - \varphi(a, \pi_{j-1}(a))\right) \\
  &= \varphi(a,b) \\
  &\geq \varphi(b,v_c) - \varphi(a, v_c) \\
  &= \varphi(b,v_c) - \min_{v_i \in \supp(\pi_{j})} \varphi(v_i, v_c) \\
  &\geq \min_{v_i \in \supp(\pi_{j-1})} \varphi(v_i, v_c) - \min_{v_i \in \supp(\pi_{j})} \varphi(v_i, v_c).
\end{align*}
Note that the second inequality follows from the triangle inequality applied to $a$, $b$, and $v_c$.

This completes the proof of the fact that $f_j$ is non-decreasing, and
the proof of \eqref{eq:whatever5}.  We now show if $\pi$ has only one
(non-trivial) cycle, the lower bound of \eqref{eq:whatever5} is
achievable.

Consider the cycle $\pi=(v_1\,\dotsm\,v_{|\pi|})$ and let $v_j$ be the element of $\supp(\pi)$ that minimizes $\varphi(\cv,v_i)$. There are two cases to consider: either $v_{j}$ lies on the same branch as $v_{j+1}$, or it lies on a different branch (See Fig.~\ref{fig:10}.). If it lies on a 
different branch, we let $\pi=\pi_1(v_j\,\cv)$, where 
\begin{equation}\label{eq:whatever3}
\pi_1 = (v_1\,\dotsm\,v_{j}\,\cv\,v_{j+1}\,\dotsm\,v_{|\pi|}).
\end{equation}
Since $\pi_1$ contains the center,
$
\diste (\pi_{1})=\frac{1}{2}\Disp (\pi_{1})=\frac{1}{2}\Disp (\pi).
$
Hence, $\diste (\pi)\le\frac{1}{2}\Disp (\pi_{1})+\varphi(\cv,v_{j})$.

If $v_j$ lies on the same branch as $v_{j+1}$, let
\[
k = j - 1 + \min\{h:\pi^h(v_j)\not\in\branch(v_j),\}
\]
so that $v_{k+1}$ is the closest vertex following $v_j$ in the cycle that does not lie on the same branch as $v_j$. We then write $\pi = \pi_1\pi_2,$ where
\begin{equation}\begin{split}\label{eq:whatever4}
\pi_1 &= (v_1\,\dotsm\,v_j\,v_{k+1}\,\dotsm\,v_{|\pi|}),\\
\pi_2 &= (v_j\,\dotsm\,v_k).
\end{split}\end{equation}
Note that the cycle $\pi_1$ is unbalanced, but $v_j$ and $v_{k+1}$ lie on different branches. Hence, based on the analysis of the previous case, one has
\[
\diste (\pi_1)\le\frac12\Disp (\pi_1)+\varphi(\cv,v_j).
\]
As the support of $\pi_2$ is contained in a single branch, Lemma~\ref{lm:efficiency} implies that
\[
\diste (\pi_2)\le\frac12\Disp (\pi_2).
\]
Since $v_j$ and $v_k$ lie on the same branch and $\phi(v_j, v_c) \leq \phi(v_k, v_c)$, we see that $v_j$
lies on the $v_k,v_c$-path and, hence, the $v_k,v_{k+1}$ path. By Lemma~\ref{lem:onpath}, we have $D(\pi) = D(\pi_1) + D(\pi_2)$,
so that
\begin{align*}
\diste (\pi) &\le \diste (\pi_1) + \diste (\pi_2)\\
&= \frac12\Disp (\pi_1)+\varphi(\cv,v_j)+\frac12\Disp (\pi_2)\\
&= \frac12 \Disp (\pi)+\varphi(\cv,v_j).
\end{align*}
This completes the proof of the lemma.
\myqed\end{proof}

\begin{algorithm}
\SetKwFunction{concat}{Concatenate}
\SetKwFunction{this}{unbalanced-td}
\SetKwFunction{path}{path-td}
\SetKwFunction{central}{central-td}
\SetKwFunction{concat}{Concatenate}
\caption{unbalanced-td}
\label{alg:utd}
\KwIn{A cycle $\pi=(v_1\,v_2\,\cdots\, v_{|\pi|})$}
\KwOut{A minimum cost decomposition of $\pi$}
\BlankLine
{$v_{j}\gets\min_{v_{i}\in \supp(\pi)}\varphi{(\cv,v_i)}$}\;
\eIf{$v_j$ and $v_{j+1}$ lie on different branches}
{\Return \concat(\central($(v_1\,\dotsm\,v_{j}\,\cv\,v_{j+1}\,\dotsm\,v_{|\pi|})$), $(v_j\,\cv)$)\;}
{\Return \concat(\this($(v_1\,\dotsm\,v_j\,v_{k+1}\,\dotsm\,v_{|\pi|})$), \path($(v_j\,\dotsm\,v_k)$))\;}
\end{algorithm}

\begin{figure}
\centering
\subfigure[]{\includegraphics[width=4cm]{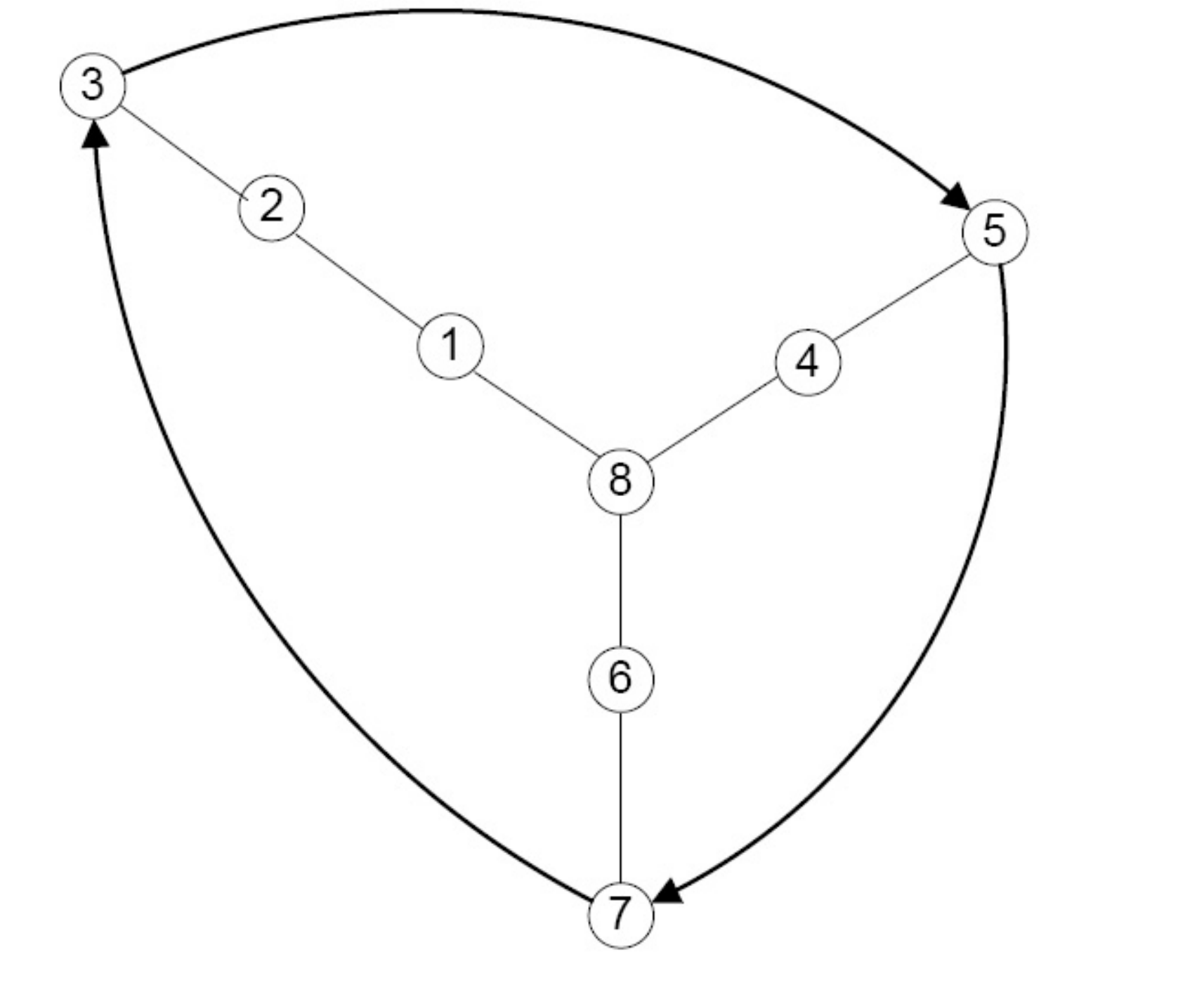}}
\subfigure[]{\includegraphics[width=4cm]{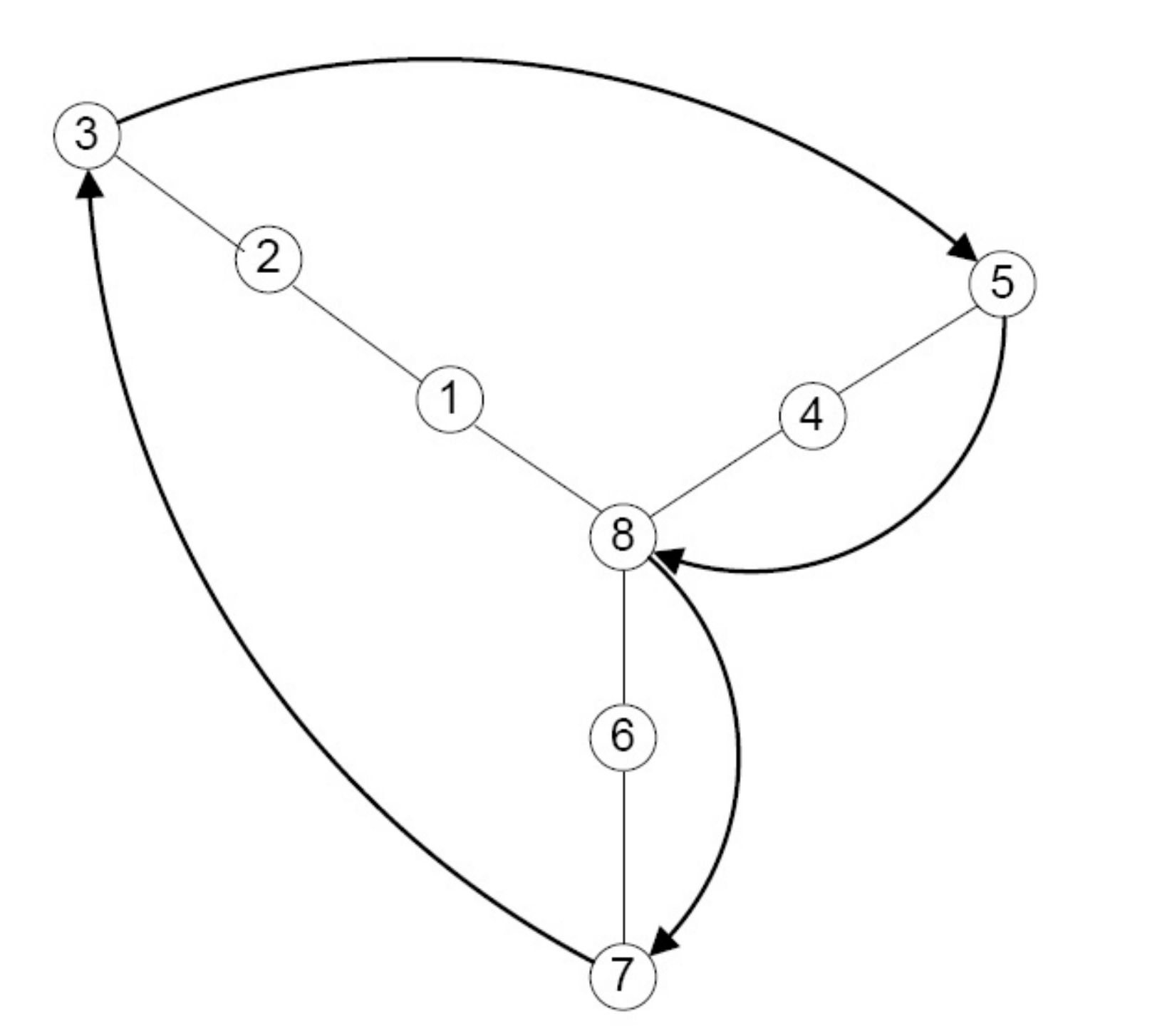}}
\subfigure[]{\includegraphics[width=4cm]{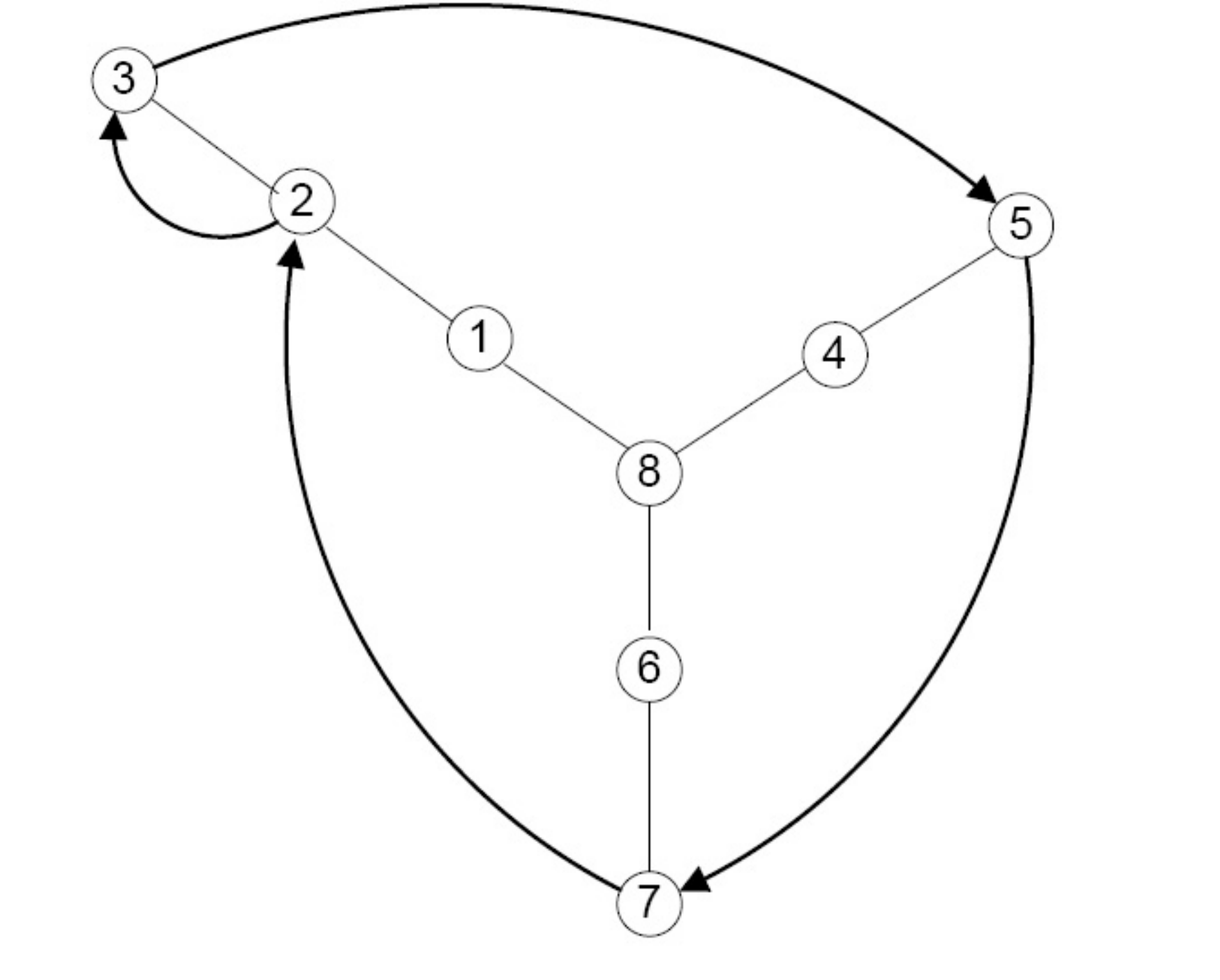}}
\subfigure[]{\includegraphics[width=4cm]{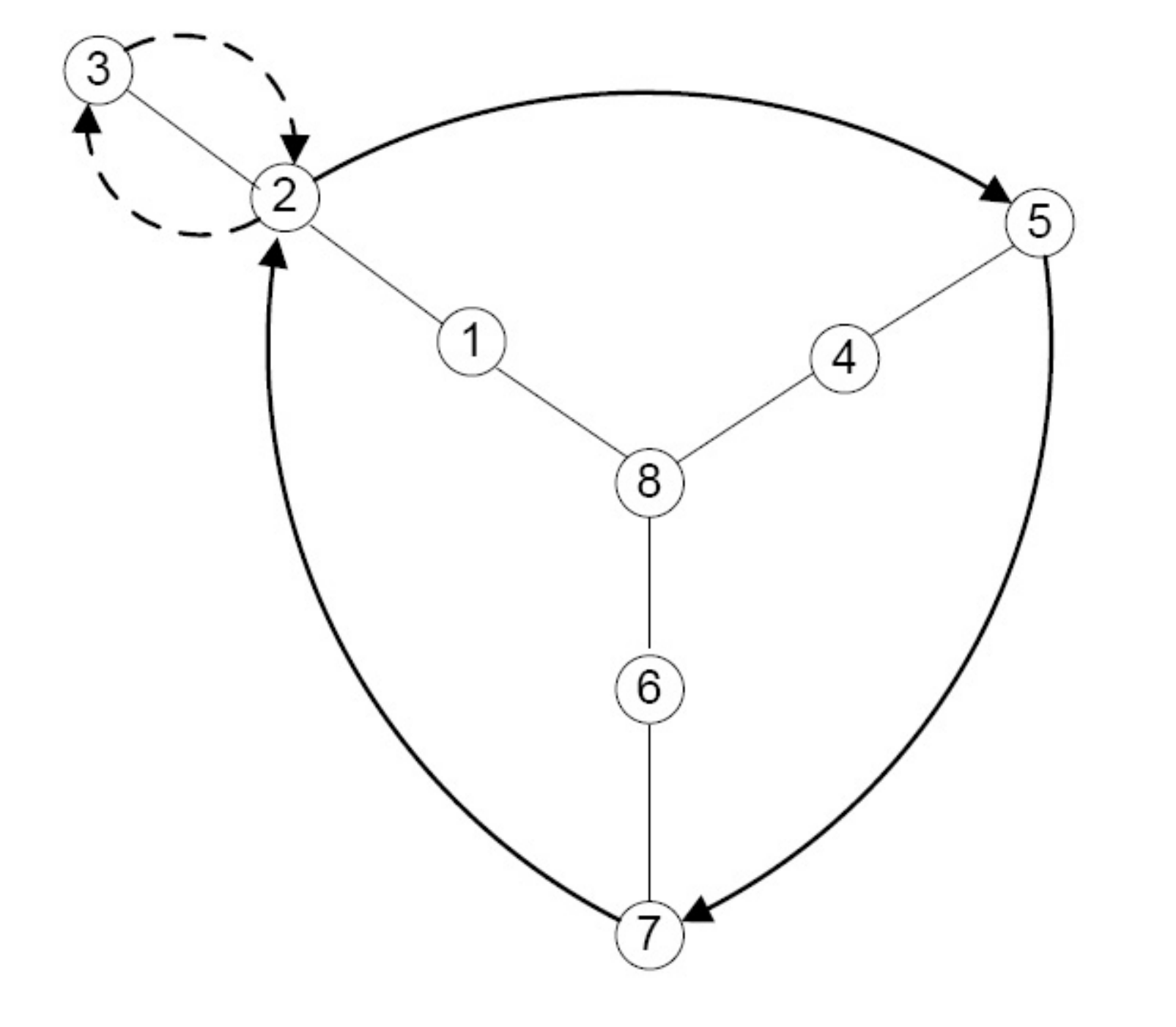}}

\caption{1\,) The cycle $\pi=(3\,5\,7)$. We have $v_j=5$, and $v_j,$ $v_{j+1}$ lie on different branches; 2\,) The cycle $\pi_1 = (3\,5\,8\,7)$ (See Equation~\eqref{eq:whatever3});\hspace{.2cm}
3\,) The cycle $\pi=(2\,3\,5\,7)$. We have $v_j=2$, and $v_j,$ $v_{j+1}$ lie on the same branch; 4\,) The cycles $\pi_1=(2\,5\,7)$ and $\pi_2=(2\,3)$ (See Equation~\eqref{eq:whatever4}).}
\label{fig:10}
\end{figure}

As a final remark, note that the exposition in Subsection \ref{case1}, Subsection \ref{case2} and Subsection \ref{case3} implicitly assumes that certain properties (such as balancedness) of a specific cycle are known beforehand. However, if this is not the case, additional steps have to be performed to test for such properties, and they reduces to straightforward search and counting procedures. The complexity of this search is linear in the size of the permutation.

We summarize our findings in the following theorem.
\begin{t1}\label{thm:cycles}
Let $\varphi$ be a Y-tree weight function and let $\pi$ be a cycle permutation. If $\pi$ does not contain the central vertex and is unbalanced, then
\[
\diste (\pi)=\frac {1}{2}\Disp (\pi)+ \min_{v_{i}\in \supp(\pi)}  \varphi(\cv,v_{i}).
\]
Otherwise,
\[
\diste (\pi)=\frac {1}{2}\Disp (\pi).
\]
\end{t1}

\begin{figure}[t]
\centering
\includegraphics[width=5cm]{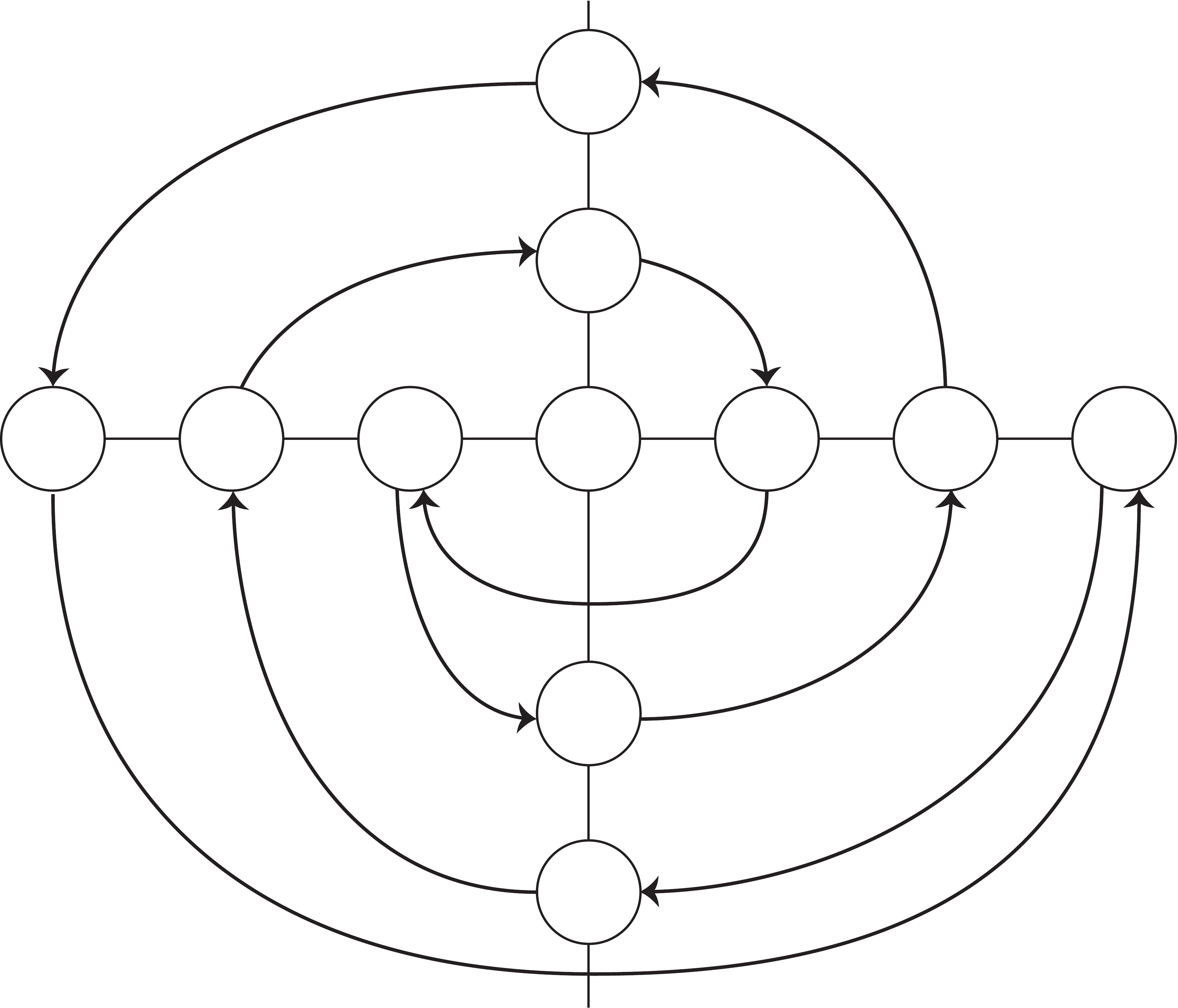}
\caption{Example of a balanced cycle on a star tree with four branches that cannot be optimally decomposed using Algorithm~\ref{alg:btd}. The labels of the vertices have no bearing on the finding, and are hence not included.}
\label{fig:counter_example}
\end{figure}

We conclude this section by noting that it may appear straightforward to extend the results of Algorithms~\ref{alg:ctd},~\ref{alg:btd}, and~\ref{alg:utd} to a more general defining tree model. This, unfortunately, is not the case even when one shifts from Y-trees, in which the unique node with degree larger than two has degree three, to so called \emph{star-trees}, in which the unique node with degree larger than two may have degree larger than three. In particular, Algorithm~\ref{alg:btd} cannot be immediately extended as it relies on the fact that a balanced cycle $\pi=(v_1\dotsc v_m)$ on a defining Y-tree, there exist two distinct indices $t$ and $l$ modulo $m$, such that $v_{t}$ and $v_{l+1}$ belong to the same branch, and $v_{t+1},\dotsc,v_{l}$ belong to a different branch (See proof of Lemma~\ref{lem:central}). An example of a balanced cycle on a star-tree that does not satisfy this property is shown in Fig.~\ref{fig:counter_example}.

\subsection{Computational Complexity of Decomposing Individual Cycles}

Careful examination of the algorithms described in the previous sections reveals that three major computational steps are involved in finding a minimum cost decomposition, including: 1\,) Identifying the type of the cycle; 2\,) conducting an adjacent cycle decomposition; 3\,) solving the individual sub-cycle decomposition problems with supports on paths. From a complexity viewpoint, step 1\,) requires $O(n)$ operations for checking whether the central vertex $\cv$ belongs to the cycle or not. If the central vertex belongs to the cycle, the decomposition calls for Algorithm~\ref{alg:ctd}, which requires $O(n)$ operations. Otherwise, in order to check whether the given cycle is balanced or unbalanced, one has to traverse the cycle to count the number of edges crossing branches and store/compare the values of $l_{ij}$ for all pairs of $i,j \in\{{1,2,3\}}$. This counting procedure requires $O(n)$ operations.

When Algorithm~\ref{alg:ctd} and Algorithm~\ref{alg:btd} are used, we follow the given cycle and at each vertex we check whether the optimal decomposition conditions are met. Each check requires constant computational time, and as Algorithm~\ref{alg:ctd} and Algorithm~\ref{alg:btd} terminate when each vertex in the cycle is visited exactly once and when the path decomposition if performed. To solve multiple path cycle decompositions individually, inductive arguments show that at most $m-2$ operations are needed, where $m$ denotes the length of the cycle. In addition, $\sum_{i=1}^{k}|\supp(\pi_i)|=|\pi|+(k-1)$, where $k$ is the number of cycles supported on paths in an adjacent cycle decomposition of $\pi$. As a result, since the complexity of both focal steps in the algorithm equals $O(n)$, the overall complexity of the methods equals $O(n)$.

Algorithm~\ref{alg:utd} proceeds along the same lines as Algorithm~\ref{alg:ctd}, except for an additional minimization procedure, which requires $O(n)$ operations. As a result, the complexity of this algorithm also equals $O(n)$.

\section{General Permutations}
\label{sec:GP}

Computing the weighted transposition distance between permutations with multiple cycles under the Y-tree weights model is significantly more challenging than computing the same distance between the identity and a single cycle. We currently do not know of any efficient procedure for computing this distance exactly for an arbitrary permutation. Nevertheless, in this section, we describe a straightforward linear-time $4/3$-approximation algorithm.

Let us start by recalling a solution to the decomposition problem when all transposition weights are equal: perform the disjoint cycle decomposition and then sort each cycle independently. However, this independent cycle decomposition strategy does not always produce optimal solutions for general weight functions, as illustrated by the example of the permutation $\pi=(4, 6, 2, 5, 1, 3, 7)$ depicted in Fig.~\ref{mb}. Sorting each cycle of this permutation independently has total cost strictly larger than $\frac12\Disp (\pi)$. Alternatively, $\pi$ may be sorted by first applying the transposition $(1\,3)$, thereby merging the cycles $(1\,4\,5)$ and $(2\,6\,3)$. As the resulting cycle is balanced, it can be subsequently sorted via a sequence of efficient transpositions. Since the transposition $(1\,3)$ is efficient as well, the resulting transform has cost $\diste (\pi)=\frac{1}{2}\Disp (\pi)$. However, even the method of merging cycles may not always be optimal, as may be seen from the example given in Fig.~\ref{mi}.

\begin{figure}
\centering
\mbox{
\subfigure[]{\includegraphics[width=5cm]{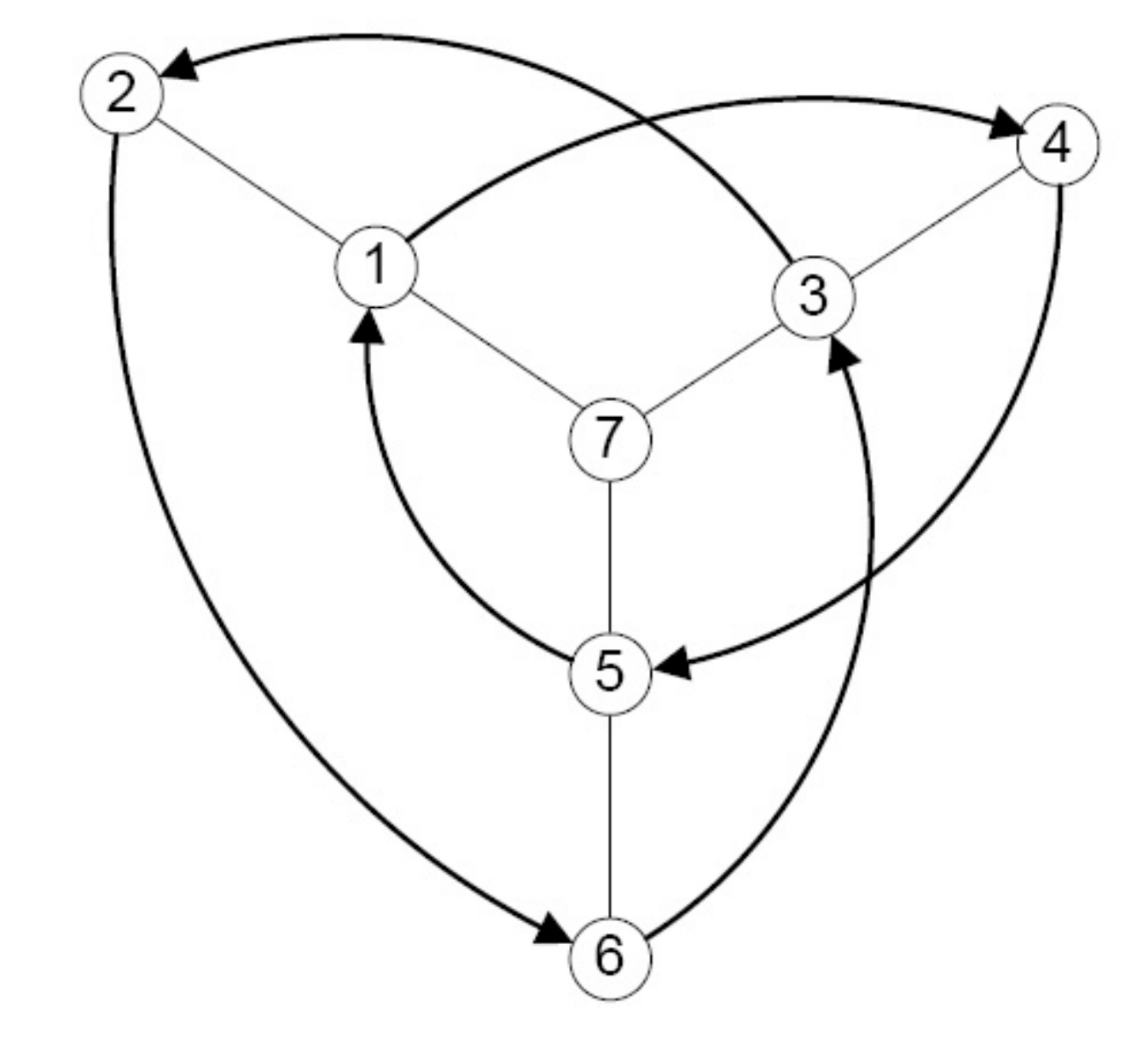}}\quad
\subfigure[]{\includegraphics[width=5cm]{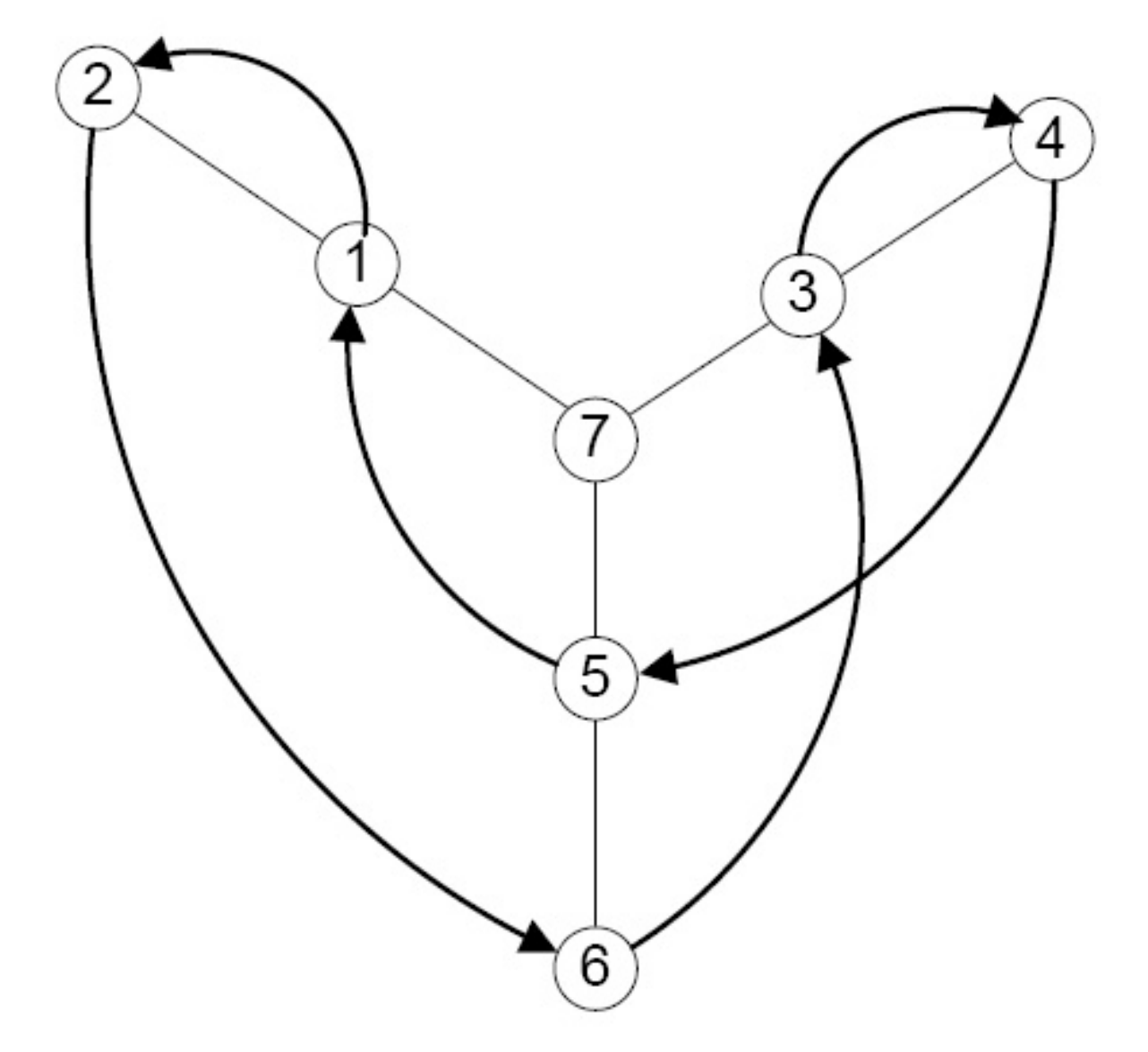}}}
\caption{Merging two cycles creates a balanced cycle: In 1\,) $\pi$ is shown as the product of two cycles (1\,4\,5) and (2\,6\,3); the merged cycle after applying transposition (1\,3) is shown in 2\,).}
\label{mb}
\end{figure}
\begin{figure}
\centering
\includegraphics[width=6.5cm]{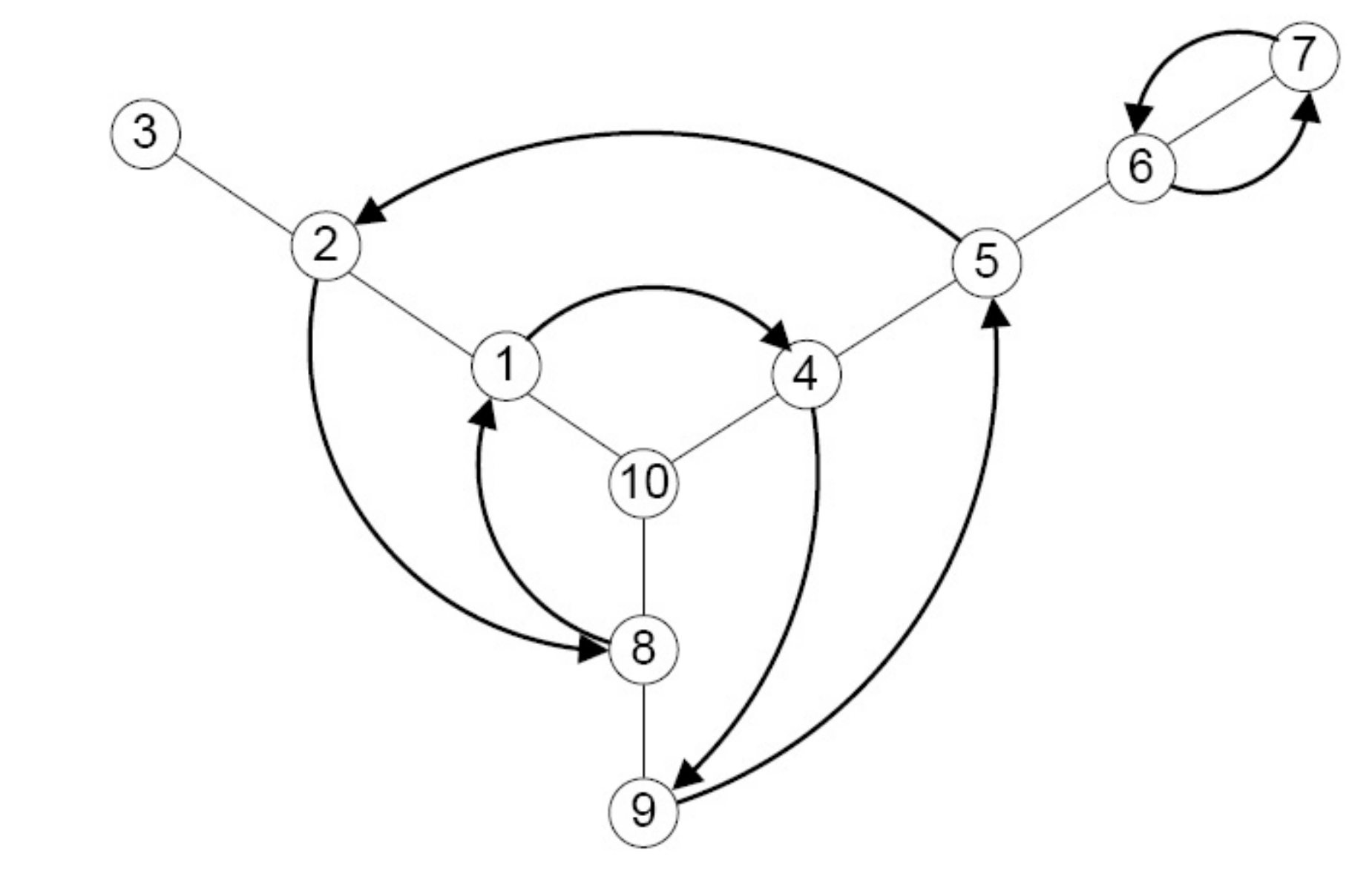}
\caption{An example illustrating that merging two cycles may lead to a suboptimal solution: For the 
permutation $\pi=(4, 8, 3, 9, 2, 7, 6, 1, 5, 10)$, via exhaustive search it can be determined that the minimum decomposition cost equals $\frac{1}{2}D(\pi,e)$ instead of $\frac{1}{2}D(\pi,e)+\varphi{(5,6)}$, which may be obtained via merging cycles. }
\label{mi}
\end{figure}

While decomposing every cycle independently may be in general sub-optimal, the process still provides a $4/3$-approximation to the optimal solution. To see this, we first prove that for any cycle $\kappa$, 
\begin{equation}\label{eq:whatever6}
\diste (\kappa)\le\frac{2}{3}\Disp (\kappa).
\end{equation}

For cycles that lie on a path of the Y-tree, cycles that contain the central vertex, and balanced cycles, this follows from Lemmas~\ref{lem:metric-path},~\ref{lem:central}, and~\ref{lem:balanced}, respectively. For an unbalanced cycle $\kappa$, from Lemma~\ref{lem:unbalanced}, we have
\[
\diste (\kappa)\le\frac{1}{2}\Disp (\kappa)+ \min_{v_{i}\in \supp(\kappa)}  \varphi(\cv,v_{i}).
\]
Hence, to show that $\diste (\kappa)\le\frac{2}{3}\Disp (\kappa)$, it suffices to prove that for an unbalanced cycle $\kappa$, $\min_{v_{i}\in \supp(\kappa)}  \varphi(\cv,v_{i})\le\frac16\Disp (\kappa)$. Let $w_1$, $w_2$, and $w_3,$ given by
\begin{align*}
w_{1} & =\min_{v_{i}\in \supp(\kappa)\cap B_{1}}\varphi(\cv, v_{i})\\
w_{2} & =\min_{v_{i}\in \supp(\kappa)\cap B_{2}}\varphi(\cv, v_{i})\\
w_{3} & =\min_{v_{i}\in \supp(\kappa)\cap B_{3}}\varphi(\cv, v_{i}),
\end{align*}
be the cost of transposing $\cv$ with the closest element to $\cv$ in $\supp(\kappa)$ on each of the three branches. Without loss of generality, assume that $w_1 =\min_{v_{i}\in \supp(\kappa)}  \varphi(\cv,v_{i})$, that is, $w_1\le w_2$ and $w_1\le w_2$. Since $\kappa$ is unbalanced, it must contain arcs between any two pair of branches. Thus, since there are at least three arcs,
\begin{align*}
\Disp (\kappa) & \ge\left(w_{1}+w_{2}\right)+\left(w_{2}+w_{3}\right)+\left(w_{3}+w_{1}\right)\\
 & \ge2\left(w_{1}+w_{2}+w_{3}\right)\\
 & \ge6w_{1}\\
 & = 6\min_{v_{i}\in\supp(\kappa)}\varphi(\cv, v_{i}),
\end{align*}
 which established the desired result. So~\eqref{eq:whatever6} holds for any cycle $\kappa$.

Let $\pi$ be a permutation with cycles $\kappa_1\dotsm\kappa_m$. If we decompose each cycle $\kappa_i$ independently of the other cycles using Algorithms~\ref{alg:ptd},~\ref{alg:ctd},~\ref{alg:btd}, and~\ref{alg:utd}, the total cost equals $\sum_{i=1}^m\delta(\kappa_i)$. This leads to
\[
\sum_{i=1}^m\delta(\kappa_i)\le\frac{2}{3}\sum_{i=1}^m\Disp (\kappa)=\frac23\Disp (\pi)\le\frac43\diste (\pi),
\]
where the first inequality follows from~\eqref{eq:whatever6} and the second inequality follows form Lemma~\ref{lem:metric-path}. Hence, the approximation factor is $4/3$, as claimed.

As a final remark, we would like to point out that the unbalanced cycles may be merged according to their lengths in order to provide  practical improvements to the theoretical approximation bound of $4/3$. The procedure asks for merging the central vertex $\cv$ with the unbalanced cycle of longest length, $m$, by using a vertex in its support closest to $\cv$. Given that there are $m$ arcs,
\begin{align*}
\Disp (\kappa) & \ge (2m) \min_{v_{i}\in\supp(\kappa)}\varphi(\cv, v_{i}).
\end{align*}
Once the central vertex $\cv$ is included in the newly formed cycle, one can merge other unbalanced cycles into the cycle via smallest cost transpositions involving the central vertex. In this case, the approximation constant equals $1+1/m$.

\section{Conclusion}

We introduced the notion of similarity distance between rankings under Y-tree weights and presented a polynomial-time algorithm for computing the distance between cycle permutations in terms of the displacement function. The algorithm was centered around the idea of adjacent cycle decomposition, i.e., rewriting a cycle as a product of adjacent/disjoint shorter cycles, where the support of each cycle can be embedded on a path in the defining graph of the Y-tree.

We also described a linear-time decomposition algorithm for permutations that may be embedded in the Y-tree as non-intersecting cycles, and the procedure reduced to finding the shortest path between two non-intersecting cycles. As for general permutations, we developed a linear time, $4/3$-approximation algorithm which is governed by the fact that if there exists a arc emanating from the central vertex that intersects all cycles across branches, then all cycles across branches can be merged efficiently.

\bibliographystyle{amsplain}
\bibliography{BibYtreeCopy}

\end{document}